\newcolumntype{g}{>{\columncolor{red}}c}
\let\hat\widehat
\let\tilde\widetilde
\def\asto{{\stackrel{\textnormal{a.s.}}{\to}}}
\def \pto {{\stackrel{\textrm{P}}{\to}}}
\def \iid {\stackrel{\textnormal{i.i.d.}}{\sim}}
\def \calib {\textnormal{calib}}
\def \train {\textnormal{train}}
\def \test {\textnormal{test}}
\def \TPP {\tilde{\PP}}
\def \pcausal {\cP}
\def \fdp {\textnormal{FDP}}
\def \fdr {\textnormal{FDR}} 
\def \bh {\textnormal{BH}}
\def \cc {\textnormal{CC}}
\def \homo {\textnormal{homo}}
\def \hete {\textnormal{hete}}
\def \dtm {\textnormal{dtm}}
\def \nw {\textnormal{nw}}
\def \ebh {\textnormal{eBH}}
\definecolor{ForestGreen}{RGB}{34,139,34}
\definecolor{quote}{RGB}{103,148,54}
\def\##1\#{\begin{align}#1\end{align}}
\def\$#1\${\begin{align*}#1\end{align*}}
\newcommand{\revisea}[1]{{\color{black} #1}}
\newcommand{\printfnsymbol}[1]{%
  \textsuperscript{\@fnsymbol{#1}}%
}
\title{Model-free selective inference under covariate shift\\ via weighted conformal p-values} 
\author[1]{Ying Jin} 
\author[1,2]{Emmanuel J. Cand\`es}
\affil[1]{Department of Statistics, Stanford University} 
\affil[2]{Department of Mathematics, Stanford University}
\date{}
\begin{document}

\maketitle

\begin{abstract}

  This paper introduces novel weighted conformal p-values and methods for model-free selective inference. The problem is as follows: given test units with covariates $X$ and missing responses $Y$, how do we select units for which the responses $Y$ are larger than user-specified values while controlling the proportion of false positives? Can we achieve this without any modeling assumptions on the data and without any restriction on the model for predicting the responses? Last, methods should be applicable when there is a covariate shift between training and test data, which commonly occurs in practice. 

  We answer these questions by first leveraging any prediction model to produce a class of well-calibrated weighted conformal p-values, which control the type-I error in detecting a large response. These p-values cannot be passed onto classical multiple testing procedures since they may not obey a well-known positive dependence property. Hence, we introduce weighted conformalized selection (WCS), a new procedure which controls false discovery rate (FDR) in finite samples. Besides prediction-assisted candidate selection, WCS (1) allows to infer multiple individual treatment effects, and (2) extends to outlier detection with inlier distributions shifts. We demonstrate performance via simulations and applications to causal inference, drug discovery, and outlier detection datasets.
\end{abstract}

\section{Introduction}


Many scientific discovery and decision making tasks concern the selection 
of promising candidates from a potentially large pool. In drug discovery, 
scientists aim to find
drug candidates---from a huge 
chemical space of molecules or compounds---with sufficiently 
high binding affinity to a target.  
College admission officers or hiring managers seek applicants with a high potential 
of excellence. In healthcare, one would like to allocate resources to individuals 
who would benefit the most. The list of such examples is endless.  
In all these problems, a decision maker would like to identify those units for which   
an unknown outcome  of interest (affinity for a target molecule/job performance/reduction in stress level) takes on a large value. 

Machine learning has shown tremendous potential for accelerating these 
resource- and time-consuming processes whereby 
predictions for the unknown outcomes are used to 
shortlist promising candidates before any further in-depth investigation~\cite{lavecchia2013virtual,carpenter2018machine,sajjadiani2019using}.  
This is a sharp departure from traditional strategies that either 
directly evaluate the outcomes via a physical screening to determine drug binding 
affinities, or ``predict'' the unknown outcomes via human judgement 
in healthcare or job hiring applications. 
In the new paradigm, a prediction model draws evidence from training data
to inform values the unknown outcome may take on given the features of a unit. 

If the intent is thus to use machine learning to screen candidates, 
we would need to understand and control 
the uncertainty in black-box prediction models as a selection tool. 
First, gathering evidence for new discoveries from past instances 
is challenging as they are not necessarily similar; i.e.~they may be drawn from distinct distributions.
Second, since a subsequent investigation applied to the shortlisted candidates 
is often resource intensive, it is meaningful to limit the \emph{error rate on the selected}. It is not so much the accuracy of the prediction that matters, rather the proportion of selected units not passing the test.  Keeping such a proportion at sufficiently low levels is a non-traditional criterion in typical prediction problems. 

This paper studies the reliable selection of promising units/individuals whose unknown outcomes 
are sufficiently large. We develop methods that apply in a model-free fashion---without making any modeling assumptions on the data generating process while controlling the expected proportion of false discoveries among the selected.

\subsection{Reliable selection under distribution shift}

Formally, suppose test samples $\{(X_{j},Y_{j})\}_{j\in\cD_\test}$ 
are drawn i.i.d.~from an unknown distribution $Q$, whose outcomes $\{Y_j\}_{j\in\cD_\test}$ 
are unobserved. Our goal is to identify a subset $\cR\subseteq \cD_\test$ 
with outcomes above some known, potentially random, 
thresholds $\{c_{ j}\}_{j\in \cD_\test}$. To draw statistical evidence for large outcomes in $\cD_\test$,  
we assume we hold an independent set of 
i.i.d.~calibration data $\{(X_i,Y_i)\}_{i\in \cD_\calib}$ whose outcomes  
are, this time, observed. 
 
Before we begin our discussion, we note that
a challenge that underlies almost all applications is that samples in $\cD_\calib$ 
are often different from those in $\cD_\test$. 
That is, in many cases, a point $(X_{j},Y_{j})$ in $\cD_\test$ is sampled from $Q$ while a point $(X_{i},Y_{i})$ in $\cD_\calib$ is sampled from a different distribution $P$. 
We elaborate on the distribution shift issue by considering two motivating examples. 

\vspace{-0.3em}

\paragraph{Drug discovery.} 
{Virtual screening, which identifies viable drug candidates 
using predicted affinity scores from supervised machine learning models, 
is increasingly popular in early stages of drug discovery~\cite{huang2007drug,koutsoukas2017deep,vamathevan2019applications,dara2021machine}. 
In this application, $\cD_\calib$ is the set of drugs with known affinities, while $\cD_\test$ 
is the set of new drugs whose binding affinities are of interest.}  
In practice, there usually is a distribution shift between the two batches: for instance, an experimenter may favor a 
specific structure 
when choosing candidates to physically screen their true binding affinities~\cite{polak2015early}. Since physically screened drugs are subsequently added to the calibration dataset $\cD_\calib$, they may not be representative of the new drugs in $\cD_\test$.
Accounting for such distribution shifts when applying virtual screening is crucial for 
the reliability of the whole drug discovery process, and 
has attracted recent attention~\cite{Krstajic2021}.

\vspace{-0.3em}

\paragraph{College admission.}
A college may be interested in prospective students who are likely to 
graduate after four years of undergraduate education (a binary outcome of interest) 
among all applicants in $\cD_\test$ as to maintain a reasonable graduation rate. 
The issue is that $\cD_\calib$
usually consists of students who \emph{were} admitted 
in the past as the outcome is only known for these students. 
Since previous cohorts may have been selected by applying various admission criteria, 
the distribution of the training data is different from that of the new applicants 
even if the distribution of applicants does not much vary over the years. 
Similar concerns apply to job hiring when using prediction models to select suitable applicants~\cite{faliagka2012application,shehu2016adaptive}; the way an institution chooses to record the outcomes of former candidates 
can lead to a 
discrepancy between the documented candidates and those under 
consideration.

\vspace{0.5em}

In this paper, 
we propose to address the distribution shift issue 
by considering a scenario where the shift 
can be entirely attributed to observed covariates; this is 
commonly referred to as   
a covariate shift in the literature~\cite{sugiyama2007covariate,tibshirani2019conformal}. 
Concretely, we assume that 
there is 
some function $w\colon \cX\to \RR^+$ such that 
\#\label{eq:cov_shift}
\frac{\ud Q}{\ud P}(x,y) = w(x),\quad  P\textrm{-almost surely},
\#
where $ \ud Q/\ud P $ denotes the Radon-Nikodym derivative. 
While all kinds of distribution shift could happen in practice, 
covariate shift---widely used to characterize  
distribution shifts  
caused by selection on known features---covers many important cases. 
In drug discovery, if the preference for choosing which samples 
to screen only depends upon the covariates, e.g.~the sequence of amino acids of the compound or the physical properties of the compound, we are dealing with the covariate shift \eqref{eq:cov_shift}.
In college admission or job hiring, 
if a specific type of applicant---characterized by certain features---were favored in 
previous admission/hiring cycles, such a preference would 
yield the shift \eqref{eq:cov_shift} between previous students/employees and current applicants. 

Returning to our decision problem,
we are interested in finding as many units $j\in \cD_\test$ as possible 
with $Y_{j}>c_j$, while controlling the false discovery rate (FDR) 
below some pre-specified level $q\in (0,1)$. 
The FDR is the expected fraction  of false discoveries, where a false discovery is a selected unit for which $Y_j \le c_j$, i.e., the outcome turns out to fall below the threshold. 
Formally, we wish to find $\cR$ obeying 
\#\label{eq:fdr}
\fdr := \EE\Bigg[  \frac{\sum_{j\in\cD_\test} \ind\{j\in \cR,Y_{j}\leq c_{ j}\}}{\max\{1,|\cR|\}}  \Bigg] \leq q, 
\#
where $q$ is the nominal target FDR level. The expectation in \eqref{eq:fdr} is taken   
over both the new test samples and the calibration data/screening procedure. 
The FDR quantifies the tradeoff 
between resources dedicated to the shortlisted candidates 
and the benefits from true positives~\cite{benjamini1995controlling}. 
By controlling the FDR, we ensure that a sufficiently large proportion of 
follow-up resources---such as clinical trials 
in drug discovery, human evaluation of student profiles and interviews---to confirm predictions from machine learning models are 
devoted to interesting candidates. 

The problem of FDR-controlled selection with machine learning predictions 
has been studied in an earlier paper~\cite{jin2022selection} 
under the assumption of no distribution shift. 
This means the new instances 
are exchangeable with the known calibration data, which, as we discussed above, 
is rarely the case in practice. 
{Therefore, there is a pressing need for novel techniques 
to address the distribution shift issue.}

\vspace{-0.3em}

\subsection{Preview of theoretical contributions}
 
{Our strategy consists in (1) constructing a calibrated confidence measure (p-value), which applies to any predictive model, for detecting a large unknown outcome, and (2) in employing multiple testing ideas to build the selection set from test p-values. }  

\vspace{-0.5em}
\paragraph{Weighted conformal p-value: a calibrated confidence measure.}
We introduce random variables $\{p_j\}_{j\in \cD_\test}$, which resemble p-values in the sense that for each $j\in \cD_\test$, they obey 
\#\label{eq:general_pvalue}
\PP(p_j \leq t, ~ Y_{j} \leq c_{j} ) \leq t,\quad \textrm{for all}~t\in [0,1],  
\# 
where the probability is over both the calibration data and the test sample 
$(X_{j},Y_{j},c_{j})$. 
This is similar to the definition of a classical p-value, hence, we call $p_j$ a``p-value''. With this, for any $\alpha\in(0,1)$, selecting $j$ if and only if $p_j\leq \alpha$ 
controls the type-I error below $\alpha$ (the chance that $Y_j$ is below threshold is at most $\alpha$).  
Note however that~\eqref{eq:general_pvalue} is perhaps an unconventional notion of 
type-I error, since it accounts for the randomness in the ``null hypothesis'' 
$H_{j}\colon Y_{j}\leq c_{j}$.

We construct $p_j$ using ideas from conformal prediction~\cite{vovk2005algorithmic,tibshirani2019conformal}. 
Take any monotone 
function $V\colon \cX\times\cY\to \RR$ (see Definition \ref{def:monotone}) 
built upon any prediction model that is independent of $\cD_\calib$ and 
$\cD_\test$. A concrete instance is 
$V(x,y) = y - \hat\mu(x)$, where $\hat\mu\colon \cX\to \cY$ 
is any predictor trained on a held-out dataset. 
With scores $V_i = V(X_i,Y_i)$,  $i\in \cD_\calib$, and 
weight function $w(\cdot)$ given by~\eqref{eq:cov_shift}, we construct $p_j$ to be approximately equal to  $\hat{F}(\hat{V}_j)$, 
where  $\hat{F}(\cdot)$ is the cumulative distribution function (c.d.f.) of the empirical distribution 
\[
\frac{ \sum_{i \in \cD_\calib} \,  w(X_i) \, \delta_{V_i} }{  \sum_{i \in \cD_\calib} w(X_i) }.
\] 
While the precise definition of $p_j$ is in  Section~\ref{sec:pval}, Figure~\ref{fig:pvalue} visualizes the idea. 
 
Intuitively, $p_j$ contrasts the value of $\hat{V}_{j}=V(X_{j},c_{j})$ against the distribution of the unknown `oracle' score $V_j=V(X_j,Y_j)$. The latter is 
approximated by $\hat{F}(\cdot)$, which reweights the training data 
with weights reflecting the covariate shift~\eqref{eq:cov_shift}. As such,  
$\hat{F}(V_j)$ is approximately uniformly distributed in $[0,1]$. 
Thus, a notably small value of $p_j\approx \hat{F}(\hat{V}_j)$ relative to the uniform distribution informs that $\hat{V}_j$ is smaller than what is typical of $V_j$, which further 
suggests evidence for $Y_j\geq c_j$. 
Leveraging conformal prediction techniques, such high-level ideas are made exact to achieve~\eqref{eq:general_pvalue}.

\begin{figure}[htbp]
    \centering 
    \begin{subfigure}[t]{0.45\linewidth}
        \centering
        \includegraphics[width=0.95\linewidth]{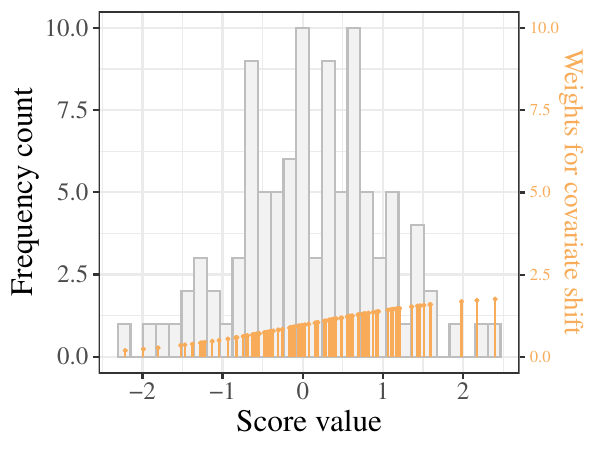} 
    \end{subfigure} \hspace{0.5em}
    \begin{subfigure}[t]{0.45\linewidth}
        \centering
        \includegraphics[width=0.95\linewidth]{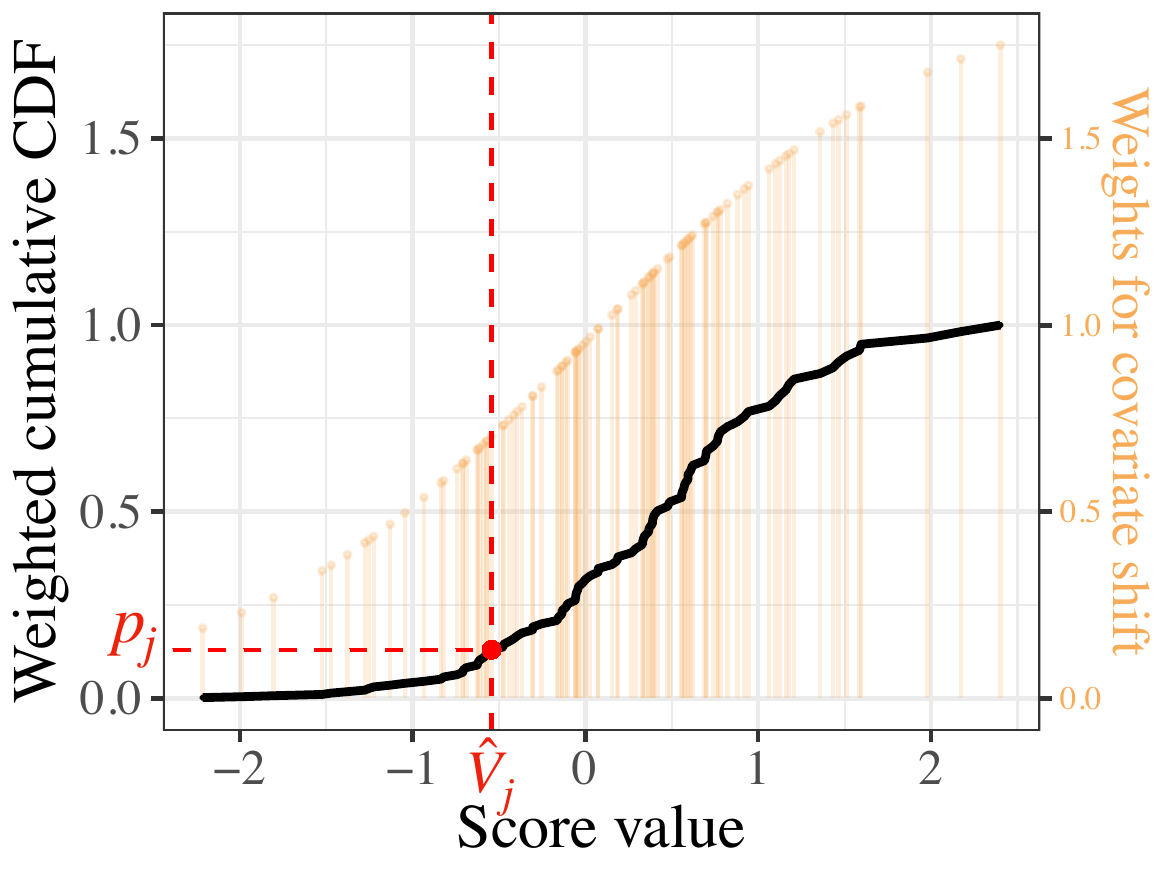} 
    \end{subfigure} 
\caption{Visualization of weighted conformal p-values.
Left: Distribution of scores in the calibration set (gray) and individual values of $w(X_i)$ (orange) for $i\in\cD_\calib$. Right: Weighted empirical c.d.f.~of $\{V_i\}_{i\in \cD_{\calib}}$; the red dashed lines illustrate how $p_j$ is computed from $\hat{V}_j$, in which $j\in \cD_\test$.  
}
\label{fig:pvalue}
\end{figure}

\vspace{-0.5em}
\paragraph{Intricate dependence among weighted conformal p-values.}

Going beyond per-unit type-I error control, 
selecting from multiple test samples 
requires dealing with multiple p-values $\{p_j\}_{j\in \cD_\test}$. 
A natural idea is then to use multiple testing ideas as if 
$\{p_j\}$ were classical p-values. 
However, the situation is complicated here because  $\{p_j\}_{j\in \cD_\test}$ 
all depend on the same set of calibration data, 
and, in sharp contrast to classical multiple testing, 
there is another source of randomness in the unknown outcomes. 

Our second theoretical result shows the mutual dependence among 
multiple p-values is particularly challenging. 
It states that   
the favorable positive dependence structure 
among p-values, which was the key to FDR control 
with other 
p-values derived with conformal prediction ideas~\cite{bates2021testing,jin2022selection}, 
can be violated in the presence of data-dependent weights in \eqref{eq:def_wcpval_rand}. 
As a result, it remains unclear whether applying existing multiple 
testing procedures controls the FDR in finite samples. 

\vspace{-0.5em}
\paragraph{Weighted Conformalized Selection (WCS).} 
Last, we introduce WCS, 
a new procedure 
that achieves finite-sample FDR control under distribution shift. 
The idea is to calibrate the selection threshold for each $j\in \cD_\test$ 
using a set of ``auxiliary'' p-values $\{\bar{p}_\ell^{(j)}\}_{\ell\in \cD_\test}$. 
These p-values  happen to obey 
a special conditional independence property, namely,  
\$
\bar p_j^{(j)}  \indep \{\bar p_\ell^{(j)}\}_{\ell\neq j} \biggiven \cZ_j,
\$
where $\cZ_j$ is the unordered set of the calibration data 
and the $j$-th test sample. 
We show that FDR control is valid regardless 
of the machine learning model used in constructing $V(\cdot,\cdot)$, 
and applies to a wide range of scenarios where the cutoffs $c_{j}$ can themselves be random variables.

\subsection{Broader scope}

Our methods happen to be applicable to additional statistical inference problems. 
 
\vspace{-0.3em}

\paragraph{Multiple counterfactual inference.}
The individual treatment effect (ITE) is a random variable 
that characterizes the difference between an individual’s outcome 
when receiving a treatment versus not, whose inference relies on comparing the observed outcome 
with the counterfactual~\cite{imbens2015causal,lei2021conformal,jin2023sensitivity}. To infer the counterfactual of a unit under treatment, 
the calibration data are units who do not receive the treatment.  
If the treatment assignment depends on the covariates, 
then there may be, and usually will be, a covariate shift between treated and control units. 
We discuss this problem in Section~\ref{sec:ite}.

\vspace{-0.3em}
\paragraph{Outlier detection.} Imagine we have access to i.i.d.~inliers $\{Z_i\}_{i\in \cD_\calib}$
from an unknown distribution $P$. Then outlier detection~\cite{bates2021testing} 
seeks to find outliers among new samples $\{Z_j\}_{j\in \cD_\test}$, i.e., those which 
do not follow the inlier distribution $P$. 
A potential application is fraud detection in financial transactions, 
where controlling the FDR ensures reasonable resource allocation  
in follow-up inquiries.  
However,  financial behavior varies with users and it is possible that normal transactions of interest follow different distributions across different populations. 
A related problem is to detect 
concept drifts; that is, allowing a subset of features $X\subseteq Z$ 
to follow a distinct distribution, and only detecting test samples 
whose $Z\given X$ distribution differs from $P_{Z\given X}$. We discuss this problem in Section~\ref{sec:outlier}.  

As in~\eqref{eq:general_pvalue}, 
the FDR~\eqref{eq:fdr} is marginalized over the randomness in the hypotheses. 
To this end, WCS requires  all 
$\{X_{n+j},Y_{n+j}\}_{j=1}^m$ to be i.i.d.~draws from a super-population. 
To relax this, we  generalize our methods to 
control a notion of FDR that is conditional on the random hypotheses. 
This is useful for handling imbalanced data in classification, and  
the aforementioned outlier detection problem.

\subsection{Data and code}
 
An R package, \texttt{ConfSelect}, that implements Weighted Conformalized Selection (as well as the version without weights) together with code to reproduce all experiments in this paper, can be found in the GitHub repository \href{https://github.com/ying531/conformal-selection}{https://github.com/ying531/conformal-selection}.

\subsection{Related work}

Our methods build upon the conformal inference framework~\cite{vovk2005algorithmic} {to infer unknown outcomes}. There, 
the theoretical guarantee for conformal prediction sets 
is usually marginally valid over the randomness in 
a single test point. 
As discussed in~\cite{jin2022selection},  
one might be interested in multiple test samples
simultaneously; in such situations, standard conformal prediction tools 
are insufficient.   

This work is connected to a recent line of research on 
selective inference issues arising in predictive inference. 
Some works use  
conformal p-values (whose definition differs from ours) 
with `full' observations $\{(X_{j},Y_j)\}_{j\in \cD_\test}$---that is, the response is observed---for outlier detection, i.e., for identifying whether $(X_{j},Y_j)$ 
follows the same distribution as the calibration data 
$\{(X_{i},Y_i)\}_{i\in \cD_\calib}$~\cite{bates2021testing,liang2022integrative,marandon2022machine}. 
Whereas these methods do not consider distribution shift between 
calibration and test inliers, 
Section~\ref{sec:outlier} extends all this. 
Here, the response $Y_{n+j}$---the object of inference---is not observed and, therefore, the problem is completely different. Consequently, methods and techniques to deal with this are new.  
Our focus on FDR control is similar in spirit to other recent papers; for instance, 
rather than marginal coverage of prediction sets,~\cite{fisch2022conformal} 
studies the number of false positives, 
and~\cite{bao2023selective} the 
miscoverage rate on selected units. 

Our methodology draws ideas from the multiple hypothesis testing literature, 
where the FDR is 
a popular type-I error criterion in exploratory analysis for controlling the 
proportion of `false leads' in follow-up studies~\cite{benjamini1995controlling,benjamini2001control}.  
This  paper is distinct from the conventional setting. 
First, testing a random outcome instead of a model parameter 
leads to a random null set and a complicated dependence structure. 
Second, our inference relies on (weighted)
exchangeability of the data while imposing no assumptions on their distributions. In contrast,  
a null hypothesis often 
specifies the distribution of test statistics or p-values. Later on, we shall also draw connections to a recent literature dealing with dependence in multiple testing, 
such as that concerned with e-values~\cite{vovk2021values,wang2020false} 
and conditional calibration of the BH procedure~\cite{fithian2020conditional}.  

The application of our method is related 
to a substantial literature that emphasizes 
the importance of uncertainty quantification in drug discovery~\cite{Norinder2014,svensson2017improving,Ahlberg2017current,svensson2018conformal,cortes2019concepts,wang2022improving}. 
Although existing works aim to employ conformal inference 
to assign confidence to machine prediction, 
this is often done  
in heuristic ways, which 
potentially invalidates the guarantees of conformal prediction due to selection issues. 
Instead, 
we provide here a principled approach to 
prioritizing drug candidates 
with clear and interpretable error control (i.e., guaranteeing a 
sufficiently high `hit rate'~\cite{wang2022improving}), 
and offer a potential solution to  
the widely recognized distribution shift problem~\cite{Krstajic2021,fannjiang2023novelty}.

The application to individual treatment effects in Section~\ref{sec:ite}
is connected to~\cite{caughey2021randomization};
the distinction is that they focus on a summary statistic such as 
a population quantile of the ITEs, 
while our method {detects individuals with positive ITEs.} 
Also related is~\cite{duan2021interactive},  
which tests individuals for positive treatment effects
with FDR control; however, in their setting a positive ITE 
means the whole distribution  of 
the control outcome is dominated by that of the treated outcome, 
whereas we
directly compare two random variables. 
Accordingly, our techniques are very different from these two references.

\section{Weighted conformal p-values}
\label{sec:pval}

\subsection{Construction of weighted conformal p-values}
 
Our weighted conformal p-values build upon 
any \emph{monotone} score function, defined as follows. 

\begin{definition}[Monotonicity]
    \label{def:monotone}
A score function $V(\cdot,\cdot)\colon \cX\times\cY\to \RR$ is monotone if 
$V(x,y)\leq V(x,y')$ holds for any $x\in \cX$ and any $y,y'\in \cY$ obeying $y\leq y'$. 
\end{definition}

Intuitively, the score function (often referred to as nonconformity score 
in conformal prediction~\cite{vovk2005algorithmic}) 
describes how well a hypothesized value $y\in \cY$ 
\emph{conforms} to the machine prediction. 
A popular and monotone nonconformity score 
is $V(x,y) = y-\hat\mu(x)$, where $\hat\mu\colon\cX\to \RR$ is 
a point prediction; other choices include those based on 
quantile regression~\cite{romano2019conformalized} or estimates of the conditional 
c.d.f.~\cite{chernozhukov2021distributional}. 

{From now on, write $\cD_\calib = \{1,\dots,n\}$ and $\cD_\test=\{n+1,\dots,n+m\}$. 
Compute $V_i=V(X_i,Y_i)$ for $i=1,\dots,n$  and 
$\hat{V}_{n+j}=V(X_{n+j},c_{n+j})$ for $j=1,\dots,m$. 
Our weighted conformal p-values are defined as  
\#\label{eq:def_wcpval_rand}
p_j = \frac{\sum_{i=1}^n w(X_i)\ind {\{V_i <\hat{V}_{n+j} \}}+  ( w(X_{n+j}) + \sum_{i=1}^n w(X_i)\ind {\{V_i = \hat{V}_{n+j} \}})\cdot U_j}{\sum_{i=1}^n w(X_i) + w(X_{n+j})},
\#
where $w(\cdot)$ is the covariate shift function in~\eqref{eq:cov_shift}, 
and  $U_j\iid \textrm{Unif}([0,1])$ are tie-breaking random variables. 
When $w(\cdot)\equiv 1$, our p-values reduce to the conformal p-values in~\cite{jin2022selection}, see Appendix~\ref{app:recap_unweighted}. 
We also refer readers to Appendix~\ref{app:connection_conformal} for 
the connection between 
our p-values and conformal prediction intervals.}

{Roughly speaking, with a monotone score, 
the p-value~\eqref{eq:def_wcpval_rand} 
measures how small $c_{n+j}$ is compared with 
typical values of $Y_{n+j}$, and provides calibrated 
evidence for a large outcome as expressed in~\eqref{eq:general_pvalue}. 
This can be derived using conformal inference theory~\cite{tibshirani2019conformal} and the monotocity of $V$. We include a formal result here for completeness, whose proof is 
in Appendix~\ref{app:lem_general_pval}.} 

\begin{lemma}\label{lem:general_pval}
The tail inequality~\eqref{eq:general_pvalue} holds 
if the covariate shift~\eqref{eq:cov_shift} holds and the score function is monotone.
\end{lemma}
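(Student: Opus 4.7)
The plan is to introduce the unobserved \emph{oracle} weighted conformal p-value
\[
p_j^\star = \frac{\sum_{i=1}^n w(X_i)\ind\{V_i<V_{n+j}\} + \bigl(w(X_{n+j}) + \sum_{i=1}^n w(X_i)\ind\{V_i=V_{n+j}\}\bigr)U_j}{\sum_{i=1}^n w(X_i) + w(X_{n+j})},
\]
obtained by swapping $\hat V_{n+j}=V(X_{n+j},c_{n+j})$ in \eqref{eq:def_wcpval_rand} for the unobserved oracle score $V_{n+j}=V(X_{n+j},Y_{n+j})$. I would then establish two facts in sequence: (a) $p_j^\star$ is exactly distributed as $\mathrm{Unif}([0,1])$; and (b) on the event $\{Y_{n+j}\le c_{n+j}\}$ one has $p_j^\star \le p_j$. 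Combining (a) and (b) gives
\[
\PP(p_j\le t,\,Y_{n+j}\le c_{n+j}) \le \PP(p_j^\star\le t) = t,
\]
which is precisely~\eqref{eq:general_pvalue}.

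For (a), I would invoke weighted exchangeability in the spirit of~\cite{tibshirani2019conformal}. Relabel the $n+1$ pairs as $\tilde Z_1,\dots,\tilde Z_{n+1}$ (with $\tilde Z_{n+1}=(X_{n+j},Y_{n+j})$) and set $S = \sum_{k=1}^{n+1} w(\tilde X_k)$. Under~\eqref{eq:cov_shift}, the joint density of the ordered sample is proportional to $w(\tilde x_{n+1})\prod_k p(\tilde z_k)$, so a symmetry/change-of-label computation shows that, conditional on the unordered multiset $E = \{\tilde Z_1,\dots,\tilde Z_{n+1}\}$, the identity of the test point equals $k$ with probability $w(\tilde X_k)/S$. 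Grouping the scores $\tilde V_k = V(\tilde Z_k)$ by their distinct values $v^{(1)}<\cdots<v^{(K)}$, with group weights $B_g$ and cumulative weights $F_g=B_1+\cdots+B_g$, the test point falls in group $g$ with conditional probability $B_g/S$, and in that case $p_j^\star = (F_{g-1} + B_g U_j)/S$. Since $U_j$ is independent of $E$, direct integration gives
\[
\PP(p_j^\star \le t \mid E) = \sum_{g=1}^K \frac{\min\{\max\{tS-F_{g-1},0\},\,B_g\}}{S} = t,
\]
so $p_j^\star$ is uniform on $[0,1]$ marginally as well.

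For (b), monotonicity of $V$ in its second argument implies $V_{n+j}\le \hat V_{n+j}$ on $\{Y_{n+j}\le c_{n+j}\}$. Rewriting the numerator of a generic weighted p-value at score $v$ as
\[
(1-U_j)\sum_{i=1}^n w(X_i)\ind\{V_i<v\} + U_j\sum_{i=1}^n w(X_i)\ind\{V_i\le v\} + w(X_{n+j})U_j,
\]
both indicator sums are non-decreasing in $v$, so replacing $v=V_{n+j}$ by $v=\hat V_{n+j}$ can only increase the numerator. This gives $p_j^\star \le p_j$ pointwise on the event of interest.

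The main obstacle is step (a), where one must track simultaneously (i) the random position of the test point within the weighted-exchangeable pool and (ii) the tie-breaking uniform $U_j$. The clean resolution is to condition on $E$ so that the score of the test point becomes a weighted categorical on the tied groups; then uniformity of $p_j^\star$ reduces to observing that a mixture over $g$ of uniforms on $(F_{g-1}/S,F_g/S]$ with mixture weights $B_g/S$ collapses telescopically to $\mathrm{Unif}([0,1])$, an elementary interval-additivity computation. Step (b) is then pure monotonicity, and the conclusion follows by chaining the two.
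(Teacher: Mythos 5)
Your proposal follows essentially the same argument as the paper's proof: both introduce the oracle p-value $p_j^\star$ obtained by replacing $\hat V_{n+j}$ with $V_{n+j}$, both use monotonicity of $V$ to get $p_j^\star \le p_j$ on $\{Y_{n+j}\le c_{n+j}\}$, and both invoke weighted exchangeability conditional on the unordered multiset to conclude $\PP(p_j^\star\le t)=t$. The only difference is that you spell out the telescoping sum over tied-score groups that yields exact uniformity, whereas the paper states this step more tersely; the content is the same.
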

Consequently, selecting a unit 
if $p_j \le \alpha$ controls the type-I error at level $\alpha$ for each $j\in \cD_\test$.  
We shall however see that dealing with multiple test samples 
is far more challenging.

\subsection{Weighted conformal p-values are not PRDS}
\label{subsec:no_PRDS}

Applying multiple testing procedures 
to our p-values to obtain a selection set naturally comes to mind. Previous works 
applying the  Benjamini-Hochberg (BH) procedure~\cite{benjamini1995controlling} to p-values obtained via conformal inference ideas indeed control the FDR~\cite{bates2021testing,jin2022selection,marandon2022machine}. This is because the p-values obey a favorable dependence structure called 
\emph{positive dependence on a subset} (PRDS).

\begin{definition}[PRDS]
\label{def:prds}
A random vector $X=(X_1,\dots,X_m)$ is PRDS 
on a subset $\cI$ if for any $i\in \cI$ 
and any increasing set $D$, 
the probability $\PP(X\in D\given X_i=x)$ is increasing in $x$. 
\end{definition}
Above, a set $D\subseteq \RR^m$ is 
\emph{increasing} if $a\in D$ and $b\succeq a$ implies $b\in D$ ($b \succeq a$ means that all the components of $b$ are larger than or equal to those of $a$). It is well-known that the BH procedure 
controls the FDR when applied to PRDS p-values~\cite{benjamini2001control}.  
The PRDS property among (unweighted) conformal p-values was first studied in
\cite{bates2021testing},\footnote{They also show the non-randomized unweighted conformal 
p-values are PRDS when the scores are almost surely distinct. 
We do not elaborate on this distinction as it is not the focus here.} 
and generalized in  
\cite{jin2022selection} 
to plug-in values $c_{n+j}$ in lieu of $Y_{n+j}$, thereby  
forming the basis for FDR control. 
If the PRDS property were to hold for 
the weighted conformal p-values, then applying the BH procedure 
would also control the FDR. This is however not always the case; 
a constructive proof of Proposition~\ref{prop:counter_PRDS} is 
in Appendix~\ref{app:subsec_prds}. 

\begin{prop}\label{prop:counter_PRDS}
There exist a sample size $n$, a weight function $w\colon \cX\to \RR^+$, 
a nonconformity score $V\colon \cX\times\cY\to \RR$, 
and  distributions $P$ and $Q$ obeying~\eqref{eq:cov_shift}, 
such that 
the weighted conformal p-values~\eqref{eq:def_wcpval_rand} are not PRDS for 
$\{X_i,Y_i\}_{i=1}^n\iid P$ and $\{X_{n+j},Y_{n+j}\}_{j=1}^m\iid Q$. 
\end{prop} 

A little more concretely, we find that the PRDS property 
is likely to fail when 
the nonconformity scores are negatively 
correlated with the 
weights.  
Note that the dependence among conformal p-values 
arises from sharing the calibration data. 
Under exchangeability (i.e.~taking equal weightes $w(x)\equiv 1$), a smaller conformal p-value 
is (only) associated with larger calibration scores, 
and hence smaller p-values for other test points. 
However, with negatively associated  weights and scores, 
a smaller p-value may also be due to 
smaller calibration weights---instead of larger calibration scores---which 
implies that other p-values may take on larger values.

\subsection{BH procedure controls FDR asymptotically}

Before introducing our new selection methods, 
we nevertheless show {\em asymptotic} FDR control using the BH procedure.   

\begin{theorem}\label{thm:fdr_asymp}
Suppose $w(\cdot)$ is uniformly bounded by a fixed constant, the covariate shift~\eqref{eq:cov_shift} holds, 
and $\{c_{n+j}\}_{j=1}^m$ are i.i.d.~random variables. 
Fix any $q\in(0,1)$, 
and let $\cR$ be the output of the BH procedure applied to 
$\{p_j\}_{j=1}^m$ in~\eqref{eq:def_wcpval_rand}. 
Then the following  results hold. 
\begin{enumerate}[(i)]
\item For any fixed $m$, 
it holds   that $\limsup_{n\to \infty} \fdr \leq q$.
\item Suppose $m,n\to \infty$. Under a mild technical condition (see Appendix~\ref{app:fdr_asymp}), $\limsup_{m,n\to\infty} \fdr \leq q$. Furthermore, the asymptotic FDR and power of BH  can be exactly characterized in this case.  
\end{enumerate} 
\end{theorem}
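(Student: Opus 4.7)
The plan is to reduce both claims to classical FDR control for BH on independent super-uniform p-values, by showing the weighted conformal p-values in \eqref{eq:def_wcpval_rand} converge to oracle p-values with those two properties. The unifying technical input is a weighted Glivenko--Cantelli step: since $w(\cdot)$ is uniformly bounded and the covariate shift \eqref{eq:cov_shift} gives $\EE_P[w(X)\ind{V(X,Y)\leq t}]=F_Q(t)$ along with $\EE_P[w(X)]=1$, a uniform strong law of large numbers yields
\$
\hat F_n(t) := \frac{\sum_{i=1}^n w(X_i)\ind{V_i\leq t}}{\sum_{i=1}^n w(X_i)} \xrightarrow[n\to\infty]{\textnormal{a.s.}} F_Q(t)\quad\text{uniformly in }t,
\$
where $F_Q$ is the c.d.f.\ of $V(X,Y)$ under $Q$. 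The mismatch between $\sum_i w(X_i)+w(X_{n+j})$ and $\sum_i w(X_i)$ in the denominator of \eqref{eq:def_wcpval_rand}, and the randomized tie-breaking term, each contribute $O(1/n)$, so $p_j \to p_j^\infty := F_Q(\hat V_{n+j})$ almost surely.

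For part (i), fix $m$. The limits $p_1^\infty,\ldots,p_m^\infty$ are mutually independent because $F_Q$ is deterministic and the test triples $(X_{n+j},Y_{n+j},c_{n+j})$ are i.i.d. Monotonicity of $V$ on the null event $Y_{n+j}\leq c_{n+j}$ yields $\hat V_{n+j}\geq V_{n+j}$, so $p_j^\infty \geq F_Q(V_{n+j})$; since $V_{n+j}\sim F_Q$ under $Q$, each $p_j^\infty$ is super-uniform under its null. The classical Benjamini--Hochberg theorem for independent p-values then gives $\fdr^\infty\leq q$. To pass the limit through the FDR, note that the BH decision set is piecewise constant with jumps on the Lebesgue-null set $\bigcup_{j,k}\{p_j = qk/m\}$; the $U_j$-randomization ensures the limiting p-values miss this set almost surely, so bounded convergence closes the argument.

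For part (ii), I would switch to an empirical-process approach. Write the BH threshold as $\hat t_{\mathrm{BH}} = \sup\{t\in[0,1]: G_m(t)\geq t/q\}$ with $G_m(t) = \frac{1}{m}\sum_{j=1}^m \ind{p_j\leq t}$. Combining the weighted Glivenko--Cantelli bound above with a uniform law of large numbers over the i.i.d.\ test triples yields $G_m\to G^\ast$ uniformly, where $G^\ast(t) := \PP_Q(F_Q(V(X,c))\leq t)$. The mild technical condition---most plausibly a transversality/margin condition ensuring $qG^\ast$ meets the diagonal non-tangentially at some $t^\ast\in(0,1]$---then forces $\hat t_{\mathrm{BH}}\to t^\ast$ in probability. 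Decomposing $G^\ast = G_0^\ast + G_1^\ast$ into null and alternative contributions and passing $\fdr$ to its plug-in limit gives $\limsup\fdr\leq G_0^\ast(t^\ast)/G^\ast(t^\ast)$; null super-uniformity yields $G_0^\ast(t^\ast)\leq \pi_0 t^\ast$, and the crossing identity $qG^\ast(t^\ast)=t^\ast$ delivers $\limsup\fdr\leq \pi_0 q \leq q$. The exact asymptotic FDR and power are these same plug-in quantities at $t^\ast$.

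The main obstacle is that $\{p_j\}_{j=1}^m$ all share the same calibration data and are therefore neither independent nor exchangeable---precisely why no finite-sample argument is available here. The analysis must separately track (a) the $O_P(n^{-1/2})$ uniform fluctuation of $\hat F_n - F_Q$, and (b) the fluctuation of the p-value empirical process built on top of it, then show both propagate benignly through the argmax defining $\hat t_{\mathrm{BH}}$. The mild technical condition is precisely what is needed to keep this propagation stable, so that the limiting FDR can be read off as the plug-in expression.
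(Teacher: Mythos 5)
Your plan---uniform weighted Glivenko--Cantelli, identify limiting p-values, classical BH for the limit in case (i), empirical-process control of the BH threshold in case (ii)---is the same route the paper takes, so it is fundamentally sound. Three technical slips are worth correcting.

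First, you drop the tie-breaking randomization in the limit, setting $p_j^\infty := F_Q(\hat V_{n+j})$ and claiming the tie term is $O(1/n)$. That fails when the distribution of $V(X,Y)$ under $Q$ has atoms: $\frac{1}{n}\sum_{i}w(X_i)\ind\{V_i=\hat V_{n+j}\}$ converges almost surely to $Q(V(X,Y)=\hat V_{n+j})$, which is strictly positive at an atom, so the $U_j$-part survives. The paper's limit is $F(\hat V_{n+j},U_j)=Q(V<\hat V_{n+j})+U_j\cdot Q(V=\hat V_{n+j})$. Keeping $U_j$ is not cosmetic: it is exactly what makes the limiting p-values atomless, a fact you yourself invoke when arguing that the BH decision set stabilizes because the boundary $\{p_j^\infty=qk/m\}$ is Lebesgue-null.

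Second, in case (i) you invoke the ``classical Benjamini--Hochberg theorem for independent p-values.'' That theorem assumes a fixed null set and conditional super-uniformity, whereas here $\cH_0=\{j:Y_{n+j}\leq c_{n+j}\}$ is random and what you actually have is the joint property $\PP(p_j^\infty\leq t,\; j\in\cH_0)\leq t$. The paper therefore re-derives the FDR bound for the limiting p-values inline, conditioning on $\cR^\infty_{j\to 0}$, using independence of $(p_j^\infty,Y_{n+j},c_{n+j})$ from the rest, and the chain $\PP(F(V(X,C),U)\leq t,\;Y\leq C)\leq\PP(F(V(X,Y),U)\leq t)=t$. You can instead cite the random-null BH result of \cite{jin2022selection}, but that step must be made explicit rather than attributed to the classical theorem.

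Third, in case (ii) you assert $G_0^\ast(t^\ast)\leq\pi_0 t^\ast$ and conclude $\limsup\fdr\leq\pi_0 q$. The $\pi_0$ factor is not available: what holds is $G_0^\ast(t)=\PP(F(V(X,C),U)\leq t,\;Y\leq C)\leq\PP(F(V(X,Y),U)\leq t)=t$, with no conditional factorization, because $\ind\{Y\leq C\}$ and $F(V(X,Y),U)$ need not be independent. This is why the paper's stated asymptotic FDR is $Q\{F(V(X,C),U)\leq t^\ast,\,Y\leq C\}/Q\{F(V(X,C),U)\leq t^\ast\}$, not $\pi_0 q$. Your $\leq q$ conclusion still holds via the weaker bound $G_0^\ast(t^\ast)\leq t^\ast$; the $\pi_0$ step should be dropped.
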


We provide the complete technical version of Theorem~\ref{thm:fdr_asymp} 
in Theorem~\ref{thm:fdr_asymp_full}, whose 
proof is in Appendix~\ref{app:thm_fdr_asymp}.
In fact, we prove that as $n\to \infty$, the weighted conformal p-values 
converge to i.i.d.~random variables that are dominated by the uniform distribution, 
which ensures asymptotic FDR control (see Appendix~\ref{app:thm_fdr_asymp} for details). 
The technical condition  we impose resembles that in~\cite{storey2004strong}, 
which ensures the existence of a limit point of the rejection threshold and enables the asymptotic analysis. 

We also  remark that 
the PRDS property is a sufficient, but not necessary, condition for 
the BH procedure to control the FDR.  
In fact, we will see  that 
the BH procedure with weighted conformal p-values  
empirically controls 
the FDR in all of our numerical experiments, hence it 
remains a reasonable option in practice.

\section{Finite-sample FDR control}
\label{sec:method}

We now introduce a new 
multiple testing procedure, WCS, that 
controls the FDR in finite samples with weighted conformal p-values. 
Our method builds on the following p-values:  
\#\label{eq:weighted_pval}
{p}_j = \frac{\sum_{i=1}^n w(X_i)\ind{\{V_i < \hat{V}_{n+j} \}}+  w(X_{n+j}) }{\sum_{i=1}^n w(X_i) + w(X_{n+j})}.
\#
It is a slight modification of~\eqref{eq:def_wcpval_rand} up to 
random tie-breaking. The asymptotic results in 
Theorem~\ref{thm:fdr_asymp} also hold with these  
p-values as long as the distributions of the $V_i$'s and 
$\hat{V}_{n+j}$'s do not have point masses. 

\subsection{Weighted Conformalized Selection}
 
As before, we compute $V_i=V(X_i,Y_i)$ for $i=1,\dots,n$ 
and $\hat{V}_{n+j} = V(X_{n+j},c_{n+j})$ for $j=1,\dots,m$. 
For each $j$, we compute a set of auxiliary p-values 
\#\label{eq:mod_pval}
 {p}_{\ell}^{(j)} := \frac{\sum_{i=1}^n w(X_i) \ind {\{V_i < \hat{V}_{n+\ell}\}} + w(X_{n+j}) \ind {\{ \hat{V}_{n+j}<  \hat{V}_{n+\ell}\}} }{\sum_{i=1}^n w(X_i) + w(X_{n+j})}, \quad \ell\neq j.
\# 
Then, we let $\hat{\cR}_{j\to 0}$ be the rejection set 
of BH applied to 
$\{  p_1^{(j)},\dots,  p_{j-1}^{(j)}, 0,   p_{j+1}^{(j)},\dots,  p_{m}^{(j)}\}$ at the nominal level $q$. Note that $j\in \hat{\cR}_{j\to 0}$ by default, and  
set $s_j = \frac{q|\hat{\cR}_{j\to 0}  | }{m}$.  
We then compute the first-step rejection set $\cR^{(1)}:=\{j \colon   p_j \leq s_j\}$. 
Finally, we prune $\cR^{(1)}$ 
to obtain the final selection set $\cR$ 
using any of the following three methods. 
\begin{enumerate}[(a)]
\item \emph{Heterogeneous pruning}: 
generate i.i.d.~random variables $\xi_j \sim \textrm{Unif}[0,1]$, 
and set  
\#\label{eq:R_hete}
\cR := \cR_{\hete} = \Big\{ j\colon  p_j \leq s_j, ~ \xi_j |\hat\cR_{j\to 0} | \leq  r_{\hete}^*  \Big\},
\#
where 
$
r_\hete^* := \max\big\{ r\colon  \sum_{j=1}^m \ind {\{  p_j\leq s_j, \, \xi_j   |\hat\cR_{j\to 0}  | \leq r \}}   \geq r \big\}.
$
\item \emph{Homogeneous pruning}:  
generate an independent $\xi\sim \textrm{Unif}[0,1]$, 
and set 
\#\label{eq:R_homo}
\cR :=  \cR_{\homo} = \Big\{ j\colon  p_j \leq s_j, ~ \xi  |\hat\cR_{j\to 0} | \leq  r_\homo^*  \Big\},
\#
where  
$
r_{\homo}^* := \max\big\{ r\colon  \sum_{j=1}^m \ind {\{ p_j\leq s_j, \, \xi   |\hat\cR_{j\to 0}  | \leq r \}}   \geq r \big\}.
$
\item \emph{Deterministic pruning}: define the rejection set 
\#\label{eq:R_dete}
\cR := \cR_{\dtm} = \Big\{ j\colon p_j \leq s_j, ~  |\hat\cR_{j\to 0} | \leq  r_\dtm^*  \Big\},
\#
where 
$
r_{\dtm}^* := \max\big\{ r\colon  \sum_{j=1}^m \ind {\{ p_j\leq s_j, \, |\hat\cR_{j\to 0}  | \leq r \}}   \geq r \big\}.
$
In all options, we use $p_j$ defined in~\eqref{eq:weighted_pval}.  
\end{enumerate}

It is straightforward to see that $\cR_{\dtm}\subseteq \cR_{\hete}$ and 
$\cR_{\dtm}\subseteq \cR_{\homo}$, i.e., 
random pruning leads to larger rejection sets. 
The selection procedure is summarized in Algorithm~\ref{alg:bh}. 

\begin{algorithm}[h]
  \caption{Weighted Conformalized Selection}\label{alg:bh}
  \begin{algorithmic}[1]
  \REQUIRE Calibration data $\{(X_i,Y_i)\}_{i=1}^n$, 
  test data  $\{X_{n+j}\}_{j=1}^m$, 
  thresholds $\{c_{n+j}\}_{j=1}^m$, 
  weight $w(\cdot)$,
  FDR target $q\in(0,1)$, monotone nonconformity score $V\colon \cX\times\cY\to \RR$, 
  pruning method $\in\{\texttt{hete}, \texttt{homo}, \texttt{dtm}\}$.
  \vspace{0.05in} 
  \STATE Compute $V_i = V(X_i,Y_i)$ for $i=1,\dots,n$,  
  and $\hat{V}_{n+j}= V(X_{n+j},c_{n+j})$ for $j=1,\dots,m$.
  \STATE Construct weighted conformal p-values $\{ p_j\}_{j=1}^m$ as in~\eqref{eq:weighted_pval}. 

  \vspace{0.3em}
  \noindent \texttt{- First-step selection -}
  \FOR{$j=1,\dots,m$}
  \STATE Compute p-values $\{ {p}_\ell^{(j)}\}$ as in~\eqref{eq:mod_pval}.
  \STATE (BH procedure) Compute $k^*_j = \max\big\{k\colon 1 +\sum_{\ell\neq j} \ind\{{p}_\ell^{(j)}\leq qk/m\}\geq k\big\}$. 
  \STATE Compute $\hat{\cR}_{j\to 0} = \{j\}\cup\{\ell \neq j\colon  {p}_\ell^{(j)}\leq q k^*_j /m\}$.
  \ENDFOR
  \STATE Compute the first-step selection set $\cR^{(1)} = \{j\colon  {p}_j \leq q|\hat\cR_{j\to 0}|/m\}$.
  
  \vspace{0.3em}
  \noindent \texttt{- Second-step pruning -}
  \STATE Compute $\cR = \cR_{\textrm{hete}}$ as in~\eqref{eq:R_hete}
  or $\cR = \cR_{\textrm{homo}}$ as in~\eqref{eq:R_homo}
  or $\cR = \cR_{\textrm{dtm}}$ as in~\eqref{eq:R_dete}. 
  \vspace{0.05in}
  \ENSURE Selection set $\cR$.
  \end{algorithmic}
\end{algorithm}

\subsection{Theoretical guarantee}

The following theorem, whose proof is in Appendix~\ref{app:thm_calib_ite}, shows exact finite-sample FDR control.

\begin{theorem}\label{thm:calib_ite}
  Write $Z_i=(X_i,Y_i)$ for $i=1,\dots,m+n$, 
  and $\tilde{Z}_{n+j}=(X_{n+j},c_{n+j})$ for $j=1,\dots,m$.  Suppose $\{Z_i\}_{i=1}^n\iid P$ 
and $\{Z_{n+j}\}_{j=1}^m\iid Q$, and~\eqref{eq:cov_shift} holds 
for 
the input weight function $w(\cdot)$ of 
Algorithm~\ref{alg:bh}. 
Assume that for each $j=1,\dots,m$, the samples in  $\{Z_1,\dots,Z_n,Z_{n+j}\}\cup\{\tilde{Z}_{n+\ell}\}_{\ell\neq j}$ are mutually independent.  
Then with either 
$\cR \in \{\cR_{\hete}, \cR_{\homo}, \cR_{\dtm}\}$, it holds that  
\$
\EE\Bigg[  \frac{\sum_{j=1}^m \ind {\{j \in \cR, Y_{n+j}\leq c_{n+j}\}} }{1\vee |\cR|} \Bigg]   \leq q 
\$
(the expectation is taken over both calibration and test data).  
\end{theorem}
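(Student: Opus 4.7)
The plan is to prove the finite-sample FDR bound by conditional calibration, separately for each $j \in \{1, \ldots, m\}$. Writing
\$
\fdr = \sum_{j=1}^m \EE\left[\frac{\ind\{j \in \cR, Y_{n+j} \leq c_{n+j}\}}{|\cR| \vee 1}\right],
\$
the goal is to bound each summand by $q/m$.

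The driving structural observation is that the auxiliary p-values $\{p_\ell^{(j)}\}_{\ell \neq j}$, and hence the leave-one-out rejection set $\hat{\cR}_{j\to 0}$, are constructed from the calibration data, the pair $(X_{n+j}, c_{n+j})$, and $\{\tilde{Z}_{n+\ell}\}_{\ell \neq j}$, but never from the unknown $Y_{n+j}$. I would then introduce the $\sigma$-algebra $\mathcal{E}_j$ generated by the unordered multiset $\{Z_1,\ldots,Z_n,Z_{n+j}\}$, the side information $\{\tilde{Z}_{n+\ell}\}_{\ell \neq j}$, and the external randomness used in pruning. Under the covariate shift~\eqref{eq:cov_shift}, weighted exchangeability (cf.~\cite{tibshirani2019conformal}) implies that, conditionally on $\mathcal{E}_j$, the index of the $j$-th test point within this multiset follows a weighted multinomial with weights proportional to $w(\cdot)$. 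Combined with monotonicity of $V$, which gives $V_{n+j} \leq \hat{V}_{n+j}$ on the null event and hence $\tilde{p}_j \leq p_j$ for the oracle p-value $\tilde{p}_j$ built from $V_{n+j}$ in place of $\hat{V}_{n+j}$, this yields the conditional tail bound $\PP(p_j \leq t, Y_{n+j} \leq c_{n+j} \mid \mathcal{E}_j) \leq t$ for any $\mathcal{E}_j$-measurable threshold $t$.

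The remainder is a combinatorial leave-one-out step, handled separately for each pruning method. On $\{j \in \cR\}$, $|\cR|$ can be bounded below by an $\mathcal{E}_j$-measurable quantity related to $|\hat{\cR}_{j\to 0}|$. For deterministic pruning, $j \in \cR_{\dtm}$ implies $|\cR_{\dtm}| \geq r^*_{\dtm} \geq |\hat{\cR}_{j\to 0}|$, so $\ind\{j \in \cR\}/|\cR|$ is dominated by $\ind\{p_j \leq s_j\}/|\hat{\cR}_{j\to 0}|$; since $s_j = q|\hat{\cR}_{j\to 0}|/m$ is $\mathcal{E}_j$-measurable, the tail bound gives a summand of $q/m$. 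For homogeneous pruning, the shared $\xi$ is handled via a Fubini-type swap, integrating out $\xi$ first. For heterogeneous pruning, independence of $\xi_j$ from $\mathcal{E}_j$ yields $\PP(\xi_j|\hat{\cR}_{j\to 0}| \leq r^*_{\hete} \mid \mathcal{E}_j) \leq r^*_{\hete}/|\hat{\cR}_{j\to 0}|$, providing the correct rescaling of the rejection probability.

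The principal obstacle is that the pruning threshold $r^*$ itself depends on $p_j$ through the defining maximization over the selection indicators, so $r^*$ is not $\mathcal{E}_j$-measurable. Resolving this requires a leave-one-out replacement: define $r^{*(-j)}$ via a version of the procedure in which the $j$-th term is neutralized (e.g.\ $p_j$ is set to $0$, so index $j$ is counted automatically), argue $r^* \leq r^{*(-j)}$ on the event $\{j \in \cR\}$, and carry out the bound using $r^{*(-j)}$, which is $\mathcal{E}_j$-measurable. Structurally this mirrors the conditional-calibration leave-one-out arguments of~\cite{fithian2020conditional}, but additional care is needed here because the randomization in pruning interacts non-trivially with the weighted-exchangeability structure that governs the super-uniformity of $p_j$.
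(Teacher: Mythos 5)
Your plan tracks the structure of the paper's proof closely — the per-$j$ decomposition, the pruning reduction to $\sum_j\EE[\ind\{p_j\le s_j,\,j\in\cH_0\}/|\hat\cR_{j\to 0}|]$, the treatment of the three pruning variants, and the conditioning on an unordered multiset with weighted exchangeability — but it contains a genuine gap in the claim that $s_j = q|\hat\cR_{j\to 0}|/m$ is $\mathcal{E}_j$-measurable. Each auxiliary p-value $p_\ell^{(j)}$ contains the term $w(X_{n+j})\ind\{\hat{V}_{n+j}<\hat{V}_{n+\ell}\}$, where $\hat{V}_{n+j}=V(X_{n+j},c_{n+j})$. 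Your $\mathcal{E}_j$ is generated by the multiset $[Z_1,\ldots,Z_n,Z_{n+j}]$ together with $\{\tilde{Z}_{n+\ell}\}_{\ell\neq j}$ and pruning randomness; conditional on $\mathcal{E}_j$, neither which element of the multiset is $Z_{n+j}$ (so $X_{n+j}$ is not pinned down) nor the value of $c_{n+j}$ is determined, so $\hat{V}_{n+j}$ is still random and $\hat\cR_{j\to 0}$ is \emph{not} $\mathcal{E}_j$-measurable. You correctly note early on that $\hat\cR_{j\to 0}$ depends on the pair $(X_{n+j},c_{n+j})$, yet you deliberately exclude $\tilde{Z}_{n+j}$ from $\mathcal{E}_j$ — these two facts are incompatible with your measurability claim. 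Nor can you simply enlarge $\mathcal{E}_j$ to include $\tilde{Z}_{n+j}$: conditioning on $X_{n+j}$ would collapse the weighted-multinomial structure that underlies the super-uniformity of the oracle p-value. Your "leave-one-out replacement'' for the pruning threshold $r^*$ is the easier obstacle and is handled correctly; the harder one, which your proposal leaves unresolved, is making $|\hat\cR_{j\to 0}|$ itself conditionally fixed.

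The step the paper supplies to close this gap is a second oracle replacement, this time for the auxiliary p-values: define $\bar{p}_\ell^{(j)}$ from $p_\ell^{(j)}$ by replacing $\hat{V}_{n+j}$ with $V_{n+j}=V(X_{n+j},Y_{n+j})$. The resulting sums are symmetric in the indices $\{1,\dots,n,n+j\}$, so $\bar{p}_\ell^{(j)}$ — and therefore the rejection set $\bar\cR_{j\to 0}$ built from them — \emph{is} $\mathcal{E}_j$-measurable. Monotonicity of $V$ gives $V_{n+j}\le\hat{V}_{n+j}$ on $\{j\in\cH_0\}$; combined with a case analysis on whether $\hat V_{n+\ell}$ exceeds $\hat V_{n+j}$ and the step-up property of BH, one shows that on $\{j\in\cH_0,\,p_j\le s_j\}$ the rejection sets with observable and oracle auxiliary p-values coincide, and this event is contained in $\{j\in\cH_0,\,\bar p_j\le q|\bar\cR_{j\to 0}|/m\}$. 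Only after replacing $\hat\cR_{j\to 0}$ by $\bar\cR_{j\to 0}$ does your conditioning-and-super-uniformity argument go through: given $\mathcal{E}_j$, the cardinality $|\bar\cR_{j\to 0}|$ is fixed and the oracle $\bar p_j$ is conditionally super-uniform, yielding the bound $q/m$ per term.
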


As we allow for random thresholds $c_{n+j}$, the independence assumption 
rules out the cases where the thresholds $c_{n+j}$ are adversarially chosen; the reason is that they would break certain exchangeability properties between scores $\{\hat{V}_{n+\ell}\}_{\ell=1}^m$ 
and calibration scores $\{V_{i}\}_{i=1}^n$. Independence holds 
if $c_{n+j}$ is pre-determined, or more generally, if $c_{n+j}$ is a random variable associated 
with independent test samples (such as when
$\{X_{n+j},c_{n+j},Y_{n+j}\}_{j=1}^m$ are i.i.d.~tuples that 
are independent of the calibration data).  Examples 
include 
individual treatment effects studied in Section~\ref{sec:ite}, 
and the drug-target interaction prediction task studied in Section~\ref{subsec:dti}.  
Other examples with random thresholds related to healthcare are in~\cite[Section 2.4]{jin2022selection}.
 
While the $p_j$'s are still marginally stochastically larger than uniforms, their complicated dependence requires additional care, and this is why 
we use the auxiliary p-values $\{p_\ell^{(j)}\}$ 
as `calibrators' to determine a potentially different rejection rule 
than naively applying the BH procedure. The proof of Theorem~\ref{thm:calib_ite} 
relies on comparing our p-values~\eqref{eq:weighted_pval}  to 
oracle p-values  
\#\label{eq:orc_w_pval_nr}
\bar p_j  = \frac{\sum_{i=1}^n w(X_i)\ind {\{V_i < {V}_{n+j} \}}+  w(X_{n+j})  }{\sum_{i=1}^n w(X_i) + w(X_{n+j})}.
\#
As we shall see, 
replacing our conformal p-values with 
their oracle counterparts does not change the first-step 
rejection set $\hat\cR_{j\to 0}$. The advantage of working with the oracle p-values is a friendly dependence structure, which ultimately ensures FDR control. 
The main ideas of the argument are: 
 
\begin{itemize} 
  \item {\em Randomization reduction:} for each $j$,
  $\hat\cR_{j\to 0}$ can be expressed as $\hat\cR_{j\to 0} = f(p_j,\bm{p}_{-j}^{(j)} )$,
  where $\bm{p}_{-j}^{(j)}:=\{p_\ell^{(j)}\}_{\ell\neq j}$. For all three pruning options, it holds that $$\fdr\leq \sum_{j=1}^m \EE\bigg[\frac{ \ind\{  p_j \leq q|f( p_j,\bm{p}_{-j}^{(j)})|/m\}}{ |f( p_j, {\bm{p}}_{-j}^{(j)})| }\bigg].$$ 
  \item {\em Leave-one-out:} $\hat\cR_{j\to 0}$ remains the same if we replace all $\{p_\ell\}$ with $\{\bar{p}_\ell\}$ defined above, i.e., $f( p_j,\bm{p}_{-j}^{(j)})= f(\bar p_j,\bar{\bm{p}}_{-j}^{(j)})$. 
  Since $\hat\cR_{j\to 0}$ is obtained by BH, it follows that $f(\bar p_j,\bar{\bm{p}}_{-j}^{(j)})=f(0,\bar{\bm{p}}_{-j}^{(j)})$. 
  \item {\em Conditional independence:} 
under the covariate shift~\eqref{eq:cov_shift}, the oracle p-values possess a nice conditional independence structure, namely, $\bar{p}_j \indep \bar{\bm{p}}_{-j}^{(j)} \given \cZ_j$, where $\cZ_j = [Z_1,\dots,Z_n,Z_{n+j}]$ is an unordered set of $Z_i=(X_i,Y_i)$, $i=1,\dots,n,n+j$. This, together with the first two steps and the fact that 
  $\bar{p}_j$ is stochastically larger than a uniform random variable conditional on $\cZ_j$, gives $\fdr\leq q$. 
\end{itemize}

Along the way, we shall explore interesting connections between Algorithm~\ref{alg:bh} 
and existing ideas in the multiple testing literature. 

\begin{remark}\normalfont
Algorithm~\ref{alg:bh} is related to the 
conditional calibration method~\cite{fithian2020conditional}, 
which utilizes sufficient statistics to calibrate a 
rejection threshold  for each individual hypothesis 
to achieve finite-sample FDR control. 
Indeed, for each $j$, 
$s_j:= {q|\hat{\cR}_{j\to 0}  | }/{m}$ 
can be viewed as the `calibrated threshold' for p-values in their framework; 
the unordered set $\cZ_j$ (after a careful leave-one-out analysis
in our proof)
plays a similar role as a `sufficient statistic'. 
Our heterogeneous pruning is similar to their random pruning step, 
while the other two options generalize their approach.
In addition, the procedure and its theoretical analysis  
are specific to our problem, and they are significantly different.
\end{remark}

\begin{remark}\normalfont
Algorithm~\ref{alg:bh} is also connected to 
the eBH procedure~\cite{wang2020false}, a generalization of 
the BH procedure to {\em e-values}. 
In the conventional setting with 
a set of (deterministic) hypotheses $\{H_j\}_{j=1}^m$, 
e-values are nonnegative random variables $\{e_j\}_{j=1}^m$  
such that $\EE[e_j]\leq 1$ 
if $H_j$ is null. 
For a target level $q\in(0,1)$, 
the eBH procedure outputs 
$\cR_{\ebh}:=\{j\colon e_j \geq m/(q\hat{k})\}$, where 
$\hat{k} =\max\big\{k\colon \sum_{j=1}^m \ind\{e_j\geq m/(qk)\}\geq k\big\}$. 
One can check that $\cR_{\dtm}$ is 
equivalent to $\cR_{\ebh}$ applied to 
\#\label{eq:eval}
e_j := \frac{\ind\{p_j\leq q|\hat\cR_{j\to 0}|/m\}}{q|\hat\cR_{j\to 0}|/m},
\quad j=1,\dots,m.
\#
Similar to~\eqref{eq:general_pvalue}, 
the $e_j$'s above obey a generalized notion of ``null'' e-values, in the sense that 
\$
\EE\big[e_j\ind\{j\in \cH_0\}\big]\leq 1,\quad \textrm{for all}~j=1,\dots,m, 
\$
see Appendix~\ref{app:thm_calib_ite} for details. 
Furthermore, using the generalized e-values~\eqref{eq:eval},  
$\cR_\hete$ and $\cR_\homo$ are equivalent to running eBH 
using $\{e_j/\xi_j\}$ and $\{e_j/\xi\}$, respectively. 
Note that $\{e_j/\xi_j\}$ and $\{e_j/\xi_j\}$ are no longer 
e-values, yet our procedures still control the FDR while achieving higher power. 
\end{remark}

Our next result shows that the 
extra randomness in the second step 
does not incur too much additional variation for $\cR_\hete$ and $\cR_\homo$.  
The proof of Proposition~\ref{prop:asymp_equiv} is in Appendix~\ref{app:subsec_asymp_equiv}.

\begin{prop}
\label{prop:asymp_equiv}
Suppose $\|w\|_\infty \leq M$ for some constant $M>0$. 
Suppose the distributions of $\{V(X_i,Y_i)\}_{i=1}^n$ 
and $\{V(X_{n+j},c_{n+j} )\}_{j=1}^m$ have no point mass.  
Let $\cR_{\bh}$ be the rejection set 
of BH with weighted conformal $p$-values~\eqref{eq:def_wcpval_rand}, 
and $\cR$ be any of the three selections from Algorithm~\ref{alg:bh}. Let $\cR^{(1)}=\{j\colon  p_j\leq q|\hat\cR_{j\to 0}|/m\}$ 
be the first-step selection set. 
Then the following holds: 
\begin{enumerate}[(i)]
\item If $m$ is fixed, then $\lim_{n\to \infty}\PP(\cR_{\bh}= \cR = \cR^{(1)})=1$ for each $\cR \in \{\cR_{\homo},\cR_{\hete}, \cR_{\dtm}\}$. 
\item If $m,n\to \infty$ and the regularity conditions in case (ii) of Theorem~\ref{thm:fdr_asymp}  hold, 
then  
 $\frac{|\cR^{(1)}\Delta \cR_{\bh}|}{|\cR^{(1)}|} ~\asto~0$, 
 $\frac{|\cR^{(1)}\Delta \cR_{\bh}|}{|\cR_\bh|} ~\asto~0$, 
$ \frac{|\cR^{(1)}\Delta \cR_{\homo}|}{|\cR^{(1)}|} ~\pto ~0$, and 
$ \frac{|\cR_\bh\Delta \cR_{\homo}|}{|\cR_\bh|} ~\pto ~0$.
\end{enumerate}
\end{prop}
 
Proposition~\ref{prop:asymp_equiv} 
shows that when the size of the calibration set  
is sufficiently large,  
our procedure is  close to the BH procedure applied to 
the weighted conformal p-values, the latter being a deterministic function of the data. 
In particular, our first-step rejection set is asymptotically equivalent to BH 
under mild conditions, and the same applies to 
$\cR_{\homo}$. 
The asymptotic analysis of $\cR_{\hete}$ and $\cR_{\dtm}$ is 
challenging. 
In our numerical experiments, we find that $\cR_{\hete}$ 
is often very close to $\cR_{\homo}$, 
while $\cR_{\dtm}$ usually makes too few rejections.

\subsection{FDR bounds with estimated weights}

In practice, the covariate shift $w(\cdot)$ may be unknown. 
When the conformal p-values 
are computed with some fitted weights $\hat{w}(X_i)$, 
Theorem~\ref{thm:est_w} establishes upper bounds on the FDR.  
The proof is in Appendix~\ref{app:thm_est_w}. 

\begin{theorem}\label{thm:est_w}
Take $w(\cdot):=\hat{w}(\cdot)$ 
as the input weight function in Algorithm~\ref{alg:bh} and assume $\hat{w}(\cdot)$ 
is estimated in a training process with data independent from 
$\{(X_i,Y_i)\}_{i=1}^{n}\cup\{(X_{n+j},c_{n+j})\}_{j=1}^m$. 
Under the conditions of Theorem~\ref{thm:calib_ite}, for each $\cR \in \{\cR_{\homo},\cR_{\hete}, \cR_{\dtm}\}$, we have 
\$
\fdr \leq  q\cdot \EE\bigg[ \frac{\hat\gamma^2}{1+ q(\hat\gamma^2-1)/m} \bigg],
\$
where $\hat\gamma := \sup_{x\in \cX} \max\{  \hat{w}(x)/w(x),\, w(x)/\hat{w}(x) \}$. 
\end{theorem}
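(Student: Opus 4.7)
The plan is to adapt the proof of Theorem~\ref{thm:calib_ite}, tracking how using $\hat{w}$ instead of $w$ in Algorithm~\ref{alg:bh} perturbs each step. I condition throughout on the independent training data used to produce $\hat{w}$, so that $\hat{w}$ and $\hat\gamma$ are deterministic constants; the outer expectation over the training data is then taken at the very end by the tower property.

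The opening step is a one-sided ratio bound between the plug-in p-value $\hat{p}_j$ (computed with $\hat{w}$ and threshold $c_{n+j}$) and the oracle p-value $\bar{p}_j$ from the proof of Theorem~\ref{thm:calib_ite}, which uses the true weight $w$ and the true score $V_{n+j}=V(X_{n+j},Y_{n+j})$. On the null event $\{Y_{n+j}\le c_{n+j}\}$, monotonicity of $V$ gives $V_{n+j}\le \hat{V}_{n+j}$, so every indicator in the numerator of $\hat{p}_j$ dominates the corresponding indicator in $\bar{p}_j$; combining this with the pointwise bound $\hat{w}(x)/w(x)\in[1/\hat\gamma,\hat\gamma]$ applied separately to numerator and denominator yields
\[
\ind\{Y_{n+j}\le c_{n+j}\}\cdot\ind\{\hat{p}_j\le t\}\;\le\;\ind\{\bar{p}_j\le \hat\gamma^2 t\}\qquad\text{for all }t\in[0,1].
\]
The randomization-reduction and leave-one-out steps from the proof of Theorem~\ref{thm:calib_ite} only use the definition of Algorithm~\ref{alg:bh} and the BH substitution identity, so they carry over verbatim with $\hat{w}$ in place of $w$, producing for each of the three pruning variants
\[
\fdr\;\le\;\sum_{j=1}^m \EE\!\left[\frac{\ind\{\hat{p}_j\le q|\hat{\cR}_{j\to 0}|/m,\;Y_{n+j}\le c_{n+j}\}}{|\hat{\cR}_{j\to 0}|}\right].
\]
Inserting the ratio bound then replaces each $\hat{p}_j$ by $\bar{p}_j$ at the inflated threshold $q\hat\gamma^2|\hat{\cR}_{j\to 0}|/m$.

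The main obstacle is the conditional-independence step, which is broken under the true law $\PP$ because $\hat{\cR}_{j\to 0}$ depends on $\hat{w}(X_{n+j})$ and is therefore not $\cZ_j$-measurable, so the identity $\bar{p}_j\indep\bar{\bm{p}}_{-j}^{(j)}\mid\cZ_j$ that closed Theorem~\ref{thm:calib_ite} fails. I intend to restore it via a change of measure: introduce a fictitious law $\PP'$ under which $\hat{w}$ is the true covariate-shift function in~\eqref{eq:cov_shift}. Under $\PP'$ the hypotheses of Theorem~\ref{thm:calib_ite} apply to the procedure run with $\hat{w}$, giving both the conditional independence and the uniform stochastic domination required. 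The Radon--Nikodym derivative $d\PP/d\PP'$ on the block $\cZ_j$---the product of the weight ratio in the conditional distribution of $Z_{n+j}$ and the inverse ratio from the normalizing constant---is bounded pointwise by $\hat\gamma^2$, and pushing this through delivers the crude bound $\fdr\le q\hat\gamma^2$ conditional on the training data.

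To upgrade $q\hat\gamma^2$ to $q\hat\gamma^2/(1+q(\hat\gamma^2-1)/m)$, I plan to derive a self-referential inequality of the form $\fdr\le q\hat\gamma^2(1-\fdr/m)+q\fdr/m$, which on rearrangement is exactly the stated bound. The $\fdr/m$ correction should come from a careful comparison between $|\hat{\cR}_{j\to 0}|$ and $|\cR|$ (which differ by at most one by the BH leave-one-out structure) together with the eBH-style inequality $e_j\ge m/(q|\cR|)$ for each $j\in\cR$ highlighted in the remark following Theorem~\ref{thm:calib_ite}. Solving this inequality in $\fdr$ and taking the outer expectation over the training data produces the stated form. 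I expect this last self-referential accounting to be the main technical obstacle, since the three pruning schemes $\cR_\hete$, $\cR_\homo$, $\cR_\dtm$ each require slightly different handling of the auxiliary randomness.
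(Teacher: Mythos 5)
Your high-level plan (condition on the training data so that $\hat w$ and $\hat\gamma$ are deterministic, track how replacing $w$ by $\hat w$ perturbs the proof of Theorem~\ref{thm:calib_ite}, and extract an $\hat\gamma^2$-type inflation at the end) is the right one, but two of your intermediate steps misidentify where the difficulty lies, and the final ingredient you propose to close the gap is not what actually works.

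First, your diagnosis that the conditional-independence step is the ``main obstacle'' is incorrect. You claim $\hat\cR_{j\to 0}$ is not $\cZ_j$-measurable because it depends on $\hat w(X_{n+j})$, but after the leave-one-out replacement (which goes through verbatim with $\hat w$ in place of $w$), the quantity $\tilde p_\ell^{(j)}$ contains $\hat w(X_{n+j})$ only through the sum $\sum_{i\in\{1,\dots,n,n+j\}}\hat w(X_i)\ind\{V_i<\hat V_{n+\ell}\}$ and the normalizer $\sum_{i\in\{1,\dots,n,n+j\}}\hat w(X_i)$, both of which are symmetric functions of the unordered set $\cZ_j$ and hence $\cZ_j$-measurable. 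So the conditional-independence structure of the proof survives intact, and the change-of-measure device you introduce to repair it is solving a problem that does not exist. The genuine casualty of replacing $w$ by $\hat w$ is the \emph{stochastic domination}: $\tilde p_j$ (computed with $\hat w$) is no longer sub-uniform conditional on $\cZ_j$, because the conditional law of $Z_{n+j}$ given $\cZ_j$ puts mass proportional to the true $w$, not $\hat w$.

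Second, the way the paper recovers the exact constant $\hat\gamma^2/\big(1+q(\hat\gamma^2-1)/m\big)$ is not a self-referential bootstrap but a sharper pointwise inequality than the one you propose. Your ratio bound $\ind\{\hat p_j\le t\}\le\ind\{\bar p_j\le\hat\gamma^2 t\}$ is correct but lossy: it compares numerator and denominator as blobs. The paper instead writes $\bar p_j = W_-/(W_-+W_+)$ and $\tilde p_j=\hat W_-/(\hat W_-+\hat W_+)$, applies $W_-\le\hat\gamma\hat W_-$ and $W_+\ge\hat\gamma^{-1}\hat W_+$ \emph{separately} to the two blocks, and gets
\[
\tilde p_j\le \tfrac{qk}{m}\quad\Longrightarrow\quad \bar p_j\le\frac{\hat\gamma^2}{\hat\gamma^2+m/(qk)-1}=\frac{\hat\gamma^2\,qk/m}{1+(\hat\gamma^2-1)\,qk/m},
\]
which is strictly below $\hat\gamma^2\,qk/m$. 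Feeding this into the sub-uniformity of $\bar p_j$ given $\cZ_j$ and maximizing $\frac{1}{k}\cdot\frac{\hat\gamma^2}{\hat\gamma^2+m/(qk)-1}$ over $k\in\{1,\dots,m\}$ (the max is attained at $k=1$) yields exactly $\frac{q}{m}\cdot\frac{\hat\gamma^2}{1+q(\hat\gamma^2-1)/m}$ per term, with no feedback through the FDR. The self-referential inequality $\fdr\le q\hat\gamma^2(1-\fdr/m)+q\fdr/m$ that you propose as the closing move has no clear source in this problem: $|\hat\cR_{j\to 0}|$ and $|\cR|$ are not related by ``differ by at most one'' here (there is a further pruning step between them), the eBH observation does not supply the needed $\fdr/m$ correction, and in any case each of the three pruning variants would need separate handling. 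You would do better to discard both the change of measure and the self-referential patch, keep your conditioning step and your leave-one-out reduction, and replace the crude $\hat\gamma^2 t$ inflation with the blockwise bound above; that alone produces the stated constant.
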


Above, $\hat\gamma\geq 1$ measures the estimation error in $\hat{w}(\cdot)$ 
relative to the true weights. 
When both $w(\cdot)$ and $\hat{w}(\cdot)$ 
are bounded away from zero and infinity, 
$\hat\gamma-1$ is of the same order as $ \sup_x|\hat{w}(x)-w(x)|$,  
and hence the FDR inflation in Theorem~\ref{thm:calib_ite} converges to zero 
if $\hat{w}$ is consistent.

\section{Application to drug discovery datasets}
\label{sec:drug}

{As a direct application of WCS, we consider the goal of  
prioritizing drug candidates. 
We focus on two tasks: 
(i) drug property prediction, i.e., 
selecting molecules that bind to a target protein, and 
(ii) drug-target interaction prediction, i.e., 
selecting drug-target pairs with high affinity scores.}  
We use  the
DeepPurpose library~\cite{huang2020deeppurpose} for data pre-processing and model training. 

\subsection{Drug property prediction}
\label{subsec:drug_pred}

Our first goal is to find molecules that may bind to a target protein for HIV. 
Machine learning models 
are  trained on a subset of screened molecules from 
a drug library,  
and  then  used to predict the remaining ones.

We use the HIV screening dataset in the DeepPurpose library
with a total size of $n_\textrm{tot}=41127$. 
In this dataset, the covariate $X\in \cX$ is 
a string that represents the chemical structure of a molecule 
(encoded by Extended-Connectivity FingerPrints, ECFP), and  
the response $Y\in \{0,1\}$ is binary, indicating whether 
the molecule binds to the target protein. 
Our goal is to select as many new drugs with $Y=1$ as possible while 
controlling the FDR below a specified level. 
This can be viewed as the goal~\eqref{eq:fdr} with $c_{n+j}\equiv 0.5$. 

Oftentimes, experimenters introduce a bias by selecting the first batch of 
molecules to screen (the training data), which results in a covariate shift 
between training (calibration) and test data. 
Here, we mimic an experimentation procedure that builds upon 
a pre-trained prediction model $\hat\mu \colon \cX\to \RR$ 
for binding activity, so that  
those with higher predicted values are more likely 
to be included in the training (calibration) data. 
In our experiment, to reduce computational time, 
we train a single model for both predicting test samples 
and for selecting the calibration fold. 
We take a subset of size $0.4 \times n_{\textrm{tot}}$ as the training set $\cD_\train$, 
on which we train a small neural network $\hat\mu(\cdot)$ 
with three layers trained over three epochs. 
The remaining $0.6 \times n_{\textrm{tot}}$ samples are randomly selected 
as the calibration set $\cD_\calib$ with probability 
$p(x) = \min\{0.8,  \sigma(\hat\mu(x)-\bar\mu)\}$; here, $x\in \cX$, 
$\bar\mu=\frac{1}{|\cD_\train|}\sum_{i\in \cD_{\train}} \hat\mu(X_i)$ 
is the average prediction on the training fold, and $\sigma(t) = e^t/(1+e^t)$ is the sigmoid function. 
The covariate shift~\eqref{eq:cov_shift} is thus of the form $w(x)\propto 1/p(x)$, 
which we assume is known.

We compare the BH procedure with weighted conformal p-values~\eqref{eq:def_wcpval_rand}, 
as well as our Algorithms~\ref{alg:bh} and~\ref{alg:bh_cond}. The last one 
is applicable because we here take $c=0.5$ for binary classification. We consider two scores used in BH and Algorithm~\ref{alg:bh}:
\begin{itemize}
    \item \texttt{res}:  $V(x,y) = y-\hat\mu(x)$. 
    \item \texttt{clip}: the score $V(x,y) = M\cdot \ind\{y>0\} -\hat\mu(x)$, with $M>2\sup_x|\hat\mu(x)|$.   
\end{itemize}
The empirical FDR over $N=200$ independent runs 
for FDR target levels $q\in \{0.1, 0.2,0.5\}$ is summarized 
in Figure~\ref{fig:drug_pred_fdr}. 
All algorithms empirically control the FDR below the nominal levels 
(up to random fluctuation), showing 
the reliability of WCS in 
prioritizing drug discovery under covariate shifts. 
The two scores yield similar power as in Figure~\ref{fig:drug_pred_power}.

\begin{figure}[h]
    \centering
    \includegraphics[width=5.5in]{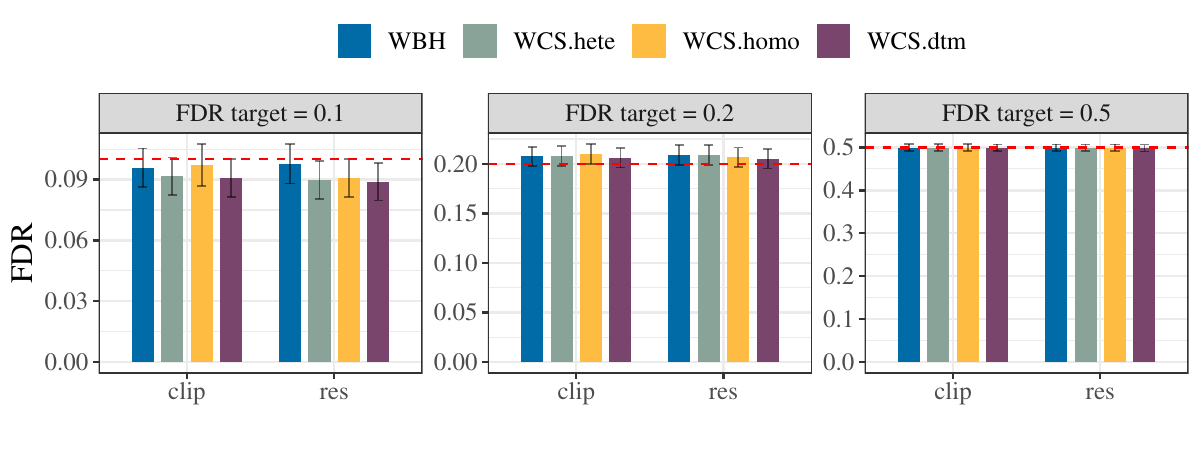}
    \caption{Empirical FDR for drug property prediction.
    The label \texttt{wBH} is a shorthand 
    for BH, and \texttt{wCC.*} for 
    Algorithm~\ref{alg:bh} (for \texttt{clip} and \texttt{res}) or 
    Algorithm~\ref{alg:bh_cond} (for \texttt{sub}) 
    with three pruning options $*\in \{\texttt{hete},\texttt{homo},\texttt{dtm}\}$. 
    The red dashed lines 
    are the nominal FDR levels.}
    \label{fig:drug_pred_fdr}
\end{figure}

\begin{figure}[h]
    \centering
    \includegraphics[width=5.5in]{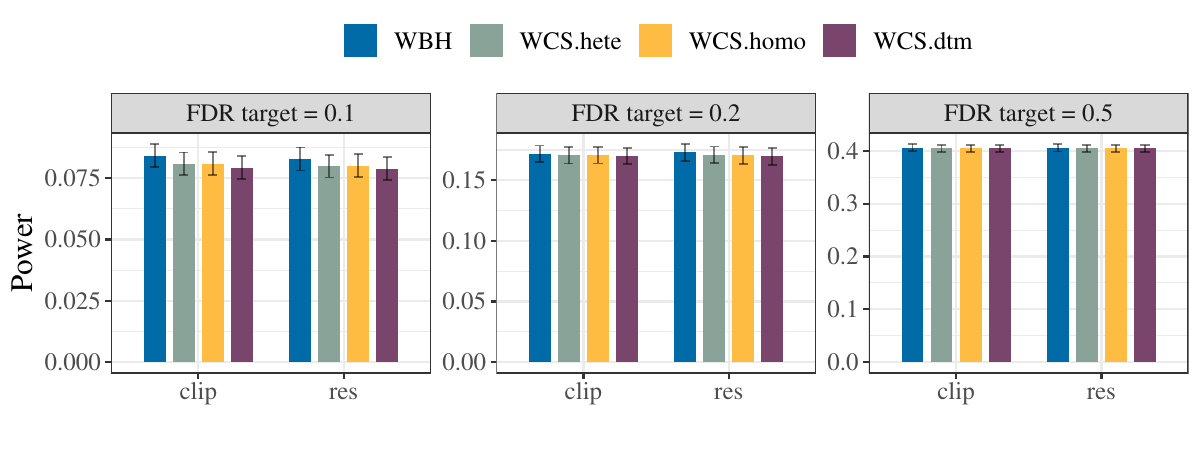}
    \caption{Empirical power for drug property prediction. 
    Everything else is as in Figure~\ref{fig:drug_pred_fdr}.}
    \label{fig:drug_pred_power}
\end{figure}

\subsection{Drug-target interaction prediction}
\label{subsec:dti}

We then consider drug-target interaction (DTI) prediction, 
where the goal is to select drug-target pairs with a high binding score. 
This task is relevant if a therapeutic company 
is interested in prioritizing resources for 
drug candidates that may be effective for any target they are interested in. 
We use the DAVIS dataset~\cite{davis2011comprehensive}, 
which records real-valued binding affinities for $n_{\textrm{tot}} = 30060$ 
drug-target pairs. 
The drugs and the targets 
are encoded into numeric features using
ECFP and Conjoint triad feature (CTF)~\cite{shen2007predicting,shao2009predicting}.

We essentially follow the same procedure as in Section~\ref{subsec:drug_pred} and only rehearse the main components. 
First, we randomly sample a subset of size $n_{\train} = 0.2\times n_{\textrm{tot}}$ 
as $\cD_\train$, on which we train a regression model $\hat\mu(\cdot)\colon \cX\to \RR$ 
using a $3$-layer neural network with $10$ training epochs. 
This  relatively 
simple model is suitable for numerical experiments on CPUs (one 
can surely use more complicated prediction models in practice). 
As before, we use the same model $\hat\mu(\cdot)$ 
for both predicting test samples and selecting calibration data into first-batch screening (in practice, they can of course be different). We sample a drug-target pair for inclusion in the calibration set  
with probability $p(x) = \sigma(2\hat\mu(x)-\bar\mu)$, 
where $\bar\mu$ is the average training prediction as before. 
Finally, among those not selected in the calibration data, 
we randomly sample a subset of size $m=5000$ as test data. 

We choose a more complicated threshold $c_j$ for the continuous response. For a drug-target pair $X_{n+j}$, 
we set $c_j$ to be the $q_{\textrm{pop}}$-th 
quantile of the binding scores of all drug-target pairs in $\cD_\train$ 
with the same target. We evaluate $q_{\textrm{pop}} \in\{0.7,0.8\}$. 
Thus, $c_j:=c(X_{n+j},\cD_\train,q_{\textrm{pop}})$ where 
$c$ is a mapping that takes both   $\cD_\train$ and 
the target information in $X_{n+j}$ as inputs.  
Lastly, we evaluate the BH procedure and Algorithm~\ref{alg:bh} with scores: 
\begin{itemize}
    \item \texttt{res}:  $V(x,y) = y-\hat\mu(x)$.
    \item \texttt{clip}: $V(x,y) = M\cdot \ind\{y>c(x,\cD_\train,q_{\textrm{pop}})\} 
    + c(x,\cD_\train,q_{\textrm{pop}}) \ind\{y\leq c(x,\cD_\train,q_{\textrm{pop}})\}-\hat\mu(x)$ in which $M = 100$.
\end{itemize}
We always use $\cD_\calib$ as the calibration set, 
and set the FDR target as $q\in\{0.1,0.2,0.5\}$.  

Figure~\ref{fig:drug_dti_fdr} shows false discovery proportions (FDPs) 
for $q_{\textrm{pop}}=0.8$ 
in $N=200$ independent. 
Similar results for $q_{\textrm{pop}}=0.7$ 
are presented in Appendix~\ref{app:subsec_dti}. We see that the FDR is controlled at the desired level for all algorithms and nonconformity scores. 
This shows the validity of our algorithms and the 
plausibility of the independence assumptions we make on the drug-target pairs. 
In this task, we do not observe 
much difference between deterministic pruning (\texttt{WCS.dtm}) and 
the other two pruning options. 
Furthermore, we observe that the FDPs across replications 
tightly concentrate especially for \texttt{clip} and $q\in\{0.2,0.5\}$, 
showing that our algorithms are stable with respect to data splitting (i.e., the 
randomness in choosing the initial screening sets and in the training process). 
Comparing the two nonconformity scores, 
we see that \texttt{clip} exploits the error 
budget and obtains a realized FDR, which is very close to the 
nominal level, while \texttt{res} has a much lower FDR in all cases. Not surprisingly,  Figure~\ref{fig:drug_dti_power} shows that  
\texttt{clip} has much higher power. 

\begin{figure}[h]
    \centering
    \includegraphics[width=5.5in]{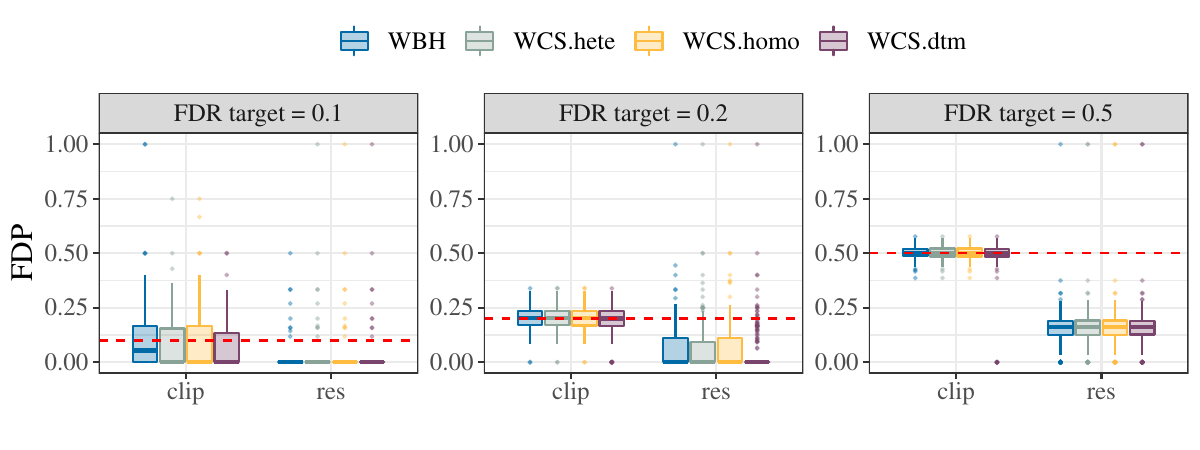}
    \caption{Empirical FDR for drug-target interaction prediction  with $q_{\textrm{pop}}=0.8$. The shorthand \texttt{WBH} 
    stands for BH procedure, and \texttt{WCS.*} for 
    Algorithm~\ref{alg:bh}  
    with three pruning options $*\in \{\texttt{hete},\texttt{homo},\texttt{dtm}\}$. 
    The red dashed lines 
    are the nominal FDR levels. Solid lines are empirical averages (here empirical FDR).}
    \label{fig:drug_dti_fdr}
\end{figure}

\begin{figure}[h]
    \centering
    \includegraphics[width=5.5in]{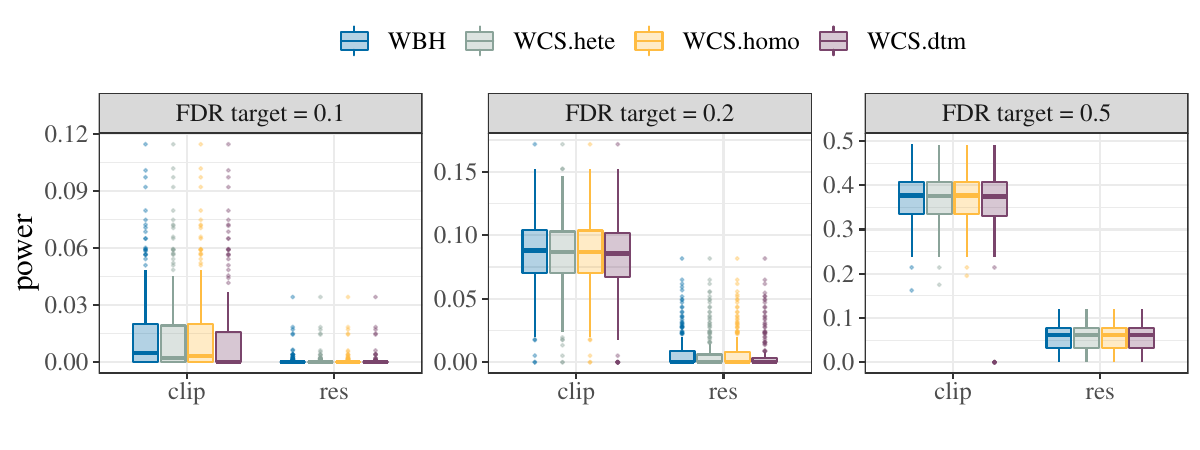}
    \caption{Empirical power for drug-target interaction prediction with $q_{\textrm{pop}}=0.8$. 
    Everything else is as in Figure~\ref{fig:drug_dti_fdr}.}
    \label{fig:drug_dti_power}
\end{figure}

\section{Application to multiple individual treatment effects}
\label{sec:ite}

{We now apply our method to infer individual treatment 
effects (ITEs)~\cite{lei2021conformal,jin2023sensitivity}
under the potential outcomes framework~\cite{imbens2015causal}.
The ITE describes the difference between an individual's outcomes when receiving a treatment versus not; its
variation comes from both 
individual characteristics and intrinsic uncertainty in the outcomes. 
Inference on multiple ITEs is useful in 
assisting reliable personalized decision making.} 

We consider a super-population setting, where  
$(X_i,O_i(1),O_i(0),T_i)$ 
are drawn i.i.d.~from a joint distribution $\pcausal$ (distinct 
from the distributions $P$ and $Q$ we use for the data). 
Here, $X_i\in \cX$ is the  features, 
$T_i\in \{0,1\}$ indicates whether unit $i$ receives treatment, 
and $O_i(1),O_i(0)$ are the potential outcomes under 
treatment and not, respectively. 
Under the standard SUTVA~\cite{imbens2015causal}, 
we observe $(X_i,O_i,T_i)$, where $O_i=O_i(T_i)=T_iO_i(1)+(1-T_i)O_i(0)$. 
We will specify how the treatment $T_i$ is allocated later on. 
We focus on  units \emph{in the study}, i.e., 
those who have received a certain treatment option. The ITE of unit $i$ is  
the random variable 
$\Delta_i = O_i(1)-O_i(0)$.  
As only one potential outcome is observed, 
a crucial part of inferring the 
ITE is to predict the counterfactual, i.e., 
$O_i(1)$ for those units with $T_i=0$, and $O_i(0)$ for $T_i=1$. 
In the following, we consider simultaneous inference for 
the ITEs of a set of units in the 
control group (inference for treated units is similar). 

Formally, the test samples are $\{ X_{n+j} \}_{j=1}^m$ 
are units in the control group (the outcome $O_{n+j}$ is observed). 
Our goal is to screen for positive ITEs with FDR control, i.e., 
finding a subset $\cR\subseteq\{1,\dots,m\}$, such that 
\#\label{eq:fdr_ite}
\fdr := \EE\Bigg[  \frac{\sum_{j=1}^m \ind\{O_{n+j}(1)\leq O_{n+j}(0),~j\in \cR\}}{1\vee |\cR|}   \Bigg]  \leq q.
\#  

To cast the counterfactual problem in our framework,  
set $Y_{n+j}:= O_{n+j}(1)$ to be the unobserved outcome, and the 
thresholds $c_{n+j} := O_{n+j}(0)$ to be the observed outcomes. 
To infer $Y_{n+j}$, the training data is 
$\{(X_i,Y_i )\}_{i=1}^n = \{(X_i,O_i(1))\}_{i=1}^n$ 
for which $T_i=1$.  
Conditional on the treatment status, the training samples are i.i.d.~from 
$P:=\pcausal_{X,O(1)\given T=1}$, 
while the test samples $\{(X_{n+j},Y_{n+j})\}_{j=1}^m$ are i.i.d.~from 
$Q:=\pcausal_{X,O(1)\given T=0}$. 
The relationship between $P$ and $Q$ will depend on the treatment assignment 
mechanism.

\subsection{Warm-up: completely randomized experiments}

As a warmup, consider a completely randomized experiment, where the 
treatment assignments $T_i$ are i.i.d.~draws from $\textrm{Bern}(\pi)$ 
for some $\pi\in(0,1)$ independently from everything else. 
In this case, $P =Q = \pcausal_{X,O(1)}$ and 
it suffices to use conformal p-values without weights. 
The 
testing procedure
for randomized experiments has already been studied in~\cite{jin2022selection}. 
Take any \emph{monotone} nonconformity score $V(\cdot,\cdot)$ obtained 
from an independent training process, 
and compute $V_i = V(X_i,O_i) = V(X_i,O_i(1))$ for $i=1,\dots,n$,  
and $\hat{V}_{n+j} =V(X_{n+j},O_{n+j})= V(X_{n+j},O_{n+j}(0))$ for $j=1,\dots,m$. 
Then construct conformal p-values~\eqref{eq:weighted_pval} with $w(\cdot)\equiv 1$, 
and run the BH procedure with  these p-values at level $q$.  
FDR control is a corollary of \cite[Theorem 2.3]{jin2022selection}, and we omit the proof.

\begin{corollary}\label{cor:ite_randomized}
The procedure above has FDR~\eqref{eq:fdr_ite} at most  $q$.
\end{corollary}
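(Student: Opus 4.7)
The plan is to reduce the problem to the unweighted analogue of Theorem~\ref{thm:calib_ite}, which (as noted in the statement) is the main result \cite[Theorem 2.3]{jin2022selection}. The only things to verify are that after conditioning on the treatment assignments, the data satisfy the i.i.d.\ and independence hypotheses required there, with $w(\cdot)\equiv 1$.

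First, I would condition on the full treatment vector $T=(T_1,\dots,T_{n+m})$ (equivalently, on the identities of the calibration group $\{i:T_i=1\}$ and control group $\{j:T_j=0\}$). Because $T_i\iid \textrm{Bern}(\pi)$ independently of $(X_i,O_i(1),O_i(0))$, we have
\$
\pcausal_{X,O(1)\mid T=1}=\pcausal_{X,O(1)\mid T=0}=\pcausal_{X,O(1)},
\$
so conditional on $T$, $\{(X_i,O_i(1))\}_{i:T_i=1}$ and $\{(X_{n+j},O_{n+j}(1))\}_{j:T_{n+j}=0}$ are i.i.d.\ draws from the same distribution $P=Q=\pcausal_{X,O(1)}$. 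Thus~\eqref{eq:cov_shift} holds with $w\equiv 1$, and the setting collapses to the exchangeable case.

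Next, I would verify the independence condition in Theorem~\ref{thm:calib_ite} for the random thresholds $c_{n+j}=O_{n+j}(0)$. Since the underlying tuples $(X_i,O_i(1),O_i(0),T_i)$ are i.i.d.\ across $i$, the variables $\{c_{n+j}\}_{j=1}^m$ are functions only of the control units, and in particular $c_{n+j}$ depends solely on the $j$-th test tuple. Hence, conditionally on $T$, the unordered collection $\{Z_1,\dots,Z_n,Z_{n+j}\}\cup\{\tilde Z_{n+\ell}\}_{\ell\neq j}$ consists of mutually independent elements for every $j$, which is precisely the independence hypothesis required.

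Having checked both hypotheses, I would invoke Theorem~\ref{thm:calib_ite} (or equivalently \cite[Theorem~2.3]{jin2022selection}) with the monotone score $V$, weight $w\equiv 1$, and thresholds $c_{n+j}=O_{n+j}(0)$ to conclude
\$
\EE\Bigg[\frac{\sum_{j=1}^m \ind\{j\in\cR,\,Y_{n+j}\le c_{n+j}\}}{1\vee|\cR|}\,\bigg|\,T\Bigg]\le q,
\$
where $Y_{n+j}\le c_{n+j}$ is exactly $O_{n+j}(1)\le O_{n+j}(0)$. Taking an outer expectation over $T$ gives~\eqref{eq:fdr_ite}. The main (and only) subtle point is bookkeeping the random sizes of the calibration and test groups and making sure the conditional argument covers the event $\{|\cD_\calib|=0\}$ or $\{|\cD_\test|=0\}$, but on these events the FDP is zero by convention, so the bound is trivial.
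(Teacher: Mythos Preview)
The proposal is correct and matches the paper's approach, which simply cites \cite[Theorem~2.3]{jin2022selection} and omits the details; you have supplied the conditioning-on-$T$ argument needed to verify its hypotheses. One small caveat: the procedure in this corollary is BH applied to unweighted conformal p-values, not Algorithm~\ref{alg:bh}, so Theorem~\ref{thm:calib_ite} is not the directly applicable statement---invoke \cite[Theorem~2.3]{jin2022selection} alone.
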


\subsection{Stratified randomization and observational studies}

We now consider a more general setting where the treatment assignment 
may depend on the observed covariates, formalized 
as the following strong ignorability condition~\cite{imbens2015causal}.

\begin{assumption}[Strong ignorability]\label{assump:ignor}
Under the joint distribution $\pcausal$, it holds that  
$(O(1),O(0)) \indep T \given X$. Equivalently, 
the treatment assignments are independently 
generated from $T_i\sim \textrm{Bern}(e(X_i))$, 
where $e\colon \cX\to (0,1)$ is known as the propensity score.
\end{assumption}

Assumption~\ref{assump:ignor} is automatically satisfied in 
stratified randomization experiments, where the treatment is randomized 
in a way that only depends on the covariates, and 
the propensity score $e(\cdot)$ is known.
In observational studies where the treatment assignment mechanism is completely unknown, 
Assumption~\ref{assump:ignor} is standard  
for the identifiability of average treatment effects~\cite{rosenbaum2002observational}. 
Inference on ITEs when this assumption is violated (i.e., when 
there is unmeasured confounding)
is studied in~\cite{jin2023sensitivity}; that said, 
multiple testing under confounding needs additional techniques, 
and is beyond the scope of 
the current work. 

Under Assumption~\ref{assump:ignor}, 
the covariate shift condition~\eqref{eq:cov_shift} holds~\cite{lei2021conformal}, as 
\$
\frac{\ud Q}{\ud P}(x,y) = \frac{\ud \pcausal_{X,O(1)\given T=0}}{\ud \pcausal_{X,O(0)\given T=1}}(x,y) = \frac{\ud \pcausal_{X\given T=0}}{\ud \pcausal_{X \given T=1}}(x) = w(x) := \frac{\pi(1-e(x))}{(1-\pi) e(x)},
\$
where $e(x)=\pcausal(T=1\given X=x)$ is the propensity score, and $\pi=\pcausal(T=1)$ is the 
marginal probability of being treated.  
As such, Algorithm~\ref{alg:bh} is readily applicable when $e(\cdot)$ is known. 

\begin{corollary}\label{cor:ite_weight}
Suppose $e(\cdot)$ is known, and Assumption~\ref{assump:ignor} holds. 
Consider calibration data $\{(X_i,O_i(1)\colon T_i=1\}_{i=1}^n$, 
test data $\{X_{n+j}\colon T_{n+j}=1\}_{j=1}^m$, 
thresholds $\{O_{n+j}(0)\colon T_{n+j}=0\}_{j=1}^m$, 
any monotone score $V$, 
and weight function $w(x) \propto (1-e(x))/{e(x)}$ as the input 
of Algorithm~\ref{alg:bh}. 
Then any selection procedure $\cR \in \{\cR_{\hete}, \cR_{\homo}, \cR_{\dtm}\}$ has FDR at most $q$.
\end{corollary}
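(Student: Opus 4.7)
My plan is to reduce Corollary~\ref{cor:ite_weight} to a direct invocation of Theorem~\ref{thm:calib_ite}. First I would translate the causal quantities into the notation of Section~\ref{sec:method}: set $Y_{n+j} := O_{n+j}(1)$ (the unobserved counterfactual, which is the ``response'' to be inferred) and $c_{n+j} := O_{n+j}(0)$ (the observed factual outcome, playing the role of the threshold). Then the event $\{Y_{n+j}\leq c_{n+j}\}$ coincides with the null event $\{O_{n+j}(1)\leq O_{n+j}(0)\}$ in~\eqref{eq:fdr_ite}, so the FDR criterion of Theorem~\ref{thm:calib_ite} matches~\eqref{eq:fdr_ite} exactly. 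The calibration data $\{(X_i, O_i(1)): T_i=1\}_{i=1}^n$ and test data $\{(X_{n+j}, c_{n+j}) : T_{n+j}=0\}_{j=1}^m$ then feed into Algorithm~\ref{alg:bh} as prescribed.

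Next I would verify the three hypotheses of Theorem~\ref{thm:calib_ite}. For the distributional assumption, conditioning on the vector of treatment statuses, the i.i.d.\ assumption on the tuples $(X_i,O_i(1),O_i(0),T_i)\sim \pcausal$ implies that the treated units give i.i.d.\ samples $(X_i,O_i(1))\sim P := \pcausal_{X,O(1)\mid T=1}$, and the control units give i.i.d.\ samples $(X_{n+j},O_{n+j}(1))\sim Q := \pcausal_{X,O(1)\mid T=0}$. For the covariate shift, the computation already displayed in the paragraph before the corollary shows, using Assumption~\ref{assump:ignor} to remove the conditioning on $T$ from the conditional distributions of $O(1)$ given $X$, that $\ud Q/\ud P(x,y) = \pi(1-e(x))/((1-\pi)e(x)) = w(x)$, which matches the input weight to Algorithm~\ref{alg:bh}.

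The only condition requiring slightly more care is the mutual independence assumption, and this is the main subtlety to watch out for, since $Y_{n+j}=O_{n+j}(1)$ and $c_{n+j}=O_{n+j}(0)$ live on the same unit and are in general dependent. The point is that Theorem~\ref{thm:calib_ite} never asks for the joint independence of $Z_{n+j}$ and $\tilde Z_{n+j}$; for each fixed $j$ it only requires independence of $\{Z_1,\dots,Z_n,Z_{n+j}\}\cup\{\tilde Z_{n+\ell}\}_{\ell\neq j}$. These objects are functions of pairwise distinct units (the calibration indices are among the treated units, whereas $Z_{n+j}$ and each $\tilde Z_{n+\ell}$ for $\ell\neq j$ come from different control units), so mutual independence follows from the i.i.d.\ sampling of the underlying tuples. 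With all conditions verified, Theorem~\ref{thm:calib_ite} then yields $\fdr\leq q$ for each $\cR\in\{\cR_\hete,\cR_\homo,\cR_\dtm\}$, completing the proof.
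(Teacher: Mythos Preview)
Your proposal is correct and takes essentially the same approach as the paper: the corollary is stated as an immediate application of Theorem~\ref{thm:calib_ite} once the covariate shift $\ud Q/\ud P(x,y)=\pi(1-e(x))/((1-\pi)e(x))$ has been computed in the preceding paragraph, and you have simply spelled out the verification of the hypotheses (in particular the mutual-independence condition across distinct units) that the paper leaves implicit.
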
 

The propensity score function $e(\cdot)$ is unknown for observational 
data. Under Assumption~\ref{assump:ignor}, 
we can estimate the propensity scores (hence the weight function) 
using an independent training fold 
and plug this estimate into the construction of p-values. 
As a corollary of Theorem~\ref{thm:est_w}, we can develop an 
FDR bound for observational data. 

\begin{corollary}
    In the setting of Corollary~\ref{cor:ite_weight}, take  
$w(x) \propto (1-\hat e(x))/{\hat e(x)}$ as the input 
of Algorithm~\ref{alg:bh}, where 
$\hat{e}(\cdot)$ is an estimate of $e(\cdot)$ which is independent of 
the calibration and training data. Then any selection procedure $\cR \in \{\cR_{\hete}, \cR_{\homo}, \cR_{\dtm}\}$ obeys
\$
\fdr \leq  q\cdot \EE\bigg[ \frac{\hat\gamma^2}{1+ q(\hat\gamma^2-1)/m} \bigg],
\qquad 
\hat\gamma := \sup_{x\in\cX}\max\bigg\{ \frac{1-e(x)\hat{e}(x)}{e(x)(1-\hat{e}(x))},
\frac{e(x)(1-\hat{e}(x))}{1-e(x)\hat{e}(x)}  \bigg\}.
\$
\end{corollary}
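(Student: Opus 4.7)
The plan is to apply Theorem~\ref{thm:est_w} as a black box; this corollary is essentially a direct specialization of the general estimated-weight FDR bound to observational ITE inference under strong ignorability.

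First, I would verify each condition of Theorem~\ref{thm:est_w}. The independence assumption on $\hat{e}(\cdot)$ stated in the hypothesis guarantees that the input weight $\hat{w}(x) \propto (1-\hat{e}(x))/\hat{e}(x)$ to Algorithm~\ref{alg:bh} is independent of the calibration data $\{(X_i,O_i(1)) : T_i=1\}_{i=1}^n$ and the test covariates $\{X_{n+j} : T_{n+j}=0\}_{j=1}^m$. The covariate shift condition~\eqref{eq:cov_shift} with true weight $w(x)\propto(1-e(x))/e(x)$ was already derived in the paragraph preceding Corollary~\ref{cor:ite_weight} under Assumption~\ref{assump:ignor}. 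The monotonicity of the score $V$, the exchangeability of the calibration tuples, and the required independence between the thresholds $c_{n+j}=O_{n+j}(0)$ and the calibration data are all inherited from the ITE setup of Corollary~\ref{cor:ite_weight}. Since all three pruning options $\cR_\hete, \cR_\homo, \cR_\dtm$ are handled uniformly by Theorem~\ref{thm:est_w}, no separate case analysis is required.

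Next, I would compute the weight ratio explicitly. Using the representatives $w(x)=(1-e(x))/e(x)$ and $\hat{w}(x)=(1-\hat{e}(x))/\hat{e}(x)$---valid since the conformal p-values~\eqref{eq:weighted_pval} are invariant under positive rescaling of $w$---a direct substitution into $\hat\gamma=\sup_{x\in\cX}\max\{\hat{w}(x)/w(x),\,w(x)/\hat{w}(x)\}$ followed by routine algebraic simplification produces the closed-form supremum displayed in the corollary. Substituting this expression into the FDR bound of Theorem~\ref{thm:est_w} then yields the claimed inequality.

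The main (and essentially only) obstacle is careful bookkeeping of the independence hypothesis: one must ensure that the auxiliary fold on which $\hat{e}$ is fit stays disjoint from both the calibration sample and the test covariates that feed into the conformal p-values and the subsequent selection set. Since this disjointness is stated explicitly in the hypothesis, the argument reduces to mechanical substitution into Theorem~\ref{thm:est_w}.
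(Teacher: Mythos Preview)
Your proposal is correct and matches the paper's approach exactly: the paper states this result as an immediate consequence of Theorem~\ref{thm:est_w} without supplying a separate proof, and your argument---verify the hypotheses of Theorem~\ref{thm:est_w} in the ITE setting, exploit scale invariance of the weighted conformal p-values to choose the representatives $w(x)=(1-e(x))/e(x)$ and $\hat w(x)=(1-\hat e(x))/\hat e(x)$, and read off the resulting $\hat\gamma$---is precisely what is needed.
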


At a high level, the conformal p-values  
compare the observed outcome $O_{n+j}(0)$ of the control units 
to the empirical distribution of $O_{i}(1)$ in the calibration data;  
the latter---after proper weighting---shows the typical behavior 
of the counterfactual 
$O_{n+j}(1)$, and provides evidence for whether 
it may be larger than $O_{n+j}(0)$, i.e., whether $\Delta_{n+j}$ is positive.  

In predicting $O_{n+j}(1)$, we use the marginal distribution of $(X,O(1))$ from the calibration data, but ignore the information in the observed outcome $O(0)$.
Put differently, our 
inference for ITEs is valid regardless of how 
the potential outcomes are coupled. 
Thus,  
the actual false discovery rate may be lower than $q$.  
Indeed, it is observed in~\cite{jin2023sensitivity} that the 
FDR for identifying positive ITEs---even without adjusting for multiplicity---can sometimes be lower than the nominal level.   
Next, we are to empirically investigate 
the FDR and power of our method under various couplings of potential outcomes. 
A theoretical understanding is left for future work. 

\subsection{Simulation studies}

We design joint distributions of the variables $(X,O(1),O(0),T)$  
with various covariate distributions, regression functions, 
and couplings of potential outcomes.  
The covariates $X_i \in \mathbb{R}^{10}$ ($p = 10$) 
are obtained via $X_{ij}=\Phi(X_{ij}^0)$, $j=1,\dots,10$,
where $X_{i}^0\in \RR^{10}$ are i.i.d.~$\mathcal{N}(0,\Sigma)$, 
and $\Phi(\cdot)$ is the CDF of a $\mathcal{N}(0,1)$ random variable.
We set  $\Sigma=I_{p}$ for the independent case and $\Sigma_{k,j}=0.9^{|k-j|}$ 
for the correlated case. \footnote{Simulations and real data analysis for ITE inference 
can be found at \href{https://github.com/ying531/conformal-selection}{https://github.com/ying531/conformal-selection}.}    
We
consider three cases:
\begin{itemize}
    \item Setting 1: 
    $
    O_i(0) = 0.1\, \epsilon_{0,i}$, $O_i(1) = \max\{ 0, \mu_1(X_i) + \sigma_1(X_i) \, \epsilon_{1,i} \}$, where 
    $\sigma_1 (x) =   0.2 -  \log x_1$ and $ \mu_1(x) = {4}/{(1+e^{-12x_1-0.5})(1+e^{-12x_2-0.5})}$. 
    \item Setting 2 is a mixture of deterministic and stochastic ITEs. 
    With probability $0.1$, set $O_i(1)= 0.1\, \epsilon_{0,i}-0.5$, and $O_i(0)=O_i(1)+0.05$; otherwise, set them as in Setting 1.
    \item Setting 3: $O_i(0) = \mu_0(X_i) + 0.1 \, \epsilon_{0,i}$,   
    $O_i(1)=\max\{0, \mu_1(X_i) + \sigma_1(X_i) \, \epsilon_{1,0}\}$, where $\sigma_1(\cdot)$ is as in Setting 1, and
    $
    \mu_0(x) = {2}/{(1+e^{-3x_1-0.5})(1+e^{-3x_2-0.5})}$, $
    \mu_1(x)=0.1+1.5\, \mu_0(x)$. 
\end{itemize}
Above, the variables ($\epsilon_{0,i},\epsilon_{1,i}$) are i.i.d.~with $\mathcal{N}(0,1)$ as marginals.
We consider three coupling scenarios: 
(i) independent, in which $\epsilon_{0,i}$ and $\epsilon_{1,i}$  are  independent; 
(ii) negative, in which $\epsilon_{1,i}=-\epsilon_{0,i}$; 
(iii) positive, in which  $\epsilon_{1,i} =\epsilon_{0,i}$. 
We generate treatment indicators via 
$T_i\sim \textrm{Bern}(e(X_i))$ independently, 
where $e(x)= (1+\textrm{Beta}_{2,4}(x_1))/{4}$, 
and $\textrm{Beta}_{2,4}(\cdot)$ is the CDF
of the Beta distribution with shape parameters $(2,4)$. 
Finally, we set those $T_i=1$ as the calibration data ($n=250$), 
and those $T_i=0$ as the test sample ($m=100$); 
as training data we select $n_\train=750$ units, 
which are used to construct four  conformity scores:
\begin{itemize}
    \item \texttt{reg}: $V(x,y) = y-\hat\mu(x)$, where $\hat\mu(x)$ is an estimate of  $\mu_1(x)$ using regression forest from the \texttt{grf} R-package~\cite{athey2019generalized}.
    \item \texttt{cdf}: $V(x,y) = \hat{F}(x,y)$ \cite{chernozhukov2021distributional}, where $\hat{F}(x,y)$ is an estimate of  $\pcausal(O(1)\leq y\given X=x)$. Here, we fit $\hat{F}$ via 
    inverting quantile regression forests from  \texttt{grf}.
    \item \texttt{oracle}: $V(x,y) = \pcausal(O(1)\leq y\given X=x)$, which is not computable from data and is shown only for illustration. 
    \item \texttt{cqr}: $V(x,y) =  y -\hat{q}_\beta(x)$~\cite{romano2019conformalized}, where 
    $\hat{q}_\beta(x)$ is an estimate of the $\beta$-th quantile of $\pcausal_{O(1)\given X=x}$, $\beta\in\{0.2,0.5,0.8\}$, from the quantile regression forests in \texttt{grf}. 
\end{itemize}

After generating the three folds of data, we fit  $V(\cdot,\cdot)$ on 
the training fold, and run 
four procedures on the calibration and training folds: 
BH with weighted conformal p-values in~\eqref{eq:def_wcpval_rand} (named \texttt{WBH} 
in the plot), 
and  Algorithm~\ref{alg:bh} with three pruning options 
(named \texttt{WCS.hete}, \texttt{WCS.homo}, \texttt{WCS.dtm}, respectively).   
The experiment is repeated for $N=1000$ independent runs. 

Among the three, setting 2 is the most challenging:  
$10\%$ of the samples are with slightly negative ITEs; 
they are likely to be falsely rejected 
since the selection procedure ignores the coupling 
(it may reject a sample with a small value of $O(0)$ 
even if $O(1)$ is smaller). 
We expect setting 1 to be the least challenging 
as there is a strong signal in $O(1)$. 
Setting 3 is more challenging than setting 1 since $O(1)$ and $O(0)$ are close 
to each other as the regression functions are similar.  

\begin{figure}
\centering
\includegraphics[width=6.5in]{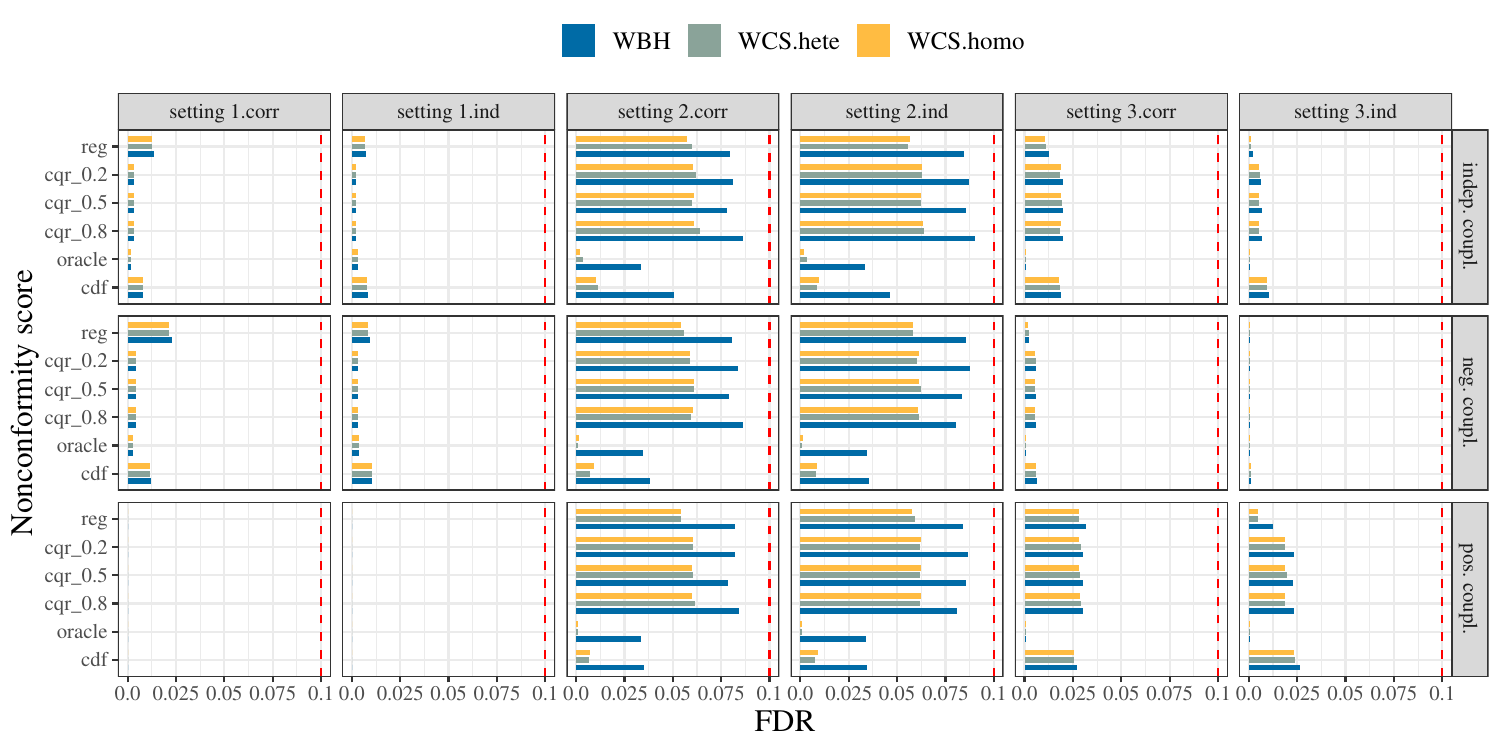} 
\caption{Empirical FDR for finding positive ITEs. 
The columns represent three settings with either independent (\texttt{ind}) 
or correlated (\texttt{corr}) covariates. Each row corresponds to a 
coupling of potential outcomes (positive, negative, and no coupling). 
In each subplot, the $y$-axis is the nonconformity score, and the $x$-axis is 
the empirical FDR. 
The red dashed lines indicate the nominal level $q=0.10$.}
\label{fig:ite_simu_fdr} 
\end{figure}

\paragraph{FDR control.} 
The empirical FDR in all settings is plotted in Figure~\ref{fig:ite_simu_fdr}. 
We observe FDR control for all methods, including the BH procedure. 
(We omit 
the results of $\cR_{\dtm}$ because it does not succeed in making any selection.)
Across settings with the same regression function, 
those with correlated covariates often have higher FDR.
The realized FDR also varies with the scores:  
\texttt{oracle} incurs very low FDR,  
while its empirical counterpart \texttt{cdf} has higher FDR. 
On the other hand, quantile regression based scores are robust to the quantile level $\beta$ in the sense that \texttt{cqr} achieves similar FDR 
for all choices of $\beta$. 

The impact of coupling on FDR is less consistent across settings. The positive coupling from setting 1 yields a low FDR, perhaps because it is difficult to obtain sufficiently strong evidence 
which leads to fewer rejections (see the power analysis below). 
The three couplings yield similar FDRs for setting 2. 
For setting 3, positive coupling leads to the highest FDR.


 
\paragraph{Power.} The empirical 
power is defined as 
$
\textrm{Power} := 
\EE\Big[\frac{\sum_{j=1}^m\ind\{j\in \cR,O_{n+j}(1)>O_{n+j}(0)\}}{\sum_{j=1}^m \ind\{O_{n+j}(1)>O_{n+j}(0)\}}\Big].
$ 
As we see in  Figure~\ref{fig:ite_simu_power}, 
the power is in general higher when the covariates are correlated.  The ITE depends on $(X_1,X_2)$, and in our design, higher correlation between 
the first two entries leads to higher power. 
The power also varies with the nonconformity scores. 
The regression-based \texttt{reg} leads to lower power than 
other methods. This means that regression functions may be underpowered
in capturing the essential information for contrasting 
the distributions of two potential outcomes. 
Quantile-regression based methods (\texttt{cqr}) achieve 
similar power for different choices of $\beta$ in all settings. 
Finally, fitting the conditional distribution function 
(\texttt{cdf}) is the most powerful option in settings 1 and 3, 
but is far less powerful in setting 2. 
Its oracle version (\texttt{oracle}) is also surprisingly less powerful; we observe that the oracle cdf cannot create sufficiently small p-values, while 
the estimated one is able to do so due to fluctuations caused by estimation uncertainty. 

\begin{figure} 
\centering
\includegraphics[width=6.5in]{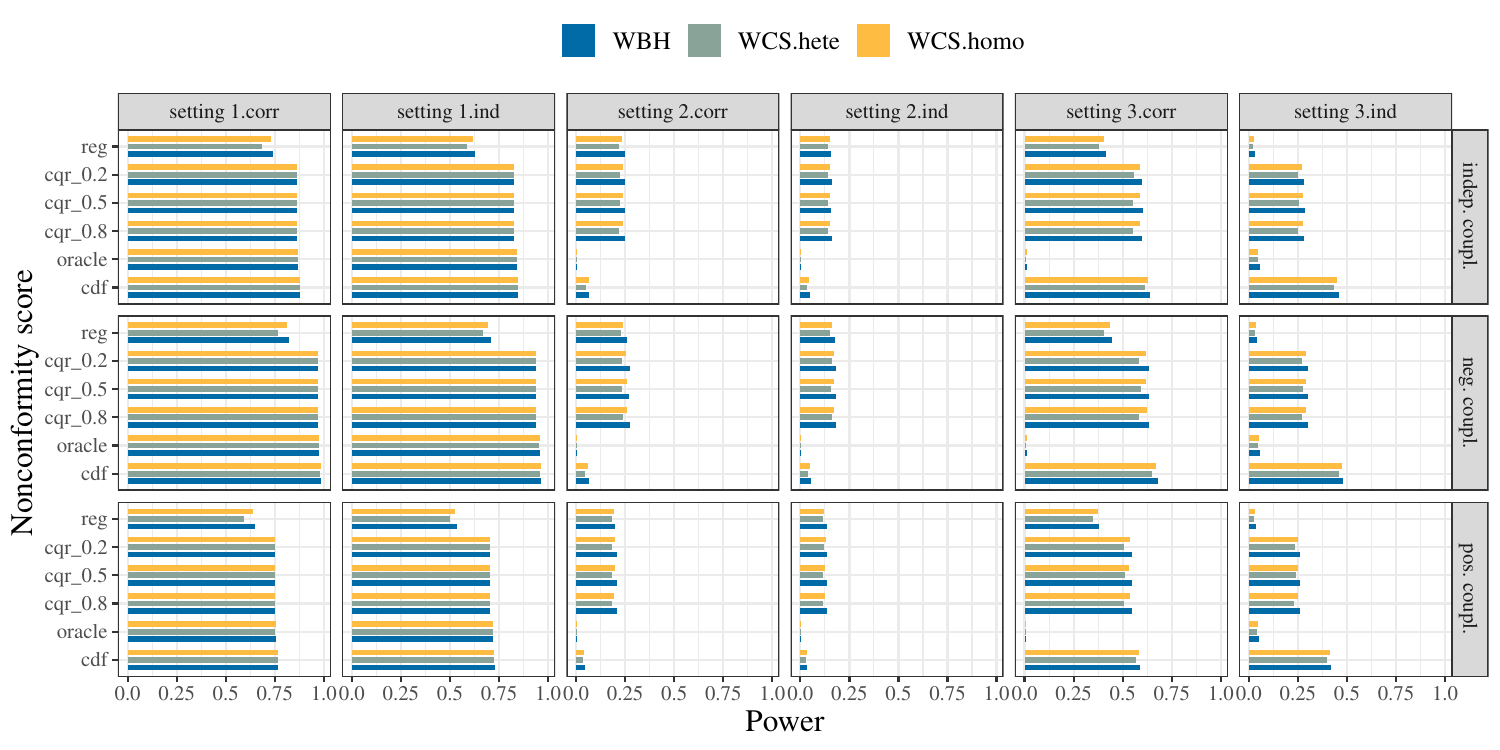} 
\caption{Empirical power for finding positive ITEs. Everything else is  as in Figure~\ref{fig:ite_simu_fdr}.}\label{fig:ite_simu_power}
\end{figure}

The impact of coupling on the power is consistent across settings. 
In general, negative coupling leads to the highest power, while 
positive coupling leads to the lowest. 
This is because the conformal p-values compare the 
observed outcome to the \emph{marginal} distribution of the counterfactuals, 
without accounting for their joint distribution. 
Thus, when $O_{n+j}(0)$ is extremely small, i.e., when we see a small p-value and 
 $j\in \cR$, 
under negative coupling, it is more probable that $O_{n+j}(1)$ 
is relatively large, hence leading to a true discovery. 
On the contrary, under positive coupling, 
a large $O_{n+j}(1)$ often corresponds to a relatively large $O_{n+j}(0)$ 
and hence a large conformal p-value, which may not be selected. 

Finally, $\cR_{\dtm}$ makes no selection. 
This may be due to the threshold effect:  
since the selection requires $p_j\leq q|\hat\cR_{j\to 0}|/m$ 
for test samples $j$ with the smallest $|\hat\cR_{j\to 0}|$, the pruning step may 
exclude too many candidates.  We  recommend using 
$\cR_{\hete}$ and $\cR_{\homo}$ in practice.

\vspace{-1em}
\paragraph{Stability.} As aforementioned in Section~\ref{sec:method}, 
our method introduces 
extra randomness. It is shown in Proposition~\ref{prop:asymp_equiv} 
that this becomes asymptotically negligible, 
so that our selection set is asymptotically the same as that the BH procedure yields. 
We empirically evaluate the discrepancy in the selections, namely, $|\cR\Delta \cR_{\bh}|/|\cR_\bh|$, 
for $\cR$ being either $\cR_{\homo}$ or $\cR_{\hete}$. The results are shown in 
Figure~\ref{fig:ite_simu_dif}, confirming our theory. 
Though we were not able to prove the asymptotic equivalence of $\cR_{\hete}$ and 
$\cR_{\bh}$  
 in the $m,n\to \infty$ 
regime,  
the discrepancy is still small (despite being 
larger than that of $\cR_{\homo}$). All in all, 
the impact of additional randomness seems acceptable in practice. 

\begin{figure}[htbp]
    \centering
    \includegraphics[width=6.5in]{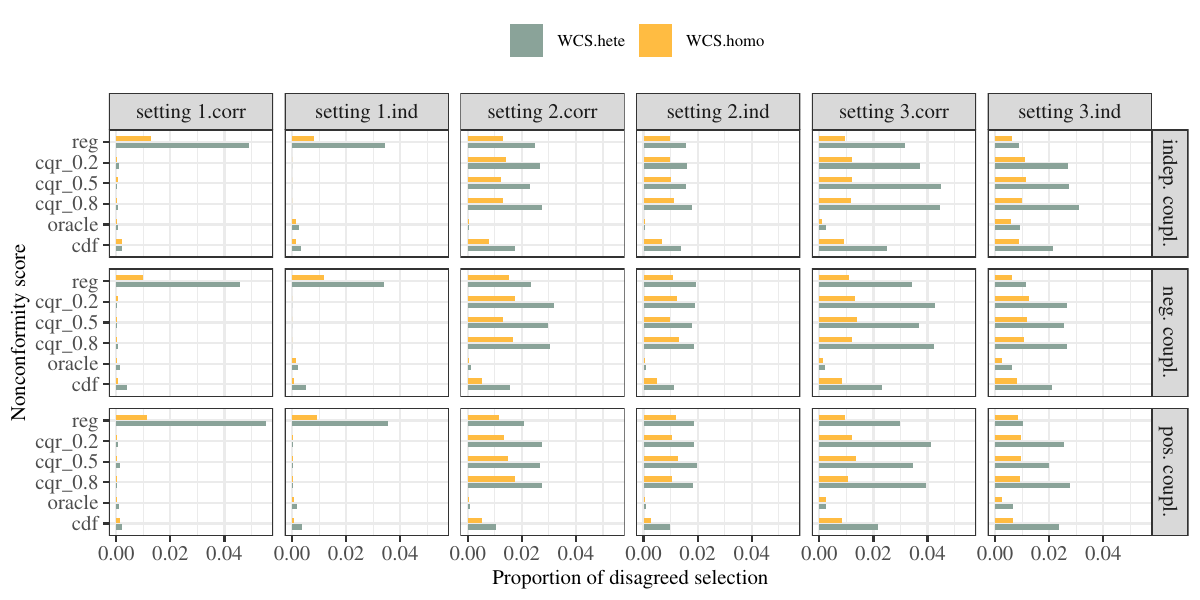} 
    \caption{Discrepancy between selection  rules $|\cR\Delta \cR_{\bh}|/|\cR_\bh|$. Everything else is 
    as in Figure~\ref{fig:ite_simu_fdr}.}\label{fig:ite_simu_dif}
\end{figure}

The discrepancy  for $\cR_{\dtm}$ from $\cR_{\bh}$ is large in general is $\cR_{\dtm}$ is usually an empty set, and hence we did not plot it. 
We conjecture that the stability of $\cR_{\dtm}$ 
claimed in case (i) from Proposition~\ref{prop:asymp_equiv}
may not necessarily apply to the moderately large sample sizes $m=100$ and $n=250$, 
and  $\cR_{\dtm}$ may be 
far from $\cR_{\bh}$ for such sample sizes. 
In such cases, we recommend using $\cR_\homo$ or 
$\cR_{\hete}$ for better stability and power.

\subsection{Real data analysis}

We revisit the NSLM observational dataset from~\cite{carvalho2019assessing} 
and use our method to detect multiple positive individual treatment effects. 
It is a semi-synthetic observational dataset
curated from a real randomized experiment, 
also analyzed in \cite{lei2021conformal,jin2023sensitivity}.

Figure~2 in~\cite{jin2023sensitivity} plots 
the $\Gamma$-value, a measure of robustness of positive ITEs against 
unmeasured confounding, versus individual covariates. 
While units with larger $\Gamma$-values are of natural interest, 
properties of inference hold on average, instead of 
over \emph{selected} units (e.g.~those with high $\Gamma$-values). 
We are to produce a similar plot (Figure~\ref{fig:ite_pval}) 
for detecting positive ITEs, 
with guarantees over units that exhibit the strongest evidence. 
(We do not consider the confounding issue, 
which may need
nontrivial extension of our techniques.)

We randomly subsample three disjoint folds of size $5000$, $1000$, 
and $391$ from the original dataset. 
The first is the training fold, and consists of both 
treated and control units. 
The calibration data consists of all the $n = 997$ treated units 
in the second fold. 
The test data  
consists of all the $m = 256$ control units
in the last fold. 
To deploy our  procedure, we first train a propensity score model $\hat{e}(\cdot)$ using a 
regression forest from the \texttt{grf} R-package, and set  
$\hat{w}(x) = (1-\hat{e}(x))/{\hat{e}(x)}$. 
We then use a quantile regression forest from \texttt{grf} 
on the training data to fit $\hat{q}_{0.5}(x)$, 
the conditional median of $O(1) \mid X$, 
and set  
$V(x,y) = y - \hat{q}_{0.5}(x)$. 
Finally, we apply Algorithm~\ref{alg:bh} as 
in Theorem~\ref{thm:calib_ite} to test for positive ITEs. 

Figure~\ref{fig:ite_cdf} plots the 
empirical CDF of the 
weighted conformal p-values~\eqref{eq:weighted_pval} 
computed with $c_{n+j} := O_{n+j}(0)$ on the test samples. 
These p-values are stochastically smaller than Unif([0,1]), 
showing evidence for positive treatment effects in general. 
Yet, to identify positive ITEs with rigorous error control 
we must apply multiple testing ideas here. 

\begin{figure}[h]
    \centering 
    \includegraphics[width=2.5in]{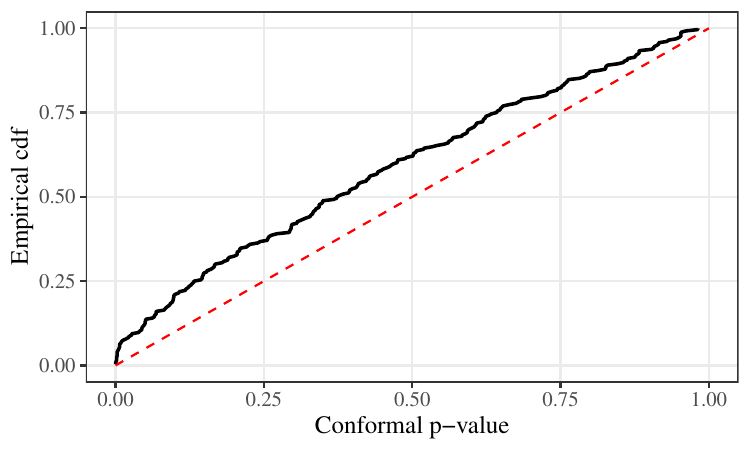}
    \caption{Empirical CDF~of the weighted conformal p-values for 
    detecting positive ITEs.}
\label{fig:ite_cdf}
\end{figure}

We observe $\cR_{\bh} = \cR_{\homo} = \cR_{\hete}$
while $\cR_{\dtm}=\varnothing$ at  FDR levels $q\in\{0.1,0.2,0.5\}$. 
Figure~\ref{fig:ite_pval}
plots the weighted conformal p-values 
versus school achievement levels of test units 
(each dot represents a test unit) with red dots (from light to dark) 
identified as positive ITEs at various 
FDR levels. 
Students in schools with moderate 
achievement levels demonstrate the strongest evidence of benefiting from 
the treatment. 

\begin{figure}[H]
\centering 
\includegraphics[width=6in]{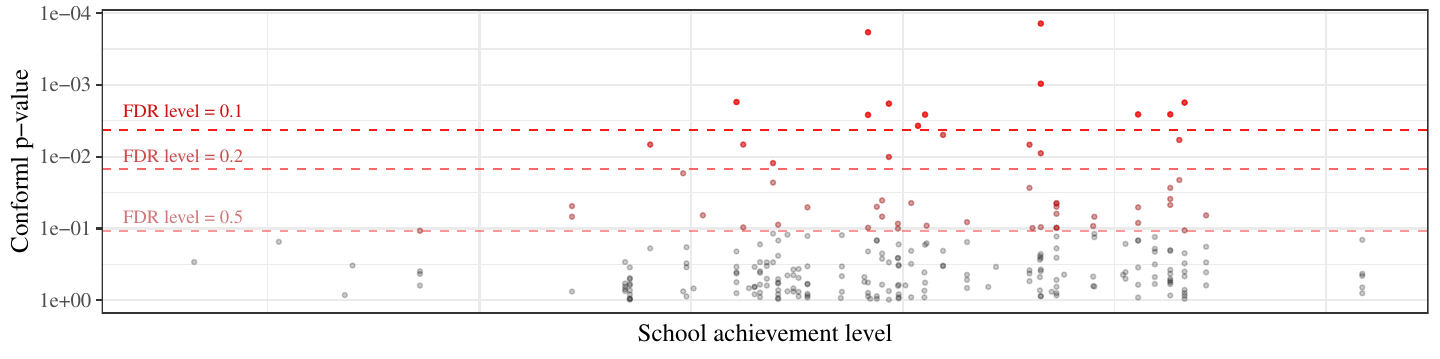}
\caption{P-values and rejection sets for detecting positive ITEs among control units.}
\label{fig:ite_pval}
\end{figure}

\section{Application to outlier detection}
\label{sec:outlier}

Finally, we study the extension of our framework 
for outlier detection, where the calibration inliers may follow 
a distinct distribution compared with test inliers. We discuss a new setup and apply 
a variant of Algorithm~\ref{alg:bh} to bank marketing data.

\subsection{Hypothesis-conditional FDR control}
\label{subsec:hypo_cond_fdr}

Assuming access to i.i.d.~training data $\{  Z_i  \}_{i=1}^n $, 
we consider a set of test data $\{ Z_{n+j}  \}_{j=1}^m $ 
for which the $Z_{n+j}$'s may only be partially observed (e.g., if $Z=(X,Y)$ 
we would observe the features $X$ but not the response $Y$).  
We are interested in some null hypotheses $\{H_j \}_{j=1}^m $ 
associated with the test samples. 
As before, whether $H_j$ is true 
can be a random event depending on $Z_{n+j}$; this includes our previous problem 
with $H_j\colon Y_{n+j}\leq c_{n+j}$.

\subsubsection{Hypothesis-conditional covariate shift}
\label{subsec:outlier}

\begin{assumption}[Hypothesis-conditional covariate shift] 
  \label{assump:label_conditional}
$\{Z_i\}_{i=1}^n\cup\{Z_{n+j}\}_{j=1}^m$ are mutually independent. 
Also, conditional on the subset $\cH_0\subset\{1,\dots,m\}$ 
of all null hypotheses, it holds that $\{Z_i\}_{i=1}^n\iid P$, 
and $\{Z_{n+j}\}_{j\in \cH_0} \iid Q$, where $ \ud Q/\ud P (Z)=w(X)$ 
for some function $w\colon \cX\to \RR^+$ and $X\subseteq Z$. 
\end{assumption}

A special case of 
problem~\eqref{eq:fdr}  
obeys these assumptions. 

\begin{example}[Binary classification]\normalfont 
\label{ex:binary}
Set $Z=(X,Y)$. where $Y\in\{0,1\}$ is a binary response. 
Suppose the goal is to find positive $Y_{n+j}$, 
e.g.~an active drug or a qualified candidate. We only observe the covariates $\{X_{n+j}\}_{j=1}^m$ for the test samples
$\{(X_{n+j},Y_{n+j})\}_{j=1}^m\iid Q$. 
Consider a reference dataset 
that only preserves positive samples among a set of i.i.d.~data 
from a covariate shifted distribution $P$ ; that is,  
$\cD_\calib = \{Z_i\colon Y_{i}=0\}$ where $\{(X_i,Y_i)\}\iid P$. 
Although the (super-population) covariate shift~\eqref{eq:cov_shift} no longer applies, 
conditional on $\cH_0 = \{j\colon Y_{n+j}=0\}$,  
it still holds that $Z_i\iid P_{Z\given Y=0}$ for $Z_i\in \cD_\calib$, and 
$Z_{n+j}\iid Q_{Z\given Y=0}$ for $j\in \cH_0$.   
\end{example}

The above example also applies to 
candidate screening 
with constant thresholds 
$c_{n+j} \equiv c_0$.  
Setting $\tilde{Y} = \ind\{Y > c_0\}\in\{0,1\}$ 
and $\cD_{\calib} = \{Z_i\colon Y_i\leq c_0\}$, all 
arguments apply similarly to $\tilde{Z}=(X,\tilde{Y})$. 
However, this does not necessarily apply when the  thresholds $c_{n+j}$ are random variables (especially when no 'threshold $c_j$' is observed in the calibration data, such as in the 
counterfactual inference problem we will study later).

Another application is outlier detection under covariate shifts. 

\begin{example}[Outlier detection]\normalfont
We revisit outlier detection~\cite{bates2021testing} 
while allowing for identifiable covariate shifts between the calibration inliers 
and the test inliers. 
Given $\{Z_i\}_{i=1}^n$ drawn i.i.d.~from 
an unknown distribution $P$ 
and a set of test data $\{Z_{n+j}\}_{j=1}^m$, 
we assume the inliers in 
the test data are i.i.d.~from a distribution $Q$ 
with $\ud Q/\ud P(Z)=w(Z)$ for a known function $w$, while allowing 
outliers to be from arbitrary distributions. 
The covariate shift may happen, for example, when 
inliers \emph{were} from $Q$ but 
the calibration set is selected with preferences 
relying on $z$: for instance, 
one may include more female users to balance the gender distribution
when curating 
a reference panel of normal transactions (inliers). 
In this case, $\cH_0 = \{j\colon Z_{n+j}\sim Q\}$ 
is a deterministic set, and Assumption~\ref{assump:label_conditional} clearly holds. 
\end{example}

The outlier detection example is closely related to 
identifying concept drifts. 

\begin{example}[Concept shift detection]\normalfont
Letting $Z=(X,Y)$ where $X\in \cX$ is the family of covariates and 
$Y$ is the response, concept shift 
focuses on potential changes in the conditional distribution of $Y$ given $X$. 
Given calibration data $\{Z_i\}_{i=1}^n\iid P$ 
and independent test data $\{Z_{n+j}\}_{j=1}^m$,  
\cite{hu2020distribution} 
assume $\{Z_{n+j}\}_{j=1}^m \iid Q$,  
and test for the global null $H_0\colon \ud Q/\ud P(Z) = w(X)$ 
for some $w\colon \cX\to \RR^+$. They achieve this by combining independent p-values 
after sample splitting.  
Our framework can be used to test 
individual concept drifts with dependent p-values. For instance, 
assume $\{X_{n+j}\}_{j=1}^m\iid Q_X$ for some unknown (but estimable) $Q_X$, 
we can test whether 
$P_{Y_{n+j}\given X_{n+j}} = P_{Y\given X}$. 
The null hypotheses can be formulated as 
$H_j\colon Z_{n+j}\sim Q$, where $\ud Q/\ud P(Z)=w(X)$ 
for some  $w\colon \cX\to \RR^+$ that is either known 
or can be estimated well under proper conditions.  
\end{example}

\subsubsection{Multiple testing procedure}

Our procedure for outlier detection under covariate shift is detailed 
in Algorithm~\ref{alg:bh_cond}. This 
slightly modifies Algorithm~\ref{alg:bh} 
by removing the thresholds (note differences in lines 1, 2, and 4). 
In the classification or constant threshold problem 
(Example~\ref{ex:binary}), it suffices to set $Z=X$ and leave out $Y$. 

\begin{algorithm}[h]
  \caption{Hypothesis-conditional Weighted Conformalized Selection}\label{alg:bh_cond}
  \begin{algorithmic}[1]
  \REQUIRE Calibration data $\{Z_i\}_{i=1}^n$, 
  test data  $\{Z_{n+j}\}_{j=1}^m$,  
  weight function $w(\cdot)$,
  FDR target $q\in(0,1)$, monotone nonconformity score $V\colon \cX\times\cY\to \RR$, 
  pruning method $\in\{\texttt{hete}, \texttt{homo}, \texttt{dtm}\}$.
  \vspace{0.05in} 
  \STATE Compute $V_i = V(Z_i)$ for $i=1,\dots,n$,  
  and $ {V}_{n+j}= V(Z_{n+j})$ for $j=1,\dots,m$.
  \STATE Construct weighted conformal p-values $\{ p_j\}_{j=1}^m$ as in~\eqref{eq:weighted_pval} with $\hat{V}_{n+j}$ replaced by $V_{n+j}$. 

  \vspace{0.3em}
  \noindent \texttt{- First-step selection -}
  \FOR{$j=1,\dots,m$}
  \STATE Compute p-values $\{ {p}_\ell^{(j)}\}$ as in~\eqref{eq:mod_pval} with $\hat{V}_{n+\ell}$ replaced by $V_{n+\ell}$ for all $\ell=1,\dots,m$.
  \STATE (BH procedure) Compute $k^*_j = \max\big\{k\colon 1 +\sum_{\ell\neq j} \ind\{{p}_\ell^{(j)}\leq qk/m\}\geq k\big\}$. 
  \STATE Compute $\hat{\cR}_{j\to 0} = \{j\}\cup\{\ell \neq j\colon  {p}_\ell^{(j)}\leq q k^*_j /m\}$.
  \ENDFOR
  \STATE Compute the first-step selection set $\cR^{(1)} = \{j\colon  {p}_j \leq q|\hat\cR_{j\to 0}|/m\}$.
  
  \vspace{0.3em}
  \noindent \texttt{- Second-step selection -}
  \STATE Compute $\cR = \cR_{\textrm{hete}}$  
  or $\cR = \cR_{\textrm{homo}}$  
  or $\cR = \cR_{\textrm{dtm}}$ as in Algorithm~\ref{alg:bh}. 
  \vspace{0.05in}
  \ENSURE Selection set $\cR$.
  \end{algorithmic}
\end{algorithm}

Algorithm~\ref{alg:bh_cond} returns to 
the conventional perspective, where the null set is 
deterministic, and the null p-values 
are dominated by Unif$([0,1])$. That is, for $p_j$ constructed 
in Line 2 of Algorithm~\ref{alg:bh_cond}, it holds that 
\$
\PP(p_j \leq t \given j\in \cH_0 ) \leq t\quad \textrm{for all }t\in[0,1]. 
\$
After conditioning on $\cH_0$, 
we no longer need to deal with the randomness of 
the hypotheses  and their interaction with the p-values. 
The only issue is the mutual dependence among the p-values, 
which is addressed using a similar idea as in our theoretical analysis 
of Algorithm~\ref{alg:bh}. 

Using calibration data obeying the covariate shift assumption, Algorithm~\ref{alg:bh_cond} achieves 
a slightly stronger  hypotheses-conditional FDR control. 
The proof of Theorem~\ref{thm:fdr_cond} is in Appendix~\ref{app:thm_outlier}. 
\begin{theorem}\label{thm:fdr_cond}
  Under Assumption~\ref{assump:label_conditional}, Algorithm~\ref{alg:bh_cond} yields 
  \$
\EE\bigg[  \frac{ |\cR\cap \cH_0| }{1\vee |\cR|} \bigggiven \cH_0 \bigg]   \leq q\cdot \frac{|\cH_0|}{m}
\$
for any fixed $q\in(0,1)$, and each  $\cR \in \{\cR_{\homo},\cR_{\hete}, \cR_{\dtm}\}$.
\end{theorem}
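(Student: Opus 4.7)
The plan is to reuse the three-step argument employed to prove Theorem~\ref{thm:calib_ite} (randomization reduction, leave-one-out, conditional independence) with two simplifications and one small twist. The first simplification is that Algorithm~\ref{alg:bh_cond} uses the scores $V_{n+j}=V(Z_{n+j})$ directly rather than a thresholded score $V(X_{n+j},c_{n+j})$; hence the ``oracle'' p-values $\bar p_j$ and $\bar p_\ell^{(j)}$ introduced purely as proof devices in Theorem~\ref{thm:calib_ite} coincide with the actual p-values from the outset, and the leave-one-out step collapses. The second simplification is that the null set $\cH_0$ is deterministic, so we condition on it throughout and need only bound a conditional FDR, which replaces the mixed statement~\eqref{eq:general_pvalue} with a classical conditional uniform-domination for null indices. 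The twist is just the factor $|\cH_0|/m$, which arises because only $j\in\cH_0$ contribute to the numerator.

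For the first step, I would decompose
\[
\EE\bigg[\frac{|\cR\cap\cH_0|}{1\vee|\cR|}\bigggiven\cH_0\bigg]=\sum_{j\in\cH_0}\EE\bigg[\frac{\ind\{j\in\cR\}}{1\vee|\cR|}\bigggiven\cH_0\bigg],
\]
and then invoke the pruning analysis from the proof of Theorem~\ref{thm:calib_ite}. Whether $\cR$ is $\cR_\hete$, $\cR_\homo$, or $\cR_\dtm$, that analysis delivers the per-index bound $\ind\{j\in\cR\}/(1\vee|\cR|)\leq \ind\{p_j\leq q|\hat\cR_{j\to 0}|/m\}/|\hat\cR_{j\to 0}|$ through the e-value representation $e_j=\ind\{p_j\leq q|\hat\cR_{j\to 0}|/m\}\cdot m/(q|\hat\cR_{j\to 0}|)$ and an eBH-type inequality. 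Nothing in that pruning argument references $\cH_0$ or the joint distribution, so it transfers verbatim.

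For the second step, fix $j\in\cH_0$ and let $\cZ_j=[Z_1,\ldots,Z_n,Z_{n+j}]$ denote the unordered multiset. Direct inspection of the formula~\eqref{eq:mod_pval} (with $\hat V$ replaced by $V$) shows that $p_\ell^{(j)}$ depends on $\cZ_j$ and $V_{n+\ell}$ alone, since the $n+1$ entries of $\cZ_j$ enter symmetrically through a weighted sum; consequently $\hat\cR_{j\to 0}$ is measurable with respect to $\cZ_j\cup\{Z_{n+\ell}\}_{\ell\neq j}$. Under Assumption~\ref{assump:label_conditional}, the samples $\{Z_{n+\ell}\}_{\ell\neq j}$ are independent of $Z_{n+j}$ given $\cH_0$, and the $n+1$ elements of $\cZ_j$ are weighted exchangeable with weight $w$. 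These two facts yield (i) the conditional independence $p_j\indep\hat\cR_{j\to 0}\mid \cZ_j$, and (ii) the weighted-conformal domination $\PP(p_j\leq t\mid \cZ_j)\leq t$ for all $t\in[0,1]$. Conditioning on $\cZ_j$ and $\{Z_{n+\ell}\}_{\ell\neq j}$ fixes $\hat\cR_{j\to 0}$, so each summand is bounded by $q/m$, and summing over $j\in\cH_0$ produces the stated $q|\cH_0|/m$.

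The only real bookkeeping point is in the second step: verifying from~\eqref{eq:mod_pval} that $p_\ell^{(j)}$ sees only the unordered multiset $\cZ_j$ (together with $V_{n+\ell}$), so that $\hat\cR_{j\to 0}$ is insulated from the position of $Z_{n+j}$ within $\cZ_j$ while that very position drives the randomness of $p_j$. Once this measurability observation is in place, the remainder is standard weighted conformal calibration, and I do not expect a substantive obstacle.
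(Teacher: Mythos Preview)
Your proposal is correct and follows essentially the same approach as the paper's proof: the paper first establishes the reduction lemma (Lemma~\ref{lem:fdr_out_reduce}, the direct analogue of Lemma~\ref{lem:fdr_ite_reduce}) to pass from $\fdr$ to $\sum_{j\in\cH_0}\EE[\ind\{p_j\leq s_j\}/|\hat\cR_{j\to 0}|\mid\cH_0]$, then uses the same measurability observation---that $\{p_\ell^{(j)}\}_{\ell\neq j}$ depends only on the unordered set $\cZ_j$ and $\{Z_{n+\ell}\}_{\ell\neq j}$---together with weighted exchangeability to conclude. One small clarification: the ``per-index bound'' you state is not pointwise for $\cR_\hete$ or $\cR_\homo$ (it only holds after integrating out $\xi_j$ or $\xi$), but since you reference the pruning analysis and eBH representation from Theorem~\ref{thm:calib_ite} this is just a phrasing issue, not a gap.
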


\subsubsection{Comparison with Algorithm~\ref{alg:bh}}
\label{subsubsec:compare}

In binary classification, or more generally, 
WCS with a constant threshold, 
we have shown in Example~\ref{ex:binary} that 
Algorithm~\ref{alg:bh_cond} yields FDR control.  
In this case, Algorithms~\ref{alg:bh} and \ref{alg:bh_cond} differ 
in terms of (i) power, and (ii) distributional assumptions. 
We elaborate on these distinctions.

First, Algorithm~\ref{alg:bh_cond}
only uses a subset of calibration data to construct p-values, 
which leads to a power loss 
for specific choices of nonconformity scores;  
see~\cite[Appendix A.1]{jin2022selection} for the i.i.d.~case.  
Let us consider the binary setting. 
Suppose we have access to a set of calibration data $\{(X_i,Y_i)\}$ 
consisting of both $Y=1$ and $Y=0$ samples. 
As discussed in Example~\ref{ex:binary}, 
Assumption~\ref{alg:bh_cond} holds if we only use data in 
the subset 
$\cI_0 = \{i\colon Y_i=0\}$ as $\cD_\calib$ in Algorithm~\ref{alg:bh_cond}. In contrast,  
Algorithm~\ref{alg:bh} uses all data points.  
Suppose we set $V(x,y) = My - \hat\mu(x)$ 
and $c_{n+j} \equiv 0$ in Algorithm~\ref{alg:bh}, 
where  $\hat\mu(\cdot)$ is  a fitted point prediction, and 
$M>2\sup_{x\in\cX}|\hat\mu(x)|$ is a sufficiently  large constant. 
Similarly, we set $V(x)=M-\hat\mu(x)$ in Algorithm~\ref{alg:bh_cond}. 
This construction ensures  
\#\label{eq:monotone_V}
\inf_{x\in \cX} V(x,1) = M-\sup_{x\in \cX}\hat\mu(x) 
> \sup_{x\in \cX}|\hat\mu(x)| \geq \sup_{x\in \cX} V(x,0).
\#
Theorems~\ref{thm:calib_ite} 
and~\ref{thm:fdr_cond} state that the FDR is controlled 
for both approaches. 
However, letting $p_j$ denote the p-values 
constructed in Algorithm~\ref{alg:bh} and $p_j'$ denote those in 
Algorithm~\ref{alg:bh_cond}, we note that 
\$
p_j &= 
\frac{\sum_{i\in \cI_0} w(X_i)\ind{\{V(X_i,0) < V(X_{n+j},0) \}} 
+  w(X_{n+j}) }{\sum_{i=1}^n w(X_i) + w(X_{n+j})} 
+ \frac{ \sum_{i\in \cI_1} w(X_i)\ind{\{V(X_i,1) < V(X_{n+j},0) \}}  }{\sum_{i=1}^n w(X_i) + w(X_{n+j})}  \\
&=\frac{\sum_{i\in \cI_0} w(X_i)\ind{\{V(X_i,0) < V(X_{n+j},0) \}}   +  w(X_{n+j}) }{\sum_{i=1}^n w(X_i) + w(X_{n+j})} \\ 
&< \frac{\sum_{i\in \cI_0} w(X_i)\ind{\{V(X_i,0) < V(X_{n+j},0) \}}   +  w(X_{n+j}) }{\sum_{i\in \cI_0} w(X_i) + w(X_{n+j})} = p_j',
\$
where the second lines uses~\eqref{eq:monotone_V}. 
That is, with this   nonconformity score (which is shown 
in~\cite{jin2022selection}
to be powerful), the p-values constructed in Algorithm~\ref{alg:bh}
is strictly  smaller than those in Algorithm~\ref{alg:bh_cond}, 
leading to larger rejection sets and higher power. 
Furthermore, in this case, 
\$
\frac{p_j}{p_j'} = \frac{\sum_{i\in \cI_0} w(X_i) + w(X_{n+j})}{\sum_{i=1}^n w(X_i) + w(X_{n+j})}
\$
is the weighted proportion of negative calibration samples. 
Thus, the power gain of Algorithm~\ref{alg:bh} is more significant 
when there are more positive samples in the test distribution. 

Second, Algorithm~\ref{alg:bh_cond} is suitable for dealing with 
imbalanced data, such as those encountered in drug discovery. 
For instance, after selecting a subset of molecules or compounds  for virtual screening, 
the experimenter may discard a few samples with $Y=0$ as she would deem them as uninteresting. 
This bias would make Algorithm~\ref{alg:bh} inapplicable. 
However, if the decision to discard or not
does not depend on $X$, 
the  shift between $P_{X\given Y=0}$ 
and $Q_{X\given Y=0}$ remains the same as
the covariate shift incurred by selection into screening, 
and Algorithm~\ref{alg:bh_cond} still provides reliable selection.

We finally illustrate the hypothesis-conditional variant 
from Algorithm~\ref{alg:bh_cond} on outlier detection tasks, 
where there is a covariate shift between calibration inliers 
and test inliers. We focus on scenarios where the covariate shift 
is known. For instance, if the demographics in normal financial transactions 
are adjusted by rejection sampling 
to balance between male  and female, urban and rural users, and so on, then 
the covariate shift is given by the sampling weights in the adjustment.\footnote{Results in this section can be reproduced at \href{https://github.com/ying531/conformal-selection}{https://github.com/ying531/conformal-selection}.}

\subsection{Simulation studies}

We adapt the simulation setting in~\cite{bates2021testing}
to a scenario with covariate shift induced by rejection sampling.   
We fix a test sample size $n_{\test} = 1000$ and 
calibration sample size $n_\calib=1000$. 

At the beginning of the experiment, we sample a subset $\cW\subseteq \RR^{50}$ with $|\cW|=50$, where each element in $\cW$ is 
independently drawn from $\textrm{Unif}([-3,3]^{50})$, 
and hold it as fixed afterwards. 
Fix a proportion of outliers at $\rho_{\texttt{pop}}\in \{0.1,0.2,\dots,0.5\}$. 
The number of outliers in the test data 
is $n_\test \cdot \rho_{\texttt{pop}}$, 
where each of them is i.i.d.~generated as 
$X_{n+j}=\sqrt{a}V_{n+j}+W_{n+j}$ 
for signal strength $a$ varying in the range $[1,4]$, 
$V_{n+j}\sim N(0,I_{50})$, and $W_{n+j}\sim \textrm{Unif}(\cW)$.  
Following this, the test inliers are i.i.d.~generated as 
$X_{n+j}= V_{n+j}+W_{n+j}$, whose distribution is denoted as $Q_X$. 
The calibration inliers are i.i.d.~generated from
$P_X$ with $\ud Q_X/\ud P_X(x)=w(x)\propto \sigma(x^\top\theta)$ ($\sigma(\cdot)$ is the sigmoid), where $\theta\in \RR^{50}$ and $\theta_j=0.1\cdot\ind\{j\leq 5\}$. 
We also generate $n_{\train}=1000$ training sample from $P_X$. This setting mimics a stylized scenario 
where the calibration inliers are collected in a way such 
that the preference is prescribed by a logistic function of $X$, 
and is known to the data analyst. 

We train a one-class SVM $\hat\mu(x)$ with \texttt{rbf} kernel using 
the \texttt{scikit-learn} Python library, 
and apply Algorithm~\ref{alg:bh_cond} with $Z_i=X_i$ at 
FDR level $q=0.1$
using all the three pruning methods. 
All procedures are repeated $N=1000$ times, 
and the FDPs and proportions of true discoveries are averaged 
to estimate the FDR and power. We also evaluate the BH 
procedure applied to p-values constructed in Line 4 of 
Algorithm~\ref{alg:bh_cond}. 
Figure~\ref{fig:fdr_outlier} shows the 
empirical FDR across runs. In line with Theorem~\ref{thm:fdr_cond}, we see that the FDR is 
always below $(1-\rho_{\texttt{pop}})q$. Also, 
BH applied to weighted conformal p-values 
empirically controls the FDR and demonstrates comparable performance. 

\begin{figure}[H]
    \centering
    \includegraphics[width=6.5in]{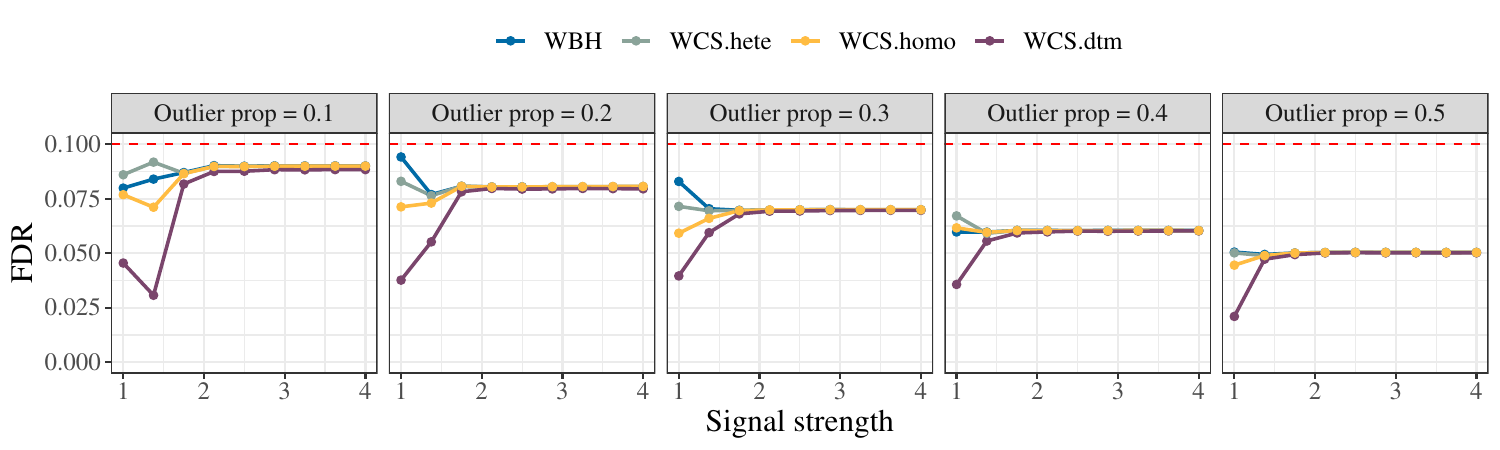}
    \caption{Empirical FDR averaged over $N=1000$ runs under increasing levels $a$ of signal strength. Each subplot corresponds to one value of $\rho_{\texttt{pop}}$. The red dashed lines are at the nominal level $q=0.1$.}
    \label{fig:fdr_outlier}
\end{figure}

Figure~\ref{fig:power_outlier} plots the power of the four methods. 
Interestingly, although they differ in FDR, especially 
when the signal strength is small, they achieve nearly identical power, finding about the same number of true discoveries. 

\begin{figure}[H]
    \centering
    \includegraphics[width=6.5in]{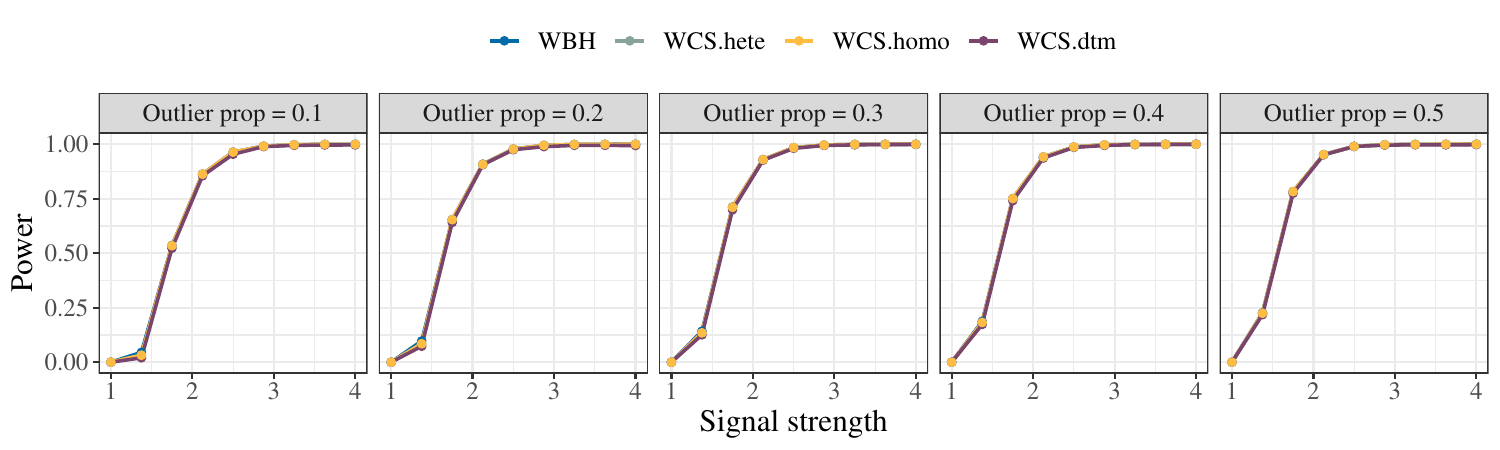}
    \caption{Empirical power averaged over $N=1000$ runs. Everything else is as in Figure~\ref{fig:fdr_outlier}.}
    \label{fig:power_outlier}
\end{figure}

\subsection{Real data application}

We then apply Algorithm~\ref{alg:bh_cond} to a bank marketing dataset~\cite{moro2014data}; this dataset has been used to benchmark  
outlier detection algorithms~\cite{pang2019deep,pang2021deep}. 
In short, each sample in the dataset represents a phone contact to a potential 
client, whose demographic information such as age, gender, and marital status 
is also recorded in the features.
The outcome is a binary indicator $Y$, where 
$Y=1$ represents a successful compaign. Positive samples are relatively rare, 
accounting for about $10\%$ of all records.

In our experiments, we always use negative samples as the calibration data. 
We find this dataset interesting because 
the classification and outlier detection perspectives 
are somewhat blurred here. 
Viewing this as a classification task, Example~\ref{ex:binary} 
becomes relevant, and our theory implies FDR control as long as the 
distribution of negative samples 
(i.e., that of $X$ given $Y=0$) admits the covariate shift. 
In the sense of finding positive responses (so as to reach out 
to these promising clients), 
controlling the FDR ensures efficient use of campaign resources. 
Alternatively, from an outlier detection perspective, 
those $Y=1$ are outliers~\cite{pang2019deep} that 
deserve more attention. 
Taking either of the two perspectives leads to the same calibration set 
and the same guarantees following Section~\ref{subsec:hypo_cond_fdr}.  
Perhaps the only methodological distinction is 
whether positive samples are leveraged in the 
training process. Similar considerations  
also appeared in~\cite{liang2022integrative}, but the scenario therein 
is more sophisticated than ours as they also use 
known outliers (positive samples) in 
constructing p-values.  
As a classification problem, we may use positive samples to train 
a classifier and produce nonconformity scores. 
As an outlier detection problem, however, 
the widely-used  one-class classification 
relies exclusively on inliers (negative samples). 
We will evaluate the performance of both 
(i.e., train the nonconformity score with or without positive samples); 
see procedures \texttt{cond\_class} and \texttt{outlier} below. 

For comparison, we additionally evaluate Algorithm~\ref{alg:bh} 
which operates under a covariate shift assumption 
on the joint distribution of $(X,Y)$ (see procedure \texttt{sup\_class} below); this is in contrast 
to Algorithm~\ref{alg:bh_cond} which puts no assumption on positive samples. 

The total sample size is $N=41188$ with $4640$ positive samples. 
We use rejection sampling to create the covariate shift between 
training/calibration and test samples. In details, we use a subset of features $X^*\in \RR^4$ 
representing age and indicators of marriage, 
basic four-year education, and housing. We first randomly select a subset of the original dataset as the test data.
Each sample enters the test data with probability 
$e(x):=0.125\, \sigma({\theta^\top x^*})$, where $\theta=(1,1/2,1/2,1/2)^\top$.  This creates a covariate shift $w(x)\propto e(x)/(1-e(x))$ between 
null samples in calibration and test folds, so that 
the test set contains more senior, married, well-educated and housed people.  

To test the two perspectives, we consider three procedures with FDR levels $q\in \{0.2, 0.5, 0.8\}$:
\begin{enumerate}[(i)]
    \item \texttt{sup\_class}: We randomly split the data that is not in the test fold into two equally-sized halves as the training and calibration folds, $\cD_\train$ and $\cD_\calib$. 
    We use $\cD_\train$ to train an SVM classifier using the \texttt{scikit-learn} Python library with \texttt{rbf} kernel. 
    Then, we apply Algorithm~\ref{alg:bh} with $V(x,y)=100\cdot y - \hat\mu(x)$, where $\hat\mu(\cdot)$ is the class probability output from the SVM classifier. 
    \item \texttt{cond\_class}: With the same random split as in (i), the first half is used as $\cD_\train$, while we set $\cD_\calib$ as the set of negative samples in the second half. We use $\cD_\train$ to train an SVM classifier in the same way as in (i) to obtain the predicted class probability $\hat\mu(\cdot)$. Lastly, we apply Algorithm~\ref{alg:bh_cond} with $V(x)=-\hat\mu(x)$. 
    \item \texttt{outlier}: With the same split, we set $\cD_\train$ and $\cD_\calib$ as the two folds after discarding all the negative samples. We use $\cD_\train$ to train a one-class SVM classifier using the \texttt{scikit-learn} Python library with \texttt{rbf} kernel and obtain the output $\hat\mu(\cdot)$. We then apply Algorithm~\ref{alg:bh_cond} with $V(x)=-\hat\mu(x)$. 
\end{enumerate}

To ensure consistency, in all procedures, 
the SVM-based classifier uses the parameter \texttt{gamma}$=0.01$.\footnote{We find that this parameter leads to the highest power for outlier detection, and is 
also a relatively powerful choice for classification.} 
FDPs and power (proportions of true discoveries) across  $N=1000$ independent runs are 
shown in Figures~\ref{fig:real_outlier_fdr} and~\ref{fig:real_outlier_power}, 
respectively. 

\begin{figure}[h]
    \centering 
    \includegraphics[width=\linewidth]{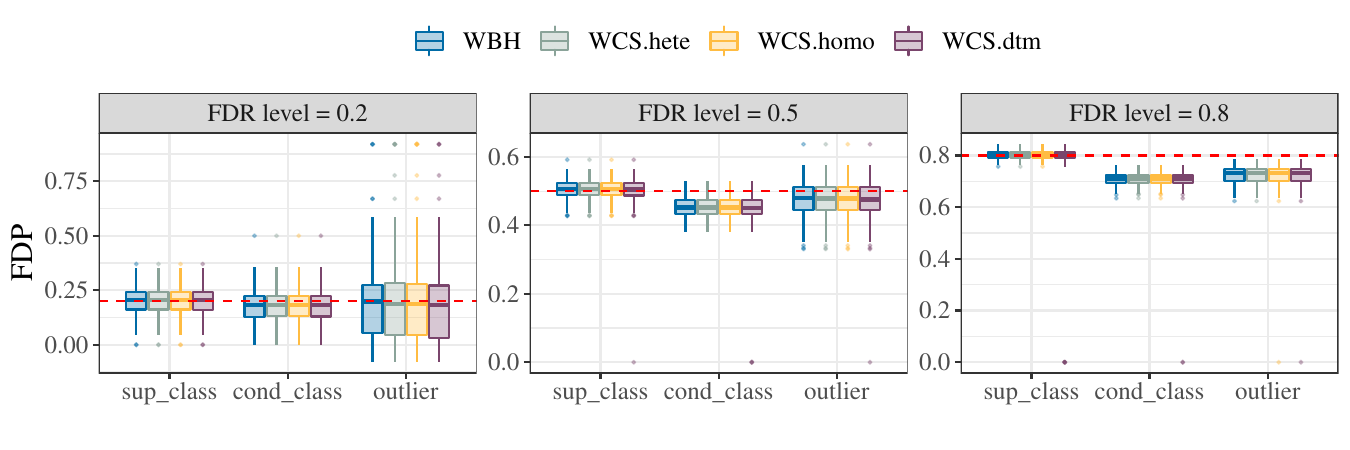}
    \caption{FDPs over $N=1000$ independent data splits. Each subplot corresponds to one nominal FDR level shown by means of red dashed lines.} 
    \label{fig:real_outlier_fdr}
\end{figure}

In Figure~\ref{fig:real_outlier_fdr}, we observe 
FDR control for all the methods in all settings. 
Across the four testing methods, the FDPs are very similar, 
hence all of them are reasonable choices. However, both the values and the variability of the FDP 
vary across settings. The methods 
\texttt{cond\_class} and \texttt{outlier} from Algorithm~\ref{alg:bh_cond} 
do not use all of the error budget---note the $\pi_0$ factor in Theorem~\ref{thm:fdr_cond}. 
In contrast, Algorithm~\ref{alg:bh} leverages super-population structure 
(which is present in this problem) and 
leads to tight FDPs and FDR around the target level. 
In all cases, the outlier detection approach, where 
only negative samples are used in the training process, 
has more variable FDPs. 
With this dataset, 
the variability of FDP around the FDR is visible for $q=0.2$, but it 
decreases as we increase the 
FDR level.

\begin{figure}[h]
    \centering 
    \includegraphics[width=6.5in]{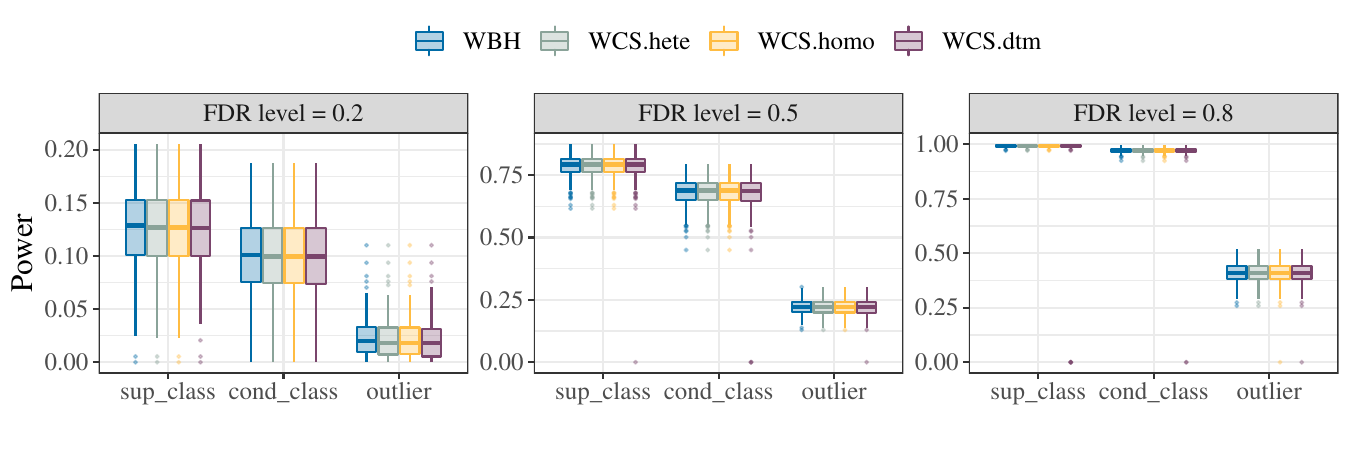}
    \caption{Proportion of true discoveries (empirical power). Everything else is as in Figure~\ref{fig:real_outlier_fdr}.}
    \label{fig:real_outlier_power}
\end{figure}

In Figure~\ref{fig:real_outlier_power}, the power across 
the four multiple testing methods is similar. 
However, we observe  drastic 
differences among the three settings. 
While using the same classifier, 
\texttt{cond\_class} has lower power than \texttt{sup\_class}, 
per our discussion in Section~\ref{subsubsec:compare}. 
We also see that \texttt{outlier} has much lower power, which may be due to the 
fact that the nonconformity score is not sufficiently powerful
in distinguishing inliers from outliers 
as the training process does not use outliers. 
Our results suggest that in outlier detection problems, 
even when we do not want to impose any distributional assumption 
on outliers (so that Algorithm~\ref{alg:bh} becomes less reasonable), 
utilizing known outliers may be helpful in obtaining better nonconformity 
scores. 

Finally, we observe negligible difference between the rejection sets 
returned by the BH procedure and Algorithm~\ref{alg:bh}. There at most five distinct decisions among $\approx 3700$ test samples.

\section*{Acknowledgement}

The authors thank John Cherian, Issac Gibbs, Kevin Guo, Kexin Huang, 
Jayoon Jang, Lihua Lei, Shuangning Li, Zhimei Ren, 
Hui Xu, and Qian Zhao for helpful discussions.  
E.C.~and Y.J.~were supported by the Office of Naval Research grant N00014-20-1-2157, the National Science
Foundation grant DMS-2032014, the Simons Foundation under award 814641, and the ARO grant
2003514594.

\bibliographystyle{alpha}
\bibliography{reference}

\newpage 
\appendix 

\section{Deferred results and details}

\subsection{Recap: Conformalied Selection}
\label{app:recap_unweighted}

In this section, we briefly recall the conformalized selection framework of~\cite{jin2022selection}. 
As before, set $\cD_\calib=\{1,\dots,n\}$ 
and $\cD_\test=\{n+1,\dots,n+m\}$. 
Screening for large outcomes can be viewed as testing the null hypotheses 
\$
H_j\colon Y_{n+j}\leq c_{n+j},\quad  j=1,\dots,m,
\$
where selecting $j\in \cR$ is equivalent to rejecting $H_j$. 
Perhaps unconventionally, whether $H_j$ is true or not 
is here a random event. Yet,  we 
still construct \emph{conformal} p-values built upon the (split) conformal inference 
framework~\cite{vovk2005algorithmic} to test these hypotheses.

Given a monotone score  $V$ and thresholds $\{c_{n+j}\}_{j=1}^m$, 
we compute 
$V_i = V(X_i,Y_i)$ for $i\in \cD_\calib$, 
$\hat{V}_{n+j}=V(X_{n+j},c_{n+j})$, 
and then apply the Benjamini-Hochberg (BH) procedure~\cite{benjamini1995controlling} to 
conformal p-values 
\#\label{eq:conformal_pval}
p_j^\nw = \frac{\sum_{i=1}^n \ind {\{V_i <\hat{V}_{n+j} \}}+ U_j ( 1+ \sum_{i=1}^n \ind {\{V_i = \hat{V}_{n+j} \}})}{n+1},\quad U_j\iid \textrm{Unif}([0,1]).
\#  
Under the assumption that $Q=P$, or $w(x)\equiv 1$ in~\eqref{eq:cov_shift}, 
\revisea{each $ p_j^\nw $ is a valid p-value obeying~\eqref{eq:general_pvalue}.} 
However, as shown earlier, 
under non-trivial covariate shift, the random variables $p_j^\nw$ 
are no longer valid p-values.\footnote{ 
A side result in Appendix~\ref{app:subsec_wfdr} shows that 
running the BH procedure with $\{p_j^\nw\}$  
controls a weighted notion of FDR below the target level $q$;  
this weighted notion downweights the error for $(X,Y)$ values 
that are less frequent in the calibration data.}

\subsection{Connection of our p-values and weighted conformal prediction}
\label{app:connection_conformal}

\begin{remark}\normalfont
  An intuitive interpretation for 
  the p-values in~\eqref{eq:def_wcpval_rand} 
  is that they are roughly the critical confidence level 
  when inverting weighted conformal prediction intervals
  at the thresholds $\{c_{n+j}\}$. 
  Indeed, the split weighted conformal inference
  procedure~\cite{tibshirani2019conformal} (using $-V$ as the nonconformity score for consistency) 
  yields the one-sided prediction interval 
  $\hat{C}(X_{n+j},1-\alpha) = [\eta(X_{n+j},1-\alpha),+\infty)$
  for $Y_{n+j}$, where 
  \$
  -\eta(x,1-\alpha) = \textrm{Quantile} \big(1-\alpha; 
  {\textstyle  \sum_{i=1}^n} w_i(x)\delta_{-V_i} + w_{n+1}(x) \delta_{-\infty}\big),
  \$
  and $w_i(x) = \frac{w(X_i)}{\sum_{i=1}^n w(X_i) + w(x)}$, $i=1,\dots,n$,
  and $w_{n+1}(x)=  \frac{w(x)}{\sum_{i=1}^n w(X_i) + w(x)}$ for any $x\in \cX$. 
  Ignoring the tie-breaking random variables, we see that $p_j$ is the smallest 
  $\alpha$ such that $c_{n+j}<\eta(X_{n+j},1-\alpha)$. 
  \end{remark}

\subsection{Weighted FDR control for unweighted p-values}
\label{app:subsec_wfdr}

The following theorem shows the weighted FDR control 
when running BH($q$) with unweighted conformal p-values 
while there is a covariate shift between calibration and test samples. 

\begin{theorem}\label{thm:bh_weighted_fdr}
  Suppose $V$ is monotone, 
  $\{(X_i,Y_i)\}_{i=1}^n\sim P$ and 
  $\{(X_{n+j},Y_{n+j})\}_{j=1}^m\sim Q$ are mutually independent 
  and obey~\eqref{eq:cov_shift}. 
  Suppose $\{(X_i,Y_i)\}_{i=1}^n \cup \{(X_{n+\ell},c_{\ell})\}_{\ell\neq j}\cup \{(X_{n+j},Y_{n+j})\}$ are mutually independent for all $j$. 
  For any $q\in(0,1)$, let $\cR_{\nw}$ be the rejection set of 
  BH($q$) procedure applied to 
  the unweighted conformal $p$-values $\{p_j^\nw\}$ in~\eqref{eq:conformal_pval}. 
   Then 
  \$
  \EE\Bigg[  \frac{\sum_{j=1}^m  \ind\{j\in \cR_\nw, Y_{n+j}\leq c_{n+j}\}/w(X_{n+j})  }{1\vee |\cR_\nw|}   \Bigg] \leq q\cdot \EE\bigg[\frac{  n+1   }{ \sum_{i=1}^n w(X_i) + w(X_{n+1})}\bigg],
  \$
  where the expectation is over both the test and calibration data. 
\end{theorem}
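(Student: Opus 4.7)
The plan is to combine a leave-one-out BH identity, the monotonicity of $V$, and the weighted exchangeability lemma of Tibshirani et al.~\cite{tibshirani2019conformal}, and to close with a somewhat delicate combinatorial bound on a sum over ``labelings'' of the augmented sample.

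First I would write the weighted FDR as $\sum_{j=1}^m A_j$ and, using the standard BH leave-one-out identity $\ind\{j\in \cR_\nw\}/(1\vee |\cR_\nw|)=\ind\{p_j^\nw\le q\hat k_j^0/m\}/\hat k_j^0$, express each term as
\[
A_j=\EE\bigl[\ind\{p_j^\nw\le q\hat k_j^0/m\}\,\ind\{Y_{n+j}\le c_{n+j}\}/(w(X_{n+j})\hat k_j^0)\bigr],
\]
where $\hat k_j^0$ denotes the number of rejections made by BH after $p_j^\nw$ is replaced by $0$. Under the theorem's independence assumption, $\hat k_j^0$ is a function of the calibration data and of $\{(X_{n+\ell},c_{n+\ell})\}_{\ell\ne j}$ only. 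I would then use monotonicity of $V$: on $\{Y_{n+j}\le c_{n+j}\}$, one has $V_{n+j}\le \hat V_{n+j}$, which forces $p_j^\nw\ge p_j^{\mathrm{orc}}$, the oracle version of~\eqref{eq:conformal_pval} in which $\hat V_{n+j}$ is replaced by $V_{n+j}$. Hence $A_j\le \EE[\ind\{p_j^{\mathrm{orc}}\le q\hat k_j^0/m\}/(w(X_{n+j})\hat k_j^0)]$.

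Next I would invoke Tibshirani et al.'s weighted exchangeability lemma: under the covariate shift $\ud Q/\ud P=w$, conditional on the unordered multiset $\cZ_j:=\{Z_1,\ldots,Z_n,Z_{n+j}\}$ and on everything outside the $j$-th test, the probability that a specific $z\in\cZ_j$ plays the role of $Z_{n+j}$ equals $w(X_z)/W_j$ with $W_j:=\sum_{z\in\cZ_j}w(X_z)=\sum_{i=1}^n w(X_i)+w(X_{n+j})$. Summing over $z\in\cZ_j$, the factor $w(X_z)$ in the conditional probability exactly cancels the $1/w(X_{n+j})$ in the summand, giving
\[
A_j\le \EE\Bigl[W_j^{-1}{\textstyle\sum_{z\in\cZ_j}}\ind\{p_z^{\mathrm{orc}}\le q\hat K_z^0/m\}/\hat K_z^0\Bigr],
\]
where $p_z^{\mathrm{orc}}$ and $\hat K_z^0$ are the oracle p-value and leave-one-out BH count in the labeling that makes $z$ the test point.

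The hard part is to show that the inner sum, averaged over the tie-breaker $U_j$, is bounded by $q(n+1)/m$. Labeling $\cZ_j$ by ranks so that $z_{(s)}$ has the $s$-th smallest $V(z)$, one has $p_{z_{(s)}}^{\mathrm{orc}}\sim \mathrm{Unif}((s-1)/(n+1),s/(n+1))$ through $U_j$ (for continuous $V$), and $\hat K_{z_{(s)}}^0$ is non-increasing in $s$ because removing a higher-$V$ element from calibration weakly increases each $p_\ell^\nw$ for $\ell\ne j$ and thus weakly shrinks the BH count. Setting $\beta_s:=q(n+1)\hat K_{z_{(s)}}^0/m$, the inner sum (after integrating out $U_j$) reduces to $\sum_s \min(1,[\beta_s-s+1]^+)/\beta_s$ with $\beta_s$ non-increasing, which a careful accounting shows to be at most $1$. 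Multiplying by $q(n+1)/m$ and summing $A_j\le (q(n+1)/m)\EE[1/W_j]$ over $j$---using that by i.i.d.\ $\EE[1/W_j]$ does not depend on $j$---yields $\fdr\le q\,\EE[(n+1)/W_1]$, as claimed.
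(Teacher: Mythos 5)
Your argument is correct and shares the same skeleton as the paper's (leave-one-out reduction, monotonicity of $V$ to pass to the oracle score, weighted exchangeability conditional on the unordered multiset $\cZ_j$), but the finishing computation is organized quite differently. The paper works with $\bar\cR_j$ (BH applied after replacing slot $j$ by the oracle p-value), decomposes over $|\bar\cR_j|=k$, and runs an Abel summation, showing every telescoping difference is $\le 0$ via a monotone-coupling argument so that only the boundary ($k=m$) term survives. You instead use the leave-one-out count $\hat k_j^0$ (slot $j$ set to $0$), which makes the rejection count manifestly a function of $\cZ_j\setminus\{Z_{n+j}\}$ and the other test data only; then parametrize by the rank $s$ of the test label, exploit the exact $\mathrm{Unif}((s-1)/(n+1),s/(n+1))$ law of the oracle p-value, and reduce everything to the deterministic inequality $\sum_{s=1}^{n+1}\min\bigl(1,[\beta_s-s+1]^+\bigr)/\beta_s\le 1$ for non-increasing $\beta_s>0$. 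This is the cleaner route, but it is the one step you state without proof. It does hold: writing $\min(1,[\beta_s-s+1]^+)=\int_{s-1}^{s}\ind\{t\le\beta_s\}\,dt$ and setting $\beta(t)=\beta_{\lceil t\rceil}$ (piecewise-constant, non-increasing), the set $\{t:t\le\beta(t)\}$ is an initial interval $[0,t^*]$, on which $\beta(t)\ge\beta(t^*)\ge t^*$; hence the sum equals $\int_0^{t^*}\beta(t)^{-1}\,dt\le t^*/t^*=1$. You should supply an argument of this sort, and also note the small slip that the inner sum equals $\tfrac{q(n+1)}{m}\sum_s\min(1,[\beta_s-s+1]^+)/\beta_s$ rather than the sum itself (you do correctly carry the $q(n+1)/m$ factor forward afterward, so the final bound is unaffected). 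Your monotonicity claim for $\hat K_{z_{(s)}}^0$ is correct: removing the rank-$s$ element as $s$ grows leaves a calibration set with a strictly lower-$V$ element replaced by a higher-$V$ one, weakly increasing every $p_\ell^\nw$, $\ell\ne j$, and thus weakly shrinking the BH count---which is exactly the coupling the paper uses, just packaged differently.
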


\subsection{General theorem for asymptotic FDR control with BH procedure}
\label{app:fdr_asymp}

We provide the general results for Theorem~\ref{thm:fdr_asymp} below. 
\begin{theorem}\label{thm:fdr_asymp_full}
  Suppose $w(\cdot)$ is uniformly bounded by a fixed constant, the covariate shift~\eqref{eq:cov_shift} holds, 
  and $\{c_{n+j}\}_{j=1}^m$ are i.i.d.~random variables. 
  Fix any $q\in(0,1)$, 
  and let $\cR$ be the output of the BH($q$) procedure applied to 
  $\{p_j\}_{j=1}^m$ in~\eqref{eq:def_wcpval_rand}. 
  Then the following two results hold. 
  \begin{enumerate}[(i)]
  \item For any fixed $m$, as $n\to \infty$, 
  it holds   that $\limsup_{n\to \infty} \fdr \leq q$.
  \item As $m,n\to\infty$, it holds that $\limsup_{m,n\to \infty} \fdr \leq q$ if 
  there exists some $t'\in(0,1]$ such that 
  $G_\infty(t') <q$, where  
  $
  G_\infty(t) :=    t /\PP ( F( V(X_{n+j},C_{n+j}) , U_j) \leq t  )  
  $
  for any $t\in [0,1]$, with the convention that $0/0=0$ and $t/0=\infty$ for $t>0$. 
  Here, 
  for any $v\in \RR$ and $u\in[0,1]$, 
  \#\label{eq:def_F}
  F(v,u) = Q\big(V(X,Y )<v\big) + u\cdot Q\big( V(X,Y )=v \big),
  \#
  where the probability is taken with respect to 
  the test distribution $(X,Y)\sim Q$ obeying~\eqref{eq:cov_shift}.
    
  Furthermore, define $t^* = \sup\{t\in [0,1]\colon G_\infty(t) \leq q\}$. 
  Additionally suppose for any sufficiently small $\epsilon>0$, 
  there exists some $t\in(t^*-\epsilon,t)$ such that $G_\infty(t) < q$. 
  Then the asymptotic FDR of BH($q$) is 
  $\frac{ Q \{ F(V(X,C),U)\leq t^*, Y\leq C \}}{Q\{F(V(X,C),U)\leq t^* \} }$, 
  and the asymptotic power is 
  $Q \big\{ F(V(X,C),U)\leq t^*\given  Y  > C \big\} $, 
  where all the probabilities are induced by the test distribution 
  $Q_{X,Y ,C }$ and an independent $U\sim $ \rm{Unif([0,1])}.  
  \end{enumerate} 
  \end{theorem}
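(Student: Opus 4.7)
The plan is to reduce both assertions to standard BH FDR control by showing that, in the relevant limits, the weighted conformal p-values behave like i.i.d.\ draws of $F(V(X,C),U)$, where $(X,Y,C)\sim Q_{X,Y,C}$ and $U\sim\textrm{Unif}[0,1]$ is independent. The key identification is as follows: under~\eqref{eq:cov_shift} and boundedness of $w$, the SLLN gives $n^{-1}\sum_i w(X_i)\asto 1$ and $n^{-1}\sum_i w(X_i)\ind\{V_i<v\}\asto Q(V(X,Y)<v)$ for every $v$, and a weighted Glivenko--Cantelli argument (justified by the monotonicity of $v\mapsto Q(V(X,Y)<v)$) upgrades the latter to uniform convergence in $v$. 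Plugging these facts into~\eqref{eq:def_wcpval_rand} and treating the tie-breaking term analogously yields $p_j \asto \tilde p_j := F(\hat V_{n+j}, U_j)$ as $n\to\infty$ for each fixed $j$, with $F$ as in~\eqref{eq:def_F}.

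For Part (i), with $m$ fixed, the random vector $(\tilde p_1,\dots,\tilde p_m)$ is i.i.d.\ because the test tuples $(X_{n+j},Y_{n+j},c_{n+j},U_j)$ are i.i.d. On the null event $\{Y_{n+j}\leq c_{n+j}\}$, monotonicity of $V$ gives $V(X_{n+j},Y_{n+j})\leq \hat V_{n+j}$, so $\tilde p_j \geq F(V(X_{n+j},Y_{n+j}),U_j)$, which is a randomized probability integral transform and hence $\textrm{Unif}[0,1]$. This yields the super-uniform inequality $\PP(\tilde p_j \leq t,\, Y_{n+j}\leq c_{n+j}) \leq t$. Standard BH analysis for independent super-uniform p-values with random nulls then gives $\fdr\bigl(\textrm{BH}(q)\text{ on }\tilde p\bigr)\leq q$. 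Since the FDP is a bounded function of $(p_1,\dots,p_m)$ that is continuous on a set of full measure under the law of $(\tilde p_1,\dots,\tilde p_m)$ (the $U_j$-randomization makes the $\tilde p_j$ atomless), bounded convergence gives $\limsup_{n\to\infty}\fdr\leq q$.

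For Part (ii), set $\hat H_m(t)=m^{-1}\sum_j \ind\{p_j\leq t\}$; the BH$(q)$ threshold equals $\hat t=\sup\{t : t/\hat H_m(t)\leq q\}$ with the convention $t/0=+\infty$. Conditioning on the calibration data $\cD_\calib$, the $p_j$'s are i.i.d.\ across $j$, so a conditional Glivenko--Cantelli gives $\hat H_m(t)\to \PP(p_1\leq t\mid \cD_\calib)$ uniformly a.s.; combining with the convergence $p_j\asto \tilde p_j$ from the first step shows that this conditional CDF converges to $\PP(\tilde p_1\leq t)$. Hence $t/\hat H_m(t)\to G_\infty(t)$ uniformly on compact subintervals where $G_\infty$ is continuous. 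The assumption $G_\infty(t')<q$ ensures the threshold $\hat t$ is bounded below by $t'$ with probability tending to $1$, so the BH process does not collapse; the additional density condition forces $\hat t\pto t^*$. Feeding $\hat t\to t^*$ into an LLN for the numerator and denominator of
\$
\text{FDP}=\frac{m^{-1}\sum_j \ind\{p_j\leq \hat t,\, Y_{n+j}\leq c_{n+j}\}}{\max\{m^{-1},\, m^{-1}\sum_j \ind\{p_j\leq \hat t\}\}}
\$
and invoking Slutsky together with dominated convergence produces the claimed expressions for the asymptotic FDR and power.

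The main obstacle is the double-limit passage in Part (ii): because every $p_j$ depends on the shared calibration sample, no direct Glivenko--Cantelli applies unconditionally, and one must first condition on $\cD_\calib$ to obtain uniform-in-$t$ convergence in $m$, then use the a.s.\ convergence from Part (i) to let $n\to\infty$. A secondary technical issue is controlling the behavior of $\hat t$ at jumps of $G_\infty$; the hypothesis that $G_\infty$ dips below $q$ in every left neighborhood of $t^*$ is precisely what rules out pathological oscillation of the BH threshold and delivers $\hat t\pto t^*$.
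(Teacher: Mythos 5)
Your overall approach mirrors the paper's: identify the limiting p-values $\tilde p_j = F(V_j^0, U_j)$, reduce the BH analysis to these i.i.d.\ super-uniform p-values, and analyze the BH threshold via the Storey-type characterization. Part (i) is essentially the paper's argument: the paper shows $\PP(\cR\neq\cR^\infty)\to 0$ and uses boundedness of the FDP, while you phrase the same step as continuous-mapping plus bounded convergence; the appeal to ``standard BH analysis with random nulls'' is acceptable shorthand for the leave-one-out argument (independence of $(\tilde p_j, Y_{n+j}, c_{n+j})$ from $\cR^\infty_{j\to 0}$, followed by $\PP(\tilde p_j\leq t, Y_{n+j}\leq c_{n+j})\leq t$). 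In Part (ii) you propose conditioning on $\cD_\calib$ and applying a conditional Glivenko--Cantelli, then passing $n\to\infty$; the paper instead proves a joint uniform-in-$t$ convergence directly via a partition-plus-$\delta$-buffer argument (Lemmas~\ref{lem:pval_conv} and~\ref{lem:tau_conv}). Your route works but needs a DKW-type bound to make the conditional GC rate uniform in $n$ before the two limits can be merged; the paper's argument sidesteps this by never separating the limits.

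There is one genuine gap in Part (ii). You establish $\limsup\fdr\leq q$ as a byproduct of $\hat t\pto t^*$, which you yourself derive from the \emph{additional} density condition around $t^*$. But the theorem asserts $\limsup_{m,n\to\infty}\fdr\leq q$ under the weaker hypothesis that merely some $t'$ with $G_\infty(t')<q$ exists; the extra density condition is only needed for the exact asymptotic FDR and power. When $G_\infty$ is flat at level $q$ on a left neighborhood of $t^*$, the threshold need not converge to $t^*$, yet the FDR bound must still hold. The paper handles this by using only the eventual lower bound $\hat\tau\geq t'>0$, then applying Fatou and the uniform LLN for $F(V(X_{n+j},Y_{n+j}),U_j)\sim\textrm{Unif}[0,1]$ to show $\frac{1}{m\hat\tau}\sum_j\ind\{p_j\leq\hat\tau,\, j\in\cH_0\}\leq\frac{1}{m\hat\tau}\sum_j\ind\{F(V(X_{n+j},Y_{n+j}),U_j)\leq\hat\tau\}\to 1$; no convergence of $\hat\tau$ is required. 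You should add this separate argument for the $\limsup$ claim, keeping your $\hat t\pto t^*$ reasoning only for the exact limits.
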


\subsection{Deferred simulation results for DTI prediction}
\label{app:subsec_dti}

\begin{figure}[H]
  \centering
  \includegraphics[width=5.5in]{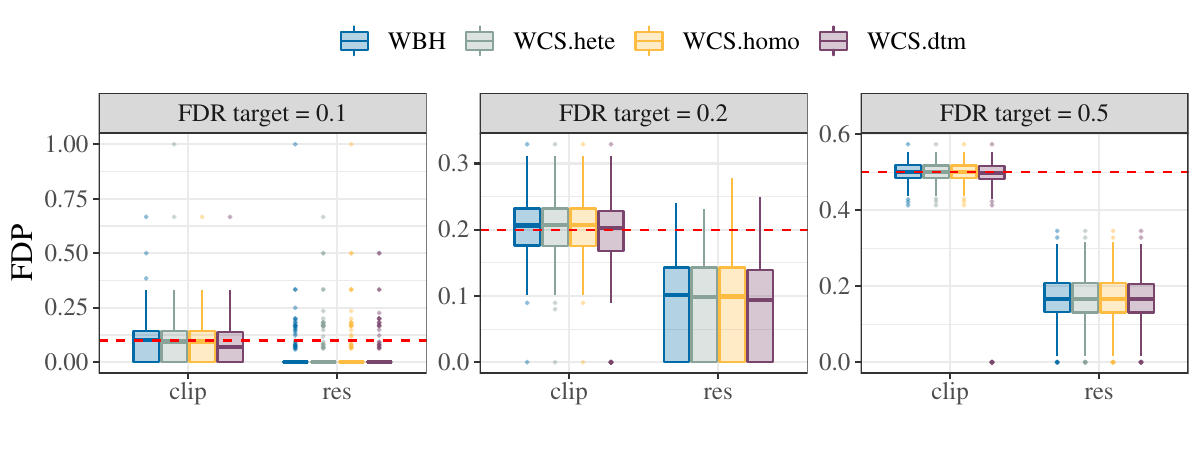}
  \caption{Empirical FDR for DTI prediction  with $q_{\textrm{pop}}=0.7$.
  Details are the same as Figure~\ref{fig:drug_dti_fdr}.}
  \label{fig:drug_dti_fdr_7}
\end{figure}

\begin{figure}[H]
  \centering
  \includegraphics[width=5.5in]{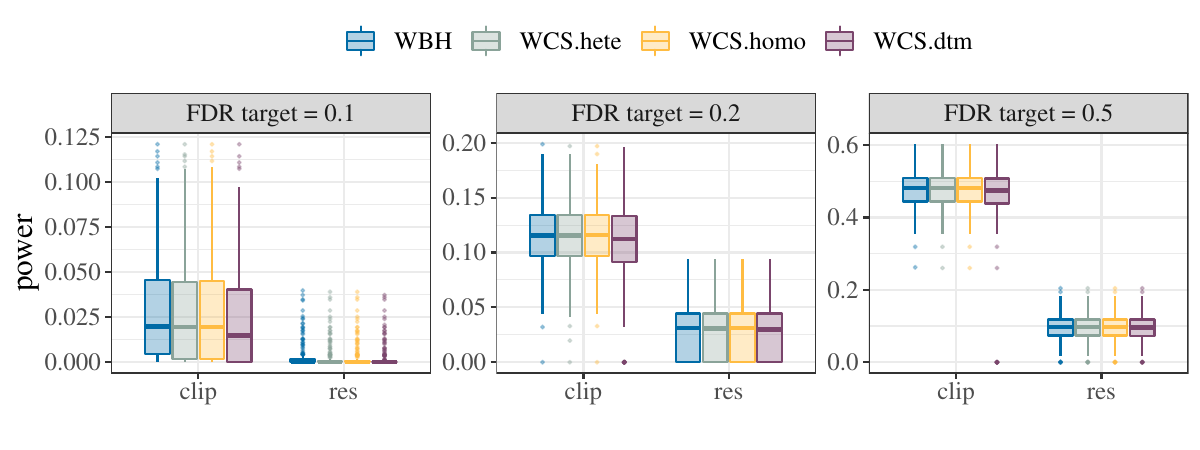}
  \caption{Empirical power for DTI with $q_{\textrm{pop}}=0.7$. 
  Details are the same as Figure~\ref{fig:drug_dti_fdr}.}
  \label{fig:drug_dti_power_7}
\end{figure}

\section{Proofs for weighted p-values}

\subsection{Proof of Lemma~\ref{lem:general_pval}}
\label{app:lem_general_pval}

\begin{proof}[Proof of Lemma~\ref{lem:general_pval}]
For any $j\in\{1,\dots,m\}$, note that $V(X_{n+j},y)$ is 
increasing in $y$, thus on the event  $\{Y_{n+j}\leq c_{n+j}\}$, 
it holds that $V_{n+j} \leq \hat{V}_{n+j}$. 
We now define 
\$
p_j^* =   \frac{\sum_{i=1}^n w(X_i)\ind {\{V_i < {V}_{n+j} \}}+  ( w(X_{n+j}) + \sum_{i=1}^n w(X_i)\ind {\{V_i = {V}_{n+j} \}})\cdot U_j}{\sum_{i=1}^n w(X_i) + w(X_{n+j})},
\$ 
where $U_j$ is the same tie-breaking random variable as in $p_j$. 
That is, $p_j^*$ is the oracle p-value  we would get 
if we replace $\hat{V}_{n+j}$ by $V_{n+j}$ in the construction of $p_j$. 
Note that $p_j^*$ is not computable; it is only for the purpose of proof. 
Thus, it is clear that $p_j^*\leq p_j$ on the event  $\{Y_{n+j}\leq c_{n+j}\}$. 
This implies
\$
\PP(p_j \leq t, ~ Y_{n+j} \leq c_{n+j} )
\leq \PP(p_j \leq t, ~ p_j^*\leq p_j) \leq \PP(p_j^*\leq t). 
\$
The right-handed side above is below $t$ 
using weighted exchangeability introduced in~\cite{tibshirani2019conformal}. 
Formally, define the unordered set $Z = [Z_1,\dots,Z_n,Z_{n+j}]$ 
for $Z_i=(X_i,Y_i)$, $i=1,\dots,n,n+j$, and write $z$ as the realized value of $Z$, 
and $\cE_z =\{Z=z\}$. Under the covariate shift~\eqref{eq:cov_shift}, the scores 
$\{V_1,\dots,V_n, V_{n+j}\}$ are weighted exchangeable, in the sense that  
\$
Z_{n+j} \given \cE_z ~\sim ~ \sum_{i\in \{1,\dots,n,n+j\}} \delta_{z_i} \cdot \frac{w(x_i)}{\sum_{i=1}^n w(x_i) + w(x_{n+j})},
\$
i.e., given that the unordered set $Z$ equals $z$, 
the probability that $Z_{n+j}$ taking on any value  $(x,y)\in \{z_1,\dots,z_n,z_{n+j}\}$ 
is proportional to the corresponding weight $w(x)$. This implies 
\$
\PP(p_j^*\leq t \given \cE_z) = t,
\$
and thus $\PP(p_j^*\leq t)=t$, which concludes the proof of Lemma~\ref{lem:general_pval}.
\end{proof}

\subsection{Proof of Theorem~\ref{thm:bh_weighted_fdr}} 
\label{app:thm_bh_wfdr}

\begin{proof}[Proof of Theorem~\ref{thm:bh_weighted_fdr}]
For ease of illustration, we prove the result 
when there is no tie among all nonconformity scores;  
similar results can be obtained with ties. 
We first use a leave-one-out trick similar to~\cite{jin2022selection}. 
Define the oracle conformal p-values 
\$
\bar p_j^\nw = \frac{\sum_{i=1}^n \ind {\{V_i < {V}_{n+j} \}}+ U_j ( 1+ \sum_{i=1}^n \ind {\{V_i =  {V}_{n+j} \}})}{n+1},
\$
which is obtained by replacing the $\hat{V}_{n+j}$ in~\eqref{eq:conformal_pval}
by $V_{n+j}=V(X_{n+j},Y_{n+j})$. 
By the monotonicity of $V$, one has $p_j^\nw \geq \bar{p}_j^\nw$ 
for any $j$ such that $Y_{n+j}\leq c_{j}$. 
In the BH procedure, for any $j\in \cR_\nw$, 
replacing $p_j$ with a smaller value does not change the rejection set. 
Therefore, letting $\bar{\cR}_{j}$ be the rejection set of 
BH($q$) applied to $\{p_1^\nw, \dots, p_{j-1}^\nw, \bar{p}_j^\nw, \dots, p_m^\nw\}$, 
we know that 
\$
j\in \cR_{\nw}  ~ \textrm{and}~ Y_{n+j}\leq c_{n+j} \quad \Rightarrow \quad j \in \bar\cR_j ~\textrm{and}~ \bar\cR_j = \cR_\nw.
\$
We then have  
\#\label{eq:bh_weight1}
\EE\Bigg[  \frac{\sum_{j=1}^m  \ind\{j\in \cR_\nw, Y_{n+j}\leq c_{n+j}\}/w(X_{n+j})  }{1\vee  |\cR_\nw| }   \Bigg] 
\leq \sum_{j=1}^m  \EE\Bigg[  \frac{ \ind\{j\in \bar\cR_j \}/w(X_{n+j})  }{1\vee |\bar\cR_j| }   \Bigg] .
\#
Also note that $j\in \bar{\cR}_j$ if and only if $\bar{p}_j^\nw \leq q|\bar\cR_j|/m$. 
Thus, for any $j=1,\dots,m$, 
\#\label{eq:bh_weight2}
&\EE\Bigg[  \frac{ \ind\{j\in \bar\cR_j \}/w(X_{n+j})  }{1\vee |\bar\cR_j| }   \Bigg]
= \sum_{k=1}^m \frac{1}{k}\EE\big [  \ind {\{|\bar\cR_{j}|=k\}}\ind {\{\bar p_j^\nw\leq q k/m\}} /w(X_{n+j})  \big] 
\notag \\
&= \sum_{k=1}^m \frac{1}{k} \EE\big [ ( \ind {\{|\bar\cR_{j}|\leq k\}} - \ind {\{|\bar\cR_{j}|\leq k-1\}} )\cdot \ind {\{\bar p_j^\nw\leq q k/m\}} /w(X_{n+j})  \big] \notag  \\ 
&\leq \frac{1}{m} \EE\bigg [ \frac{ \ind {\{\bar p_j^\nw\leq q  \}}}{w(X_{n+j})}  \bigg] 
+ \sum_{k=1}^{m-1} \EE\bigg[  \frac{\ind {\{|\bar\cR_{j}|\leq k\}} \ind {\{\bar p_j^\nw\leq \frac{q k}{m}\}}}{k \cdot w(X_{n+j})} -  \frac{\ind {\{|\bar\cR_{j}|\leq k\}} \ind {\{\bar p_j^\nw\leq \frac{q (k+1)}{m}\}}}{(k+1) \cdot w(X_{n+j})} \bigg] .
\#

We define the event $\cE_z = \{Z=z\}$, 
where $Z=[Z_1,\dots,Z_n, Z_{n+j}]$ is the unordered set 
for $Z_i = (X_i,Y_i)$ and $Z_{n+j}=(X_{n+j},Y_{n+j})$, 
and $z=[z_1,\dots,z_n,z_{n+j}]$ is the unordered set of 
their realized values. By the covariate shift~\eqref{eq:cov_shift}, 
$\{(X_{i},Y_i)\}_{i=1}^n$ and $(X_{n+j},Y_{n+j})$ are 
weighted exchangeable~\cite{tibshirani2019conformal}, hence 
\#\label{eq:weighted_exch}
Z_{n+j} \given \cE_z ~\sim ~ \sum_{i\in \{1,\dots,n,n+j\}} \delta_{z_i} \cdot \frac{w(x_i)}{\sum_{i=1}^n w(x_i) + w(x_{n+j})},
\#
where $\delta_{z_i}$ is the point mass on $z_i$. 
Therefore, 
by the tower property, 
\$
&\EE\Bigg[ \frac{ \ind {\{|\bar\cR_{j}|\leq k\}} \ind {\{\bar p_j^\nw\leq q k/m\}}}{k \cdot w(X_{n+j})} -  \frac{\ind {\{|\bar\cR_{j}|\leq k\}} \ind {\{\bar p_j^\nw\leq q (k+1)/m\}}}{(k+1) \cdot w(X_{n+j})} \bigggiven \cE_z \Bigg] \\
&\leq \EE\Bigg[ \frac{\ind {\{\bar p_j^\nw\leq q k/m\}}}{w(X_{n+j})} \cdot \frac{ 1 }{k} \cdot \PP\big(|\bar\cR_{j}|\leq k \biggiven \bar p_j^\nw  , \cE_z\big) \bigggiven \cE_z \Bigg]  \\
&\quad - \EE\Bigg[ \frac{\ind {\{\bar p_j^\nw\leq q (k+1) /m\}}}{w(X_{n+j})} \cdot \frac{ 1 }{k+1} \cdot \PP\big(|\bar\cR_{j}|\leq k \biggiven \bar p_j^\nw  , \cE_z\big) \bigggiven \cE_z \Bigg]  
\$
Note that the randomness in $U_j$ is independent of everything else, hence 
conditioning on $(\bar{p}_j^\nw,\cE_z)$ is the same as conditioning on $(Z_{n+j},\cE_z)$. 
We also denote $[1],[2],\dots,[n+1]$ as a permutation of 
$(1,2,\dots,n,n+j)$ such that $V(X_{[1]},Y_{[1]}) \leq V(X_{[2]},Y_{[2]}) \leq \cdots \leq V(X_{[n+1]},Y_{[n+1]})$. 
Conditional on $\cE_z$,
on the event $\{\bar{p}_j^\nw\leq q k/m\}$,  
$Z_{n+j}$ can only take values at $z_i$ for which 
$\frac{\sum_{\ell\neq i} \ind {\{V(z_\ell)<V(z_k)\}} }{n+1} \leq q k/m$; 
that is, the possible values of $Z_{n+j}$ 
are in  the ordered set $\{z_{[1]},\dots,z_{[i_k]}\}$ for some $i_k$. 
Similarly, $Z_{n+j}$ can only take values in 
$\{z_{[1]},\dots,z_{[i_{k+1}]}\}$ for $i_{k+1}\geq i_k$ on 
the event $\{\bar{p}_j^\nw\leq q (k+1)/m\}$. 
Therefore, we have 
\$
&\EE\Bigg[ \frac{ \ind {\{|\bar\cR_{j}|\leq k\}} \ind {\{\bar{p}_j^\nw\leq q k/m\}}}{k \cdot w(X_{n+j})} -  \frac{\ind {\{|\bar\cR_{j}|\leq k\}} \ind {\{\bar{p}_j^\nw\leq q (k+1)/m\}}}{(k+1) \cdot w(X_{n+j})} \bigggiven \cE_z \Bigg] \\
&\leq 
\EE\Bigg[ \sum_{\ell=1}^{i_k} \ind {\{Z_{n+j}=z_{[\ell]}\}} \frac{\ind {\{\bar{p}_j^\nw\leq q k/m\}}}{w(x_{[\ell]})} \cdot \frac{ 1 }{k} \cdot \PP\big(|\bar\cR_{j}|\leq k \biggiven Z_{n+j}=z_{[\ell]}  , \cE_z\big) \bigggiven \cE_z \Bigg]  \\
&\quad - \EE\Bigg[ \sum_{\ell=1}^{i_{k+1}} \ind {\{Z_{n+j}=z_{[\ell]}\}} \frac{\ind {\{\bar{p}_j^\nw\leq q (k+1) /m\}}}{w(x_{[\ell]})}  \cdot \frac{ 1 }{k+1}\cdot \PP\big(|\bar\cR_{j}|\leq k \biggiven Z_{n+j}=z_{[\ell]} , \cE_z\big) \bigggiven \cE_z \Bigg] \\
& = \sum_{\ell=1}^{i_k} \frac{\PP\big(|\bar\cR_{j}|\leq k \biggiven Z_{n+j}=z_{[\ell]}  , \cE_z\big)}{k\cdot w(x_{[\ell]})}
\EE\big[  \ind {\{Z_{n+j}=z_{[\ell]}\}} \ind {\{\bar{p}_j^\nw\leq q k/m\}}  \biggiven \cE_z \big]  \\
&\quad - 
\sum_{\ell=1}^{i_{k+1}} \frac{\PP\big(|\bar\cR_{j}|\leq k \biggiven Z_{n+j}=z_{[\ell]}  , \cE_z\big)}{(k+1)\cdot w(x_{[\ell]})}
\EE\big[  \ind {\{Z_{n+j}=z_{[\ell]}\}} \ind {\{\bar{p}_j^\nw\leq q (k+1)/m\}}  \biggiven \cE_z \big] .
\$
By weighted exchangeability, 
for $i<i_k$ it holds deterministically that $\bar{p}_j^\nw\leq q k/m$ 
and $\PP(Z_{n+j}=z_{[\ell]}\given \cE_z) = \frac{w(x_{[\ell]})}{\sum_{i=1}^n w(x_i) + w(x_{n+j})}$. 
The randomness of $U_j\sim \textrm{Unif}(0,1)$ implies 
\$\EE\big[  \ind {\{Z_{n+j}=z_{[i_{k}]}\}} \ind {\{\bar{p}_j^\nw\leq q k/m\}}  \biggiven \cE_z \big]
= (n+1)\cdot\bigg(\frac{q k}{m} - \frac{i_k-1}{n+1}\bigg)\cdot \frac{x_{[i_k]}}{\sum_{i=1}^n w(x_i) + w(x_{n+j})}.
\$
As a result, we have 
\$ 
&\sum_{\ell=1}^{i_k} \frac{\PP\big(|\bar\cR_{j}|\leq k \biggiven Z_{n+j}=z_{[\ell]}  , \cE_z\big)}{k\cdot w(x_{[\ell]})}
\EE\big[  \ind {\{Z_{n+j}=z_{[\ell]}\}} \ind {\{\bar{p}_j^\nw\leq q k/m\}}  \biggiven \cE_z \big]  \\
&=\sum_{\ell=1}^{i_k-1}  \frac{\PP\big(|\bar\cR_{j}|\leq k \biggiven Z_{n+j}=z_{[\ell]}  , \cE_z\big)}{k\cdot (\sum_{i=1}^n w(x_i) + w(x_{n+j}))} + \frac{(n+1)\frac{q k }{m} - i_k+1}{k\cdot (\sum_{i=1}^n w(x_i) + w(x_{n+j}))}\cdot \PP\big(|\bar\cR_{j}|\leq k \biggiven Z_{n+j}=z_{[i_k]}  , \cE_z\big).
\$
Similar computation for $k+1$  leads to  
\$ 
&\EE\Bigg[ \frac{ \ind {\{|\bar\cR_{j}|\leq k\}} \ind {\{\bar{p}_j^\nw\leq q k/m\}}}{k \cdot w(X_{n+j})} -  \frac{\ind {\{|\bar\cR_{j}|\leq k\}} \ind {\{\bar{p}_j^\nw\leq q (k+1)/m\}}}{(k+1) \cdot w(X_{n+j})} \bigggiven \cE_z \Bigg] \\
&\leq  
\sum_{\ell=1}^{i_k-1}  \frac{\PP\big(|\bar\cR_{j}|\leq k \biggiven Z_{n+j}=z_{[\ell]}  , \cE_z\big)}{k\cdot (\sum_{i=1}^n w(x_i) + w(x_{n+j}))}  
- \sum_{\ell=1}^{i_{k+1}-1}  \frac{\PP\big(|\bar\cR_{j}|\leq k \biggiven Z_{n+j}=z_{[\ell]}  , \cE_z\big)}{(k+1)\cdot (\sum_{i=1}^n w(x_i) + w(x_{n+j}))} \\
&\quad  - 
\frac{(n+1)\frac{q k }{m} - i_k+1}{k\cdot (\sum_{i=1}^n w(x_i) + w(x_{n+j}))}\cdot \PP\big(|\bar\cR_{j}|\leq k \biggiven Z_{n+j}=z_{[i_k]}  , \cE_z\big) \\ 
&\quad 
- \frac{(n+1)\frac{q (k+1) }{m} - i_{k+1}+1}{(k+1)\cdot (\sum_{i=1}^n w(x_i) + w(x_{n+j}))}\cdot \PP\big(|\bar\cR_{j}|\leq k \biggiven Z_{n+j}=z_{[i_{k+1}]}  , \cE_z\big).
\$
Conditional on $\cE_z$,  
$\PP\big(|\bar\cR_{j}|\leq k \biggiven Z_{n+j}=z_{[\ell]}  , \cE_z\big)$
is increasing in $\ell$; 
to see this, simply note that the p-values $(p_1^\nw,\dots,p_{j-1}^\nw, \bar{p}_{j}^\nw,
p_{j+1}^\nw,\dots,p_m^\nw)$ are coordinate-wisely increasing in $\ell$, 
hence the size of the rejection set $\bar\cR_j$ is decreasing in $\ell$.  
Thus, replacing the index $\ell$ in the two summations 
by $i_k$, we have 
\$ 
&\EE\Bigg[ \frac{ \ind {\{|\bar\cR_{j}|\leq k\}} \ind_{\{p_j^*\leq q k/m\}}}{k \cdot w(X_{n+j})} -  \frac{\ind_{\{|\bar\cR_{j}|\leq k\}} \ind_{\{p_j^*\leq q (k+1)/m\}}}{(k+1) \cdot w(X_{n+j})} \bigggiven \cE_z \Bigg] \\
&\leq  
\sum_{\ell=1}^{i_k-1}  \frac{\PP\big(|\bar\cR_{j}|\leq k \biggiven Z_{n+j}=z_{[i_k]}  , \cE_z\big)}{k\cdot (\sum_{i=1}^n w(x_i) + w(x_{n+j}))}  - \sum_{\ell=1}^{i_{k+1}-1}  \frac{\PP\big(|\bar\cR_{j}|\leq k \biggiven Z_{n+j}=z_{[i_k]}  , \cE_z\big)}{(k+1)\cdot (\sum_{i=1}^n w(x_i) + w(x_{n+j}))} \\
&\quad  + \frac{(n+1)\frac{q k }{m} - i_k+1}{k\cdot (\sum_{i=1}^n w(x_i) + w(x_{n+j}))}\PP\big(|\bar\cR_{j}|\leq k \biggiven Z_{n+j}=z_{[i_k]}  , \cE_z\big) \\ 
&\quad + \frac{(n+1)\frac{q (k+1) }{m} - i_{k+1}+1}{(k+1)\cdot (\sum_{i=1}^n w(x_i) + w(x_{n+j}))}\PP\big(|\bar\cR_{j}|\leq k \biggiven Z_{n+j}=z_{[i_{k}]}  , \cE_z\big) \\ 
&= \frac{\PP\big(|\bar\cR_{j}|\leq k \biggiven Z_{n+j}=z_{[i_k]}  , \cE_z\big)}{ \sum_{i=1}^n w(x_i) + w(x_{n+j}) }
\Big(  \frac{(n+1)q k/m }{k} - \frac{(n+1) q (k+1)/m }{k+1} \Big) = 0. 
\$
For any $j$, 
the first term in~\eqref{eq:bh_weight2} can be calculated as 
\$
\EE\big [ \ind {\{\bar{p}_j^\nw\leq q  \}} /w(X_{n+j}) \biggiven \cE_z \big]
&=\sum_{\ell=1}^{i_m } \frac{\PP(  \bar{p}_j^\nw\leq q \given Z_{n+j}=z_{[\ell]}, \cE_z) \cdot \PP(Z_{n+j}=z_{[\ell]}\given \cE_z)}{w(x_{[\ell]})} \\
&= \frac{i_m-1 + (n+1)q - i_m +1 }{\sum_{i=1}^n w(x_i) + w(x_{n+j})} = \frac{(n+1)q }{\sum_{i=1}^n w(x_i) + w(x_{n+j})},
\$
where the second equality uses~\eqref{eq:weighted_exch}, 
and the definition of $\bar{p}_j^\nw$. 
Returning to~\eqref{eq:bh_weight1} and~\eqref{eq:bh_weight2}, we obtain 
\$
&\EE\Bigg[  \frac{\sum_{j=1}^m  \ind\{j\in \cR_{\nw},Y_{n+j}\leq c_{n+j}\}/w(X_{n+j})  }{1\vee |\cR^\nw| }   \Bigg] \\ 
&\leq \frac{1}{m}\sum_{j=1}^m  \EE\big [ \ind {\{\bar{p}_j^\nw\leq q  \}} /w(X_{n+j})  \big]  
= \EE\bigg[ \frac{(n+1)q }{\sum_{i=1}^n w(X_i) + w(X_{n+1})}\bigg],
\$
where the last equality uses the fact that 
the covariates of test data are i.i.d. This concludes the proof of Theorem~\ref{thm:bh_weighted_fdr}. 
\end{proof}

\subsection{Proof of Non-PRDS}
\label{app:subsec_prds}

\begin{proof}[Proof of Proposition~\ref{prop:counter_PRDS}]
  Without loss of generality, we consider the oracle p-values
  \#\label{eq:orc_w_pval}
p_j^* = \frac{\sum_{i=1}^n w(X_i)\ind {\{V_i < {V}_{n+j} \}}+ ( w(X_{n+j}) + \sum_{i=1}^n w(X_i)\ind {\{V_i =  {V}_{n+j} \}}) \cdot U_j }{\sum_{i=1}^n w(X_i) + w(X_{n+j})},
\#
which 
coincide with~\eqref{eq:def_wcpval_rand} when $Y_{n+j}=c_{n+j}$.

We prove by constructing a counterexample and computing 
the distribution functions of p-values to show the violation of 
the PRDS property. 
We set $w(x)=x$ for $P=\textrm{Unif}([1/2,3/2])$, let $Q$ 
have density function $q(x)=x\ind\{1/2\leq x\leq 3/2\}$, 
and $V(x,y) = -w(x)$. 
Take $n=2$ and any arbitrary $m\geq 2$. 
If the weighted conformal p-values are PRDS, we would have 
\$
F(s,t) := \PP(p_1\leq s\given p_2 \geq t) 
\$
is decreasing in $t\in (0,1)$ for any fixed $s\in (0,1)$. 
However, we will show $F(\frac{1}{10},\frac{3}{5}) < F(\frac{1}{10},\frac{9}{10})$. 
We will compute the two values exactly; 
they can also be approximated to arbitrary accuracy 
via Monte Carlo.

For notational simplicity, we write $W_i=w(X_i)$ for $i=1,2$, 
and $\tilde{W}_j = w(X_{2+j})$ for $j=1,2$. 
By construction, $W_1,W_2,\tilde{W}_1,\tilde{W}_2$ are mutually independent with 
$W_1,W_2\sim P$, $\tilde{W}_1,\tilde{W}_2\sim Q$. 
As $P$ and $Q$ do not have point mass, 
there are no ties among the four scores almost surely. 
By the symmetry of $W_1,W_2$, for any $s,t\in (0,1)$, we have 
\$
F(s,t) =(1-t)^{-1} \cdot  \PP(p_1\leq s, p_2\geq t)
= 2(1-t)^{-1}\cdot \PP(p_1\leq s, p_2\geq t, W_1<W_2).
\$
Since $v(x,y) = -w(x)$, the two weighted conformal p-values are equivalently 
\$
p_i = \frac{W_1\ind\{W_1>\tilde{W}_i\} + W_2\ind\{W_2>\tilde{W}_i\} + U_i \tilde{W}_i }{W_1+W_2+\tilde{W}_i},\quad i=1,2.
\$
When $W_1<W_2$, the form of $p_i$ can be divided into three cases:
\begin{enumerate}[(i)]
    \item When $\tilde{W}_i<W_1<W_2$, one has $p_i = \frac{W_1+W_2+U_i\tilde{W}_i}{W_1+W_2+\tilde{W}_i}$, and  $ p_i \in[2/3, 1]$ holds deterministically. 
    \item When $W_1<\tilde{W}_i<W_2$, one has $p_i = \frac{ W_2+U_i\tilde{W}_i}{W_1+W_2+\tilde{W}_i}$. As $W_i,\tilde{W}_i\in [1/2,3/2]$, it holds that $p_i\in[1/3,6/7]$.
    \item When $W_1<W_2<\tilde{W}_i$, one has $p_i = \frac{ U_i\tilde{W}_i}{W_1+W_2+\tilde{W}_i}$, and $p_i\in [0,3/5]$.
\end{enumerate}

Intuitively, the reason for violating PRDS in this case is that, 
when $p_1\leq 1/10$ is small, it must always take the form 
$
p_1 = \frac{ U_1\cdot \tilde{W}_1 }{W_1+W_2+\tilde{W}_1} 
$
as  in case (iii). 
Meanwhile, when $p_2\geq 3/5$ is relatively large, it may 
be as cases (i) or (ii). In that two cases, roughly, 
a larger $p_2$ means relatively larger $W_1,W_2$, hence smaller $p_1$. 

We now conduct the exact computation. For $t=9/10$, 
when $W_1<W_2$, we note that 
it must be $p_2 = \frac{W_1+W_2+U_2\tilde{W}_2}{W_1+W_2+\tilde{W}_2}$, 
i.e., $\tilde{W}_2<W_1<W_2<\tilde{W}_1$. Therefore, 
\#\label{eq:prob1}
&\PP(p_1\leq 1/10, p_2\geq 9/10, W_1<W_2)\notag \\ 
&= \PP\big(p_1\leq 1/10, p_2\geq 9/10, \tilde{W}_2<W_1<W_2<\tilde{W}_1\big)\notag  \\ 
&=\PP\Bigg(  \frac{ U_1\cdot \tilde{W}_1 }{W_1+W_2+\tilde{W}_1} \leq \frac{1}{10}, ~\frac{W_1+W_2+U_2\tilde{W}_2}{W_1+W_2+\tilde{W}_2} \geq \frac{9}{10} , ~ \tilde{W}_2<W_1<W_2<\tilde{W}_1\Bigg)\notag \\ 
&= \PP\Bigg(  \frac{ U_1\cdot \tilde{W}_1 }{W_1+W_2+\tilde{W}_1} \leq \frac{1}{10}, ~\frac{W_1+W_2+U_2\tilde{W}_2}{W_1+W_2+\tilde{W}_2} \geq \frac{9}{10} , ~ \tilde{W}_2<W_1<W_2<\tilde{W}_1\Bigg) \notag\\ 
&= \PP\Bigg(  U_1\leq \frac{W_1+W_2+\tilde{W}_1}{  10 \tilde{W}_1 }  , ~
1-U_2\geq \frac{W_1+W_2+\tilde{W}_2}{ 10\tilde{W}_2} , ~ \tilde{W}_2<W_1<W_2<\tilde{W}_1\Bigg).
\#
We note that in this case $\frac{W_1+W_2+\tilde{W}_1}{  10 \tilde{W}_1 },\frac{W_1+W_2+\tilde{W}_2}{ 10\tilde{W}_2} \in [0,1]$ deterministically. 
Since $U_1,U_2$ are mutually independent, and also independent of 
everything else, by the tower property, 
\$
\eqref{eq:prob1} =  \EE\Bigg[    \frac{W_1+W_2+\tilde{W}_1}{  10 \tilde{W}_1 } \cdot \frac{W_1+W_2+\tilde{W}_2}{ 10\tilde{W}_2} \cdot \ind\{ \tilde{W}_2<W_1<W_2<\tilde{W}_1\} \Bigg].
\$
By the symmetry of $W_1,W_2$, we know 
\#\label{eq:prob12}
&\PP(p_1\leq 1/10, p_2\geq 9/10 )  = 2\times \eqref{eq:prob1} \notag  \\ 
&= \EE\Bigg[    \frac{W_1+W_2+\tilde{W}_1}{  10 \tilde{W}_1 } \cdot \frac{W_1+W_2+\tilde{W}_2}{ 10\tilde{W}_2} \cdot \ind\{ W_1,W_2 \in[\tilde{W}_2,\tilde{W}_1], \tilde{W}_2 < \tilde{W}_1]\} \Bigg].
\#
Note that conditional on $\tilde{W}_2,\tilde{W}_1$ 
and the event $\{  W_1,W_2 \in[\tilde{W}_2,\tilde{W}_1]\}$, 
one has $W_1,W_2\sim \textrm{Unif}([\tilde{W}_1,\tilde{W}_2])$ and are mutually independent. 
Applying tower property again, we have 
\$
\eqref{eq:prob12} &= \EE\Bigg[    \frac{13\tilde{W}_1^2/6 + 13\tilde{W}_2^2/6 + 14\tilde{W}_1\tilde{W}_2/3 }{  10 0\tilde{W}_1 \cdot \tilde{W}_2 } \cdot  \PP\big(   W_1,W_2 \in[\tilde{W}_2,\tilde{W}_1]\biggiven \tilde{W}_1,\tilde{W}_2\big) \cdot \ind\{\tilde{W}_2<\tilde{W}_1\} \Bigg] \\ 
&=\EE\Bigg[    \frac{13\tilde{W}_1^2/6 + 13\tilde{W}_2^2/6 + 14\tilde{W}_1\tilde{W}_2/3 }{  10 0\tilde{W}_1 \cdot \tilde{W}_2 } \cdot  (\tilde{W}_1-\tilde{W}_1)^2 \cdot \ind\{\tilde{W}_2<\tilde{W}_1\} \Bigg] = \frac{547}{72000},
\$
where the exact number can be computed using the distribution of $\tilde{W}_1$
and $\tilde{W}_2$. We thus have 
\$
F(1/10, 9/10) = \frac{547}{7200}.
\$

For $t=3/5$, when $p_2\geq t$  it could either be case (i) or (ii). Thus
\$
\PP(p_1\leq 1/10, p_2\geq 3/5, W_1<W_2)
&= \PP(p_1\leq 1/10, p_2\geq 3/5, \tilde{W}_2<W_1<W_2<\tilde{W}_1) \\
&\qquad +\PP(p_1\leq 1/10, p_2\geq 3/5, W_1<\tilde{W}_2<W_2<\tilde{W}_1).
\$
For the first case, 
\#\label{eq:prob2}
& \PP(p_1\leq 1/10, p_2\geq 3/5, \tilde{W}_2<W_1<W_2<\tilde{W}_1) \notag \\
&=  \PP\Bigg(  U_1\leq \frac{W_1+W_2+\tilde{W}_1}{  10 \tilde{W}_1 }  , ~
1-U_2\geq \frac{2W_1+2W_2+2\tilde{W}_2}{ 5\tilde{W}_2} , ~ \tilde{W}_2<W_1<W_2<\tilde{W}_1\Bigg). 
\#
One can check that $\frac{2W_1+2W_2+2\tilde{W}_2}{ 5\tilde{W}_2} \geq \frac{6}{5}$ 
when $\tilde{W}_2<W_1<W_2$, and that 
$\frac{W_1+W_2+\tilde{W}_1}{  10 \tilde{W}_1 } \in[0,1]$, hence by the tower property 
and the symmetry of $W_1,W_2$, we know 
\$
\eqref{eq:prob2} &= 
\EE\Bigg[   \frac{W_1+W_2+\tilde{W}_1}{  10 \tilde{W}_1 } \cdot \ind\{ \tilde{W}_2<W_1<W_2<\tilde{W}_1 \}\Bigg] \\ 
&= \frac{1}{2}\cdot \EE\Bigg[  
    \EE\bigg[  \frac{W_1+W_2+\tilde{W}_1}{  10 \tilde{W}_1 } \bigggiven  W_1,W_2\in[ \tilde{W}_2,\tilde{W}_1 ] ,\tilde{W}_1,\tilde{W}_2 \bigg] \cdot  \PP\big(  W_1,W_2\in[ \tilde{W}_2,\tilde{W}_1 ] \biggiven \tilde{W}_1,\tilde{W}_2\big) \cdot \ind\{\tilde{W}_2<\tilde{W}_1 \}\Bigg] \\ 
&=\frac{1}{2}\cdot \EE\Bigg[  
    \EE\bigg[  \frac{W_1+W_2+\tilde{W}_1}{  10 \tilde{W}_1 } \bigggiven  W_1,W_2\in[ \tilde{W}_2,\tilde{W}_1 ] ,\tilde{W}_1,\tilde{W}_2 \bigg] \cdot (\tilde{W}_1-\tilde{W}_2)^2  \cdot \ind\{\tilde{W}_2<\tilde{W}_1 \}\Bigg] 
\$
Conditional on $\tilde{W}_1,\tilde{W}_2$ and the event $\{  W_1,W_2\in[ \tilde{W}_2,\tilde{W}_1 ] \}$, we know $W_1,W_2\sim \textrm{Unif}([\tilde{W}_2,\tilde{W}_1])$, and  
are mutually independent. Applying the tower property again, we have 
\$
\eqref{eq:prob2} &= 
\frac{1}{2}\cdot \EE\Bigg[  
        \frac{2 \tilde{W}_1 + \tilde{W}_2 }{  10 \tilde{W}_1 } \cdot (\tilde{W}_1-\tilde{W}_2)^2\cdot \ind\{ \tilde{W}_2<\tilde{W}_1 \}\Bigg] = \frac{47}{4800}.
\$
For the second case, 
\#\label{eq:prob3}
& \PP(p_1\leq 1/10, p_2\geq 3/5, W_1<\tilde{W}_2<W_2<\tilde{W}_1) \notag \\
&=  \PP\Bigg(  U_1\leq \frac{W_1+W_2+\tilde{W}_1}{  10 \tilde{W}_1 }  , ~
1-U_2\geq \frac{-3W_1+2W_2+2\tilde{W}_2}{ 5\tilde{W}_2} , ~ W_1<\tilde{W}_2<W_2<\tilde{W}_1\Bigg). 
\#
One can check that $\frac{W_1+W_2+\tilde{W}_1}{  10 \tilde{W}_1 },\frac{-3W_1+2W_2+2\tilde{W}_2}{ 5\tilde{W}_2}\in[0,1]$. The tower property thus implies 
\$
\eqref{eq:prob3} &= \EE\Bigg[   \frac{W_1+W_2+\tilde{W}_1}{  10 \tilde{W}_1 }\cdot \frac{-3W_1+2W_2+2\tilde{W}_2}{ 5\tilde{W}_2}\cdot \ind\{ W_1<\tilde{W}_2<W_2<\tilde{W}_1\}\Bigg]\\
&= \EE\Bigg[  \EE\bigg[  \frac{W_1+W_2+\tilde{W}_1}{  10 \tilde{W}_1 }\cdot \frac{-3W_1+2W_2+2\tilde{W}_2}{ 5\tilde{W}_2} \bigggiven W_1<\tilde{W}_2<W_2<\tilde{W}_1, \tilde{W}_1,\tilde{W}_2\bigg] \\ 
&\qquad \qquad \cdot \PP\big( W_1<\tilde{W}_2<W_2<\tilde{W}_1 \biggiven \tilde{W}_1,\tilde{W}_2\big)  \cdot \ind\{\tilde{W}_2<\tilde{W}_1\} \Bigg] = 
\$
Note that conditional on $\tilde{W}_1,\tilde{W}_2$ 
and the event $\{W_1<\tilde{W}_2<W_2<\tilde{W}_1\}$, 
$W_1$ and $W_2$ are mutually independent with 
$W_1\sim \textrm{Unif}([1/2,\tilde{W}_2])$ and $W_2\sim \textrm{Unif}([\tilde{W}_2,\tilde{W}_1])$. We thus have 
\$
\eqref{eq:prob3} &= \EE\Bigg[  \frac{5\tilde{W}_1^2+35\tilde{W}_1\tilde{W}_2/12+17\tilde{W}_2^2/12- 7\tilde{W}_1/8-\tilde{W}_2/8-1/4}{50\tilde{W}_1\cdot \tilde{W}_2} \cdot (\tilde{W}_2-1/2)\cdot(\tilde{W}_1-\tilde{W}_2) \cdot \ind\{\tilde{W}_2<\tilde{W}_1\}  \Bigg].
\$
Using the distribution of $\tilde{W}_1,\tilde{W}_2$, we have 
$\eqref{eq:prob3} = \frac{361}{72000}$. Combining the two cases, we have 
\$
\PP(p_1\leq 1/10, p_2\geq 3/5 ) = 
2\cdot \PP(p_1\leq 1/10, p_2\geq 3/5, W_1<W_2)
= \frac{2\cdot 47}{4800} + \frac{2\cdot 361}{72000} = \frac{533}{18000}.
\$
We thus have 
\$
F(1/10, 3/5) = \frac{533}{7200} < \frac{547}{7200} = F(1/10,9/10).
\$
Therefore, $p_1$ and $p_2$ are not PRDS, concluding the proof of Proposition~\ref{prop:counter_PRDS}.
\end{proof}

\subsection{Proof of Theorem~\ref{thm:fdr_asymp}}
\label{app:thm_fdr_asymp}

\begin{proof}[Proof of Theorem~\ref{thm:fdr_asymp}]
We use a representation of the BH($q$) procedure  
communicated in~\cite{storey2004strong}: 
the rejection set of BH($q$) procedure 
is $\cR = \{j\colon p_j \leq \hat\tau\}$, where 
\#\label{eq:BH_hat_tau}
\hat\tau = \sup \bigg\{ t\in [0,1]\colon \frac{mt}{\sum_{j=1}^m \ind\{p_j \leq t\}} \leq q  \bigg\}.
\#
To clarify the dependence on the calibration data, we denote the $p$-values as 
$
p_j = \hat{F}_n( V_{j}^{0}, U_j, W_j),
$ 
where for simplicity we denote $V_j^0 = \hat{V}_{n+j}=V(X_{n+j},C_{n+j} )$ and 
$W_j = w(X_{n+j})$. As we suppose $\{C_{n+j}\}_{j=1}^n$ are i.i.d.~random varibles, 
here $\{V_j^0\}_{j=1}^m$ are also i.i.d.~random variables. 
Finally, we define 
\$
\hat{F}_n(v,u,w) = \frac{\sum_{i=1}^n w(X_i)\ind {\{V_i < v \}}+  ( w + \sum_{i=1}^n w(X_i)\ind {\{V_i = v \}})\cdot u }{\sum_{i=1}^n w(X_i) + w}
\$
for any $(v,u,w)\in \RR\times[0,1]\times \RR^+$. 
Note that conditional on the training process, 
 $\{(V_j^0,U_j,W_j)\colon 1\leq j\leq m\}$
are i.i.d., and independent of $\cD_\calib$. 
We first show the uniform concentration of $\hat{F}_n$ to $F$ defined in~\eqref{eq:def_F}. 

\begin{lemma}\label{lem:pval_conv}
Let $\hat{F}_n$ and $F$ defined as above. Then for any fixed constant $M>0$, 
as $n\to \infty$, we have 
\$
\sup_{v\in \RR, u\in [0,1], w\in [0,M]} \big|  \hat{F}_n(v,u,w) - F(v,u) \big| ~ \asto ~ 0.
\$
\end{lemma}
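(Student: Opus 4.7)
My plan is to rewrite both $\hat F_n$ and $F$ in terms of two weighted cumulative distribution functions, and then reduce the claim to a weighted Glivenko--Cantelli statement plus an SLLN for the normalizing denominator.

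First, introduce $G(v) = Q(V(X,Y) < v)$, $H(v) = Q(V(X,Y) \leq v)$, together with their weighted empirical counterparts
\[
\hat G_n(v) = \frac{1}{n}\sum_{i=1}^n w(X_i)\ind {\{V_i < v\}}, \quad
\hat H_n(v) = \frac{1}{n}\sum_{i=1}^n w(X_i)\ind {\{V_i \leq v\}}, \quad
\bar w_n = \frac{1}{n}\sum_{i=1}^n w(X_i).
\]
A direct rearrangement gives
\[
\hat F_n(v,u,w) \;=\; \frac{\hat G_n(v)(1-u) + \hat H_n(v)\,u + wu/n}{\bar w_n + w/n}, \qquad F(v,u) \;=\; G(v)(1-u) + H(v)\,u.
\]

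Next I would collect the pointwise/averaged limits. Because the covariate shift \eqref{eq:cov_shift} gives $dQ/dP = w$ on the joint law of $(X,Y)$, we have $\EE_P[w(X)] = 1$, $\EE_P[w(X)\ind {\{V < v\}}] = G(v)$ and $\EE_P[w(X)\ind {\{V \leq v\}}] = H(v)$. Since $w$ is bounded and $(X_i,Y_i)\iid P$, the SLLN yields $\bar w_n \asto 1$ and, for every fixed $v$, $\hat G_n(v) \asto G(v)$ and $\hat H_n(v) \asto H(v)$.

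The main step is to upgrade the pointwise convergence to uniformity in $v$. I would invoke a weighted Glivenko--Cantelli theorem for the class $\{(x,y)\mapsto w(x)\ind {\{V(x,y) < v\}} : v\in \RR\}$, which is uniformly bounded by $\|w\|_\infty$ and is a VC-subgraph class (indicators of half-lines are VC, and multiplication by a fixed bounded function preserves the requisite uniform entropy bound). The same applies to the class with $\leq v$. Alternatively, a fully elementary argument works: $\hat G_n$ and $G$ are both nondecreasing in $v$, so for any $\epsilon>0$ one may choose a finite grid that includes every atom of $Q$ of mass at least $\epsilon$ and subdivides the continuous part so that successive jumps of $G$ are at most $\epsilon$; apply the SLLN at the finitely many grid points and interpolate by monotonicity. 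Either approach gives $\sup_{v\in\RR}|\hat G_n(v)-G(v)| \asto 0$ and $\sup_{v\in\RR}|\hat H_n(v)-H(v)| \asto 0$.

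To conclude, write
\[
\hat F_n(v,u,w) - F(v,u) \;=\; \frac{[\hat G_n(v)-G(v)](1-u) + [\hat H_n(v)-H(v)]u + wu/n - (\bar w_n - 1 + w/n)F(v,u)}{\bar w_n + w/n}.
\]
For $(v,u,w)\in\RR\times[0,1]\times[0,M]$, each numerator term is bounded uniformly by a quantity tending a.s.\ to zero: the first two by the Glivenko--Cantelli bounds just established, the third by $M/n$, and the last by $|\bar w_n-1| + M/n$ since $0\leq F \leq 1$. Because $\bar w_n \asto 1$, eventually $\bar w_n + w/n \geq 1/2$ uniformly in $w\geq 0$, so the denominator is a.s.\ bounded away from zero. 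Dividing yields the desired uniform almost-sure convergence. The only nontrivial obstacle is the weighted Glivenko--Cantelli step; everything else is bookkeeping once that is in place.
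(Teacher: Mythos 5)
Your proof is correct and takes essentially the same route as the paper: you rewrite $\hat F_n$ in terms of the normalized weight sum plus weighted empirical c.d.f.'s with $<$ and $\leq$, control the $O(w/n)$ remainder and the denominator via the SLLN, and invoke a weighted Glivenko--Cantelli argument (your elementary grid-plus-atoms alternative is precisely the content of the paper's Lemma~\ref{lem:ulln}) for the uniformity in $v$.
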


\begin{proof}[Proof of Lemma~\ref{lem:pval_conv}]
The strong law of large numbers ensures $\frac{1}{n}\sum_{i=1}^n w(X_i) ~\asto~ 1$. Thus by definition, 
\$
&\bigg|  \hat{F}_n(v,u,w) - \frac{\sum_{i=1}^n w(X_i)\ind {\{V_i < v \}}+  (  \sum_{i=1}^n w(X_i)\ind {\{V_i = v \}})\cdot u }{\sum_{i=1}^n w(X_i) } \bigg| \\
&\leq \frac{w}{\sum_{i=1}^n w(X_i) + w} + w\cdot \frac{\sum_{i=1}^n w(X_i)\ind {\{V_i < v \}}+  ( w + \sum_{i=1}^n w(X_i)\ind {\{V_i = v \}})\cdot u }{(\sum_{i=1}^n w(X_i)) \cdot (\sum_{i=1}^n w(X_i) + w)} \\ 
&\leq \frac{w}{\sum_{i=1}^n w(X_i) } +  \frac{w }{\sum_{i=1}^n w(X_i)},\quad \forall v\in \RR, u\in [0,1], w\in [0,M].
\$
This implies 
\$
\sup_{v\in \RR, u\in [0,1], w\in [0,M]} \bigg|  \hat{F}_n(v,u,w) - \frac{\sum_{i=1}^n w(X_i)\ind {\{V_i < v \}}+  (  \sum_{i=1}^n w(X_i)\ind {\{V_i = v \}})\cdot u }{\sum_{i=1}^n w(X_i) } \bigg| ~\asto ~ 0.
\$
On the other hand, the uniform law of large numbers in Lemma~\ref{lem:ulln} implies 
\$
\sup_{v\in \RR, u\in [0,1], w\in [0,M]} \bigg|  \frac{\sum_{i=1}^n w(X_i)\ind {\{V_i < v \}}+  (  \sum_{i=1}^n w(X_i)\ind {\{V_i = v \}})\cdot u }{\sum_{i=1}^n w(X_i) } - F(v,u) \bigg| ~\asto ~ 0.
\$
Putting this two almost sure convergence together completes the proof of Lemma~\ref{lem:pval_conv}. 
\end{proof}

Lemma~\ref{lem:pval_conv} forms the basis of proving asymptotic results. 
In the following, we study the two asymptotic regimes (i) $n\to \infty$ with $m$ fixed and (ii) $m,n\to \infty$, separately.

\paragraph{Case (i): $m$ fixed, $n\to \infty$.}
By Lemma~\ref{lem:pval_conv}, we know that  
$\{p_j\}$ converges almost surely (and uniformly over all $j=1,\dots,m$) to 
$\{p_j^\infty\}$, where we define 
\$
p_j^\infty := F(V_j^0, U_j), 
\$
with $F$ defined in~\eqref{eq:def_F} of Theorem~\ref{thm:fdr_asymp}. 
Note that $\{p_j^\infty\}$ are i.i.d., and all of them have no point mass 
since $\{U_j\} \sim $ Unif([0,1]) are 
independent of the observations. 
In the following, 
we denote as $\cR$  the rejection set $\cR$ 
of BH($q$) with $\{p_j\}$. We are to show that $\cR$ 
is asymptotically the same as 
$\cR^\infty$, the rejection set 
obtained from BH($q$) with $\{p_j^\infty\}$, in the sense that 
$\limsup_{n\to \infty}\PP(\cR\neq \cR^\infty)=0$. 
Note that 
\$
\PP(\cR\neq \cR^\infty) &\leq \sum_{j=1}^m \PP(j \in \cR, j\notin \cR^\infty) + \sum_{j=1}^m \PP(j \notin \cR, j\in \cR^\infty) \\ 
&\leq \sum_{j=1}^m \sum_{\ell=0}^m \PP(p_j\leq \ell k/m,~ p_j^\infty > \ell k/m) 
+ \sum_{j=1}^m \sum_{\ell=0}^m \PP(p_j> \ell k/m,~ p_j^\infty \leq \ell k/m).
\$
For any fixed $\delta>0$, we have 
\$
&\limsup_{n\to \infty} \PP(p_j\leq \ell k/m,~ p_j^\infty > \ell k/m) \\
&
\leq \limsup_{n\to \infty} \PP(p_j\leq \ell k/m  ,~ p_j^\infty > \ell k/m + \delta ) + 
\limsup_{n\to \infty} \PP( \ell k/m  < p_j^\infty\leq  \ell k/m + \delta) \\ 
&= \limsup_{n\to \infty} \PP( \ell k/m  < p_j^\infty\leq  \ell k/m + \delta).
\$ 
Taking $\delta \to 0$ and noting that $p_j^*$ does not have point mass at $\{\ell k/m\}$, 
we have $\limsup_{n\to \infty} \PP(p_j\leq \ell k/m,~ p_j^\infty > \ell k/m) = 0$. 
Similar arguments also yield 
$\limsup_{n\to \infty} \PP(p_j> \ell k/m,~ p_j^\infty \leq \ell k/m) = 0$. 
Summing over finitely many terms $1\leq j\leq m$ and $0\leq \ell \leq m$ 
leads to 
\# \label{eq:equiv_RR}
\PP(\cR\neq \cR^\infty) \to 0.
\# Let 
$\fdr(\cR)$, $\fdr(\cR^\infty)$ 
denote the FDR for $\cR$ and $\cR^\infty$, respectively.  
As a result of~\eqref{eq:equiv_RR}, we have 
\$
\limsup_{n\to \infty} \fdr(\cR) 
&\leq \limsup_{n\to \infty} \EE\big[  \fdp(\cR) \cdot \ind\{\cR = \cR^\infty \} \big] 
+ \limsup_{n\to \infty} \EE\big[ \fdp(\cR) \cdot \ind\{\cR \neq \cR^\infty \} \big] \\ 
&\leq  \EE\big[  \fdp(\cR^\infty) \big] 
+ \limsup_{n\to \infty} \EE\big[ \ind\{\cR \neq \cR^\infty \} \big] 
\leq \fdr(\cR^\infty),
\$
where the second inequality uses the fact that $\fdp(\cR)\leq 1$, 
and the last inequality uses $\EE\big[  \fdp(\cR^\infty) \big] = \fdr(\cR^\infty)$. 
For each $j$, we let $\cR_{j\to 0}^\infty$ denote 
the rejection set of BH($q$) applied to the set of p-values
$\{p_1^\infty,\dots,p_{j-1}^\infty, 0, p_{j+1}^\infty,\dots, p_m^\infty\}$, i.e.,
when we set $p_j^\infty$ to $0$ while keeping others $\{p_\ell^\infty\}_{\ell\neq j}$ fixed. 
Then we know that $\cR^\infty = \cR_{j\to 0}^\infty$ 
on the event  $\{j\in \cR^\infty\}$. Therefore, 
the FDR for $\cR^\infty$ can be bounded as 
\$
\fdr(\cR^\infty) &= \EE\bigg[  \frac{\sum_{j=1}^m \ind\{j\in \cR^\infty, j\in \cH_0\}}{1\vee |\cR^\infty|}   \bigg]   
= \sum_{j=1}^m \EE\bigg[  \frac{ \ind\{j\in \cR^\infty , j\in \cH_0\}}{1\vee |\cR^\infty_{j\to 0}|}   \bigg]. 
\$
By the definition of BH($q$) procedure, 
we know $j\in \cR^\infty$ implies $p_j^\infty \leq q|\cR^\infty|/m$. 
In the right-handed side of 
the above equation, each term can be decomposed as 
\$
\EE\bigg[  \frac{ \ind\{j\in \cR^\infty , j\in \cH_0\}}{1\vee |\cR^\infty_{j\to 0}|}   \bigg]
&= \sum_{k=1}^m \EE\bigg[ \ind\{|\cR^\infty | = k\}  \frac{ \ind\{j\in \cR^\infty, j\in \cH_0\}}{ k }   \bigg] \\
&\leq \sum_{k=1}^m \EE\bigg[ \ind\{|\cR_{j\to 0}^\infty | = k\}  \frac{ \ind\{ p_j^\infty \leq \frac{qk}{m}, Y_{n+j} \leq C_{n+j} \}}{ k }   \bigg] \\ 
&= \sum_{k=1}^m  \PP\big( |\cR_{j\to 0}^\infty | = k \big) \cdot \frac{1}{k}\PP\Big( p_j^\infty \leq \frac{qk}{m}, Y_{n+j} \leq C_{n+j}\Big),
\$
where the first equality uses the fact that 
$(p_j^\infty,Y_{n+j} ,C_{n+j} )$
are independent of all other samples and hence 
also independent of $\cR_{j\to 0}^\infty$; 
the second line uses the fact that $\{|\cR^\infty|=k,~j\in \cR^\infty\}$ implies $\{|\cR_{j\to 0}^\infty|=k,~p_j^\infty \leq qk/m\}$. 
Due to the monotonicity of $V(x,y)$ in $y$ 
and the monotonicity of $F(v,u)$ in $v$, 
\$
\PP\Big( p_j^\infty \leq \frac{qk}{m}, Y_{n+j} \leq C_{n+j} \Big)
&=
\PP\Big( F\big(V(X_{n+j},C_{n+j} ),U_j\big) \leq \frac{qk}{m}, Y_{n+j} \leq C_{n+j} \Big) \\ 
&\leq \PP\Big( F\big(V(X_{n+j},Y_{n+j} ),U_j\big) \leq \frac{qk}{m}, Y_{n+j} \leq C_{n+j} \Big) \\ 
& \leq \PP\Big( F\big(V(X_{n+j},Y_{n+j} ),U_j\big) \leq \frac{qk}{m} \Big) = \frac{qk}{m},
\$
where the last equality uses the fact that $F(V(X_{n+j},Y_{n+j} ),U)\sim$ Unif[0,1]
from the definition of $F$, as all the probabilities are taken with respect to 
the test distribution 
$\QQ_{X,Y,C}$ and an independent random variable $U\sim$ Unif([0,1]). 
Therefore, we obtain 
\$\fdr(\cR^\infty) \leq \sum_{j=1}^m \sum_{k=1}^m \PP\big( |\cR_{j\to 0}^\infty | = k \big) \cdot \frac{q}{m} \leq q,
\$ 
which completes the proof of this case.

\paragraph{Case (ii): $m,n\to \infty$.} 
In this case, we are to 
repeatedly employ the (uniform) strong law of large numbers to show the asymptotic behavior 
of the testing procedure. 
Based on Lemma~\ref{lem:pval_conv}, we show the uniform convergence of the criterion 
in~\eqref{eq:BH_hat_tau}. 

\begin{lemma}\label{lem:tau_conv}
Under the above setup, suppose $\sup_{x\in \cX}w(x) \leq M$ for some constant $M>0$. 
Then  
\$
\sup_{t\in[0,1]} \bigg| \frac{1}{m} \sum_{j=1}^m \ind\{\hat{F}_n(V_j^0,U_j,W_j)\leq t\}  
- \PP\big( F(V_j^0, U_j) \leq t\big)  \bigg| ~\asto ~ 0,
\$
as $m,n\to \infty$, 
where $\PP$ is taken with respect to $V_j^0 = V(X_{n+j},C_{n+j})$ and an independent $U_j\sim$ Unif$([0,1])$. 
\end{lemma}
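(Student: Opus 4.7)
The plan is to decouple the two sources of randomness — the calibration set entering $\hat F_n$ and the test data entering $(V_j^0, U_j, W_j)$ — via a triangle inequality. Let $A_{m,n}(t) = m^{-1}\sum_{j=1}^m \ind\{\hat F_n(V_j^0, U_j, W_j)\leq t\}$, $\tilde A_m(t) = m^{-1}\sum_{j=1}^m \ind\{F(V_j^0, U_j)\leq t\}$, and $B(t) = \PP(F(V_1^0, U_1)\leq t)$. Then $\sup_t|A_{m,n}(t) - B(t)| \leq \sup_t|A_{m,n}(t) - \tilde A_m(t)| + \sup_t|\tilde A_m(t) - B(t)|$, and I would bound the two terms separately.

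For the second term, I would simply apply the classical Glivenko--Cantelli theorem: $\{F(V_j^0, U_j)\}_{j=1}^m$ are i.i.d.\ (they depend only on the i.i.d.\ test data and on the independent uniforms $U_j$), so $\sup_t|\tilde A_m(t) - B(t)|\asto 0$ as $m\to\infty$.

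For the first term, I would condition on the calibration set and invoke Lemma~\ref{lem:pval_conv}. Since $w$ is bounded by $M$, all weights $W_j$ lie in $[0, M]$ a.s., so $\Delta_n := \sup_{v, u, w\in[0,M]} |\hat F_n(v,u,w) - F(v,u)|\asto 0$. On the event $\{\Delta_n\leq \epsilon\}$, the monotonicity of $\ind\{\cdot \leq t\}$ gives the uniform sandwich $\tilde A_m(t-\epsilon) \leq A_{m,n}(t)\leq \tilde A_m(t+\epsilon)$, hence $\sup_t|A_{m,n}(t) - \tilde A_m(t)|\leq \sup_t|\tilde A_m(t+\epsilon) - \tilde A_m(t-\epsilon)|$. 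Combined with the Glivenko--Cantelli step above, this is eventually bounded by $\sup_t|B(t+\epsilon) - B(t-\epsilon)| + o(1)$ almost surely.

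The main obstacle is the final step: letting $\epsilon\to 0$ to close the argument requires continuity of $B$ on $[0,1]$, i.e.\ that $F(V_1^0, U_1)$ has no atoms. This is delicate because $V_1^0$ itself may carry atoms. However, at any mass point $v_0$ of $V_1^0$ with $Q(V(X,Y)=v_0)>0$, the independent uniform tie-breaker $U_1$ smooths $F(v_0, U_1)$ into a uniform distribution on $[Q(V{<}v_0), Q(V{\leq}v_0)]$; only mass points $v_0$ with $Q(V(X,Y){=}v_0)=0$ could induce atoms in $B$, a pathology to be ruled out by the mild regularity implicit in the setup (for instance, that $V_1^0$ and $V(X,Y)$ share their atoms, as is automatic when the conformal scores derive from a common model class and the thresholds $C_{n+j}$ are drawn from distributions compatible with that of $Y$). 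Under such regularity, $B$ is continuous and hence uniformly continuous on the compact interval $[0,1]$, allowing the $\epsilon\to 0$ limit and concluding the proof.
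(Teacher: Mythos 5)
Your proof is correct and follows essentially the same route as the paper's: bound the supremum by a Glivenko--Cantelli term plus a term controlled via the uniform convergence of $\hat{F}_n$ to $F$ from Lemma~\ref{lem:pval_conv}, then send $\epsilon\to 0$ using continuity of the law of $F(V_1^0,U_1)$; the paper implements this through a finite grid $\{t_k\}$ of $[0,1]$ rather than a direct $\epsilon$-sandwich, but the two packagings are interchangeable. Your closing caveat about continuity of $B$ is well placed: the paper simply asserts that $F(V_j^0,U_j)$ has a continuous distribution because $U_j$ is an independent uniform, but, as you note, this randomization only removes atoms of $V_j^0$ at values $v_0$ with $Q(V(X,Y)=v_0)>0$; atoms of $V_j^0$ falling in a flat stretch of $v\mapsto Q(V(X,Y)\leq v)$ would survive, so both proofs tacitly invoke the same mild regularity assumption.
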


\begin{proof}[Proof of Lemma~\ref{lem:tau_conv}]
Let $0=t_0<t_1<\cdots<t_K =1$ be a partition of $[0,1]$. 
Then for each $t\in [0,1]$, 
there exists some $k$ such that $t_k \leq t < t_{k+1}$, hence 
\#
& \frac{1}{m} \sum_{j=1}^m \ind\{\hat{F}_n(V_j^0,U_j,W_j)\leq t_{k}\}  
- \frac{1}{m} \sum_{j=1}^m \ind\{F(V_j^0,U_j)\leq t_{k+1}\} \label{eq:asymp_l}\\
&\leq \frac{1}{m} \sum_{j=1}^m \ind\{\hat{F}_n(V_j^0,U_j,W_j)\leq t\}  
- \frac{1}{m} \sum_{j=1}^m \ind\{F(V_j^0,U_j)\leq t\} \notag  \\
&\leq  \frac{1}{m} \sum_{j=1}^m \ind\{\hat{F}_n(V_j^0,U_j,W_j)\leq t_{k+1}\}  
- \frac{1}{m} \sum_{j=1}^m \ind\{F(V_j^0,U_j)\leq t_k\}. \label{eq:asymp_u}
\#
The upper bound~\eqref{eq:asymp_u} above can be further bounded as 
\#\label{eq_asymp_upper_1}
& \bigg|  \frac{1}{m} \sum_{j=1}^m \ind\{\hat{F}_n(V_j^0,U_j,W_j)\leq t_{k+1}\}  
- \frac{1}{m} \sum_{j=1}^m \ind\{F(V_j^0,U_j)\leq t_k\}   \bigg| \notag \\
&\leq \bigg|  \frac{1}{m} \sum_{j=1}^m \ind\{\hat{F}_n(V_j^0,U_j,W_j)\leq t_{k+1}\}  
- \frac{1}{m} \sum_{j=1}^m \ind\{F(V_j^0,U_j)\leq t_{k+1}\}   \bigg| \notag \\
&\quad + \bigg|  \frac{1}{m} \sum_{j=1}^m \ind\{F(V_j^0,U_j)\leq t_{k+1}\}  
- \frac{1}{m} \sum_{j=1}^m \ind\{F(V_j^0,U_j)\leq t_{k}\}   \bigg|.
\#
For any fixed $\delta>0$, we have 
\$
& \bigg|  \frac{1}{m} \sum_{j=1}^m \ind\{\hat{F}_n(V_j^0,U_j,W_j)\leq t_{k+1}\}  
- \frac{1}{m} \sum_{j=1}^m \ind\{F(V_j^0,U_j)\leq t_{k+1}\}   \bigg| \\ 
& \leq \frac{1}{m} \sum_{j=1}^m \Big( \ind\{\hat{F}_n(V_j^0,U_j,W_j)\leq t_{k+1}, F(V_j^0,U_j)> t_{k+1} \}  
 +  \ind\{\hat{F}_n(V_j^0,U_j,W_j) > t_{k+1}, F(V_j^0,U_j) \leq t_{k+1} \}  \Big) \\ 
& \leq \ind\big\{\sup_j \big|\hat{F}_n(V_j^0,U_j,W_j) -F(V_j^0,U_j)\big|\geq \delta \big\} 
+ \frac{1}{m} \sum_{j=1}^m   \ind\{\hat{F}_n(V_j^0,U_j,W_j)\leq t_{k+1}, t_{k+1}+\delta \geq F(V_j^0,U_j)> t_{k+1}  \}  \\
&\quad 
 + \frac{1}{m} \sum_{j=1}^m  \ind\{\hat{F}_n(V_j^0,U_j,W_j) > t_{k+1}, t_{k+1}- \delta < F(V_j^0,U_j) \leq t_{k+1} \} \\ 
&\leq  \ind\big\{\sup_j \big|\hat{F}_n(V_j^0,U_j,W_j) -F(V_j^0,U_j)\big|\geq \delta \big\} 
+ \frac{1}{m} \sum_{j=1}^m   \ind\{ t_{k+1} - \delta \leq F(V_j^0,U_j) \leq t_{k+1} +\delta \} .
\$
Lemma~\ref{lem:pval_conv} implies $\limsup_{n\to \infty} \ind\big\{\sup_j \big|\hat{F}_n(V_j^0,U_j,W_j) -F(V_j^0,U_j)\big|\geq \delta \big\} = 0$ almost surely. 
Combining 
this observation with the decomposition~\eqref{eq_asymp_upper_1}, we have 
\$
& \limsup_{n\to \infty} \bigg|  \frac{1}{m} \sum_{j=1}^m \ind\{\hat{F}_n(V_j^0,U_j,W_j)\leq t_{k+1}\}  
- \frac{1}{m} \sum_{j=1}^m \ind\{F(V_j^0,U_j)\leq t_k\}   \bigg| \\
& \leq  \frac{1}{m} \sum_{j=1}^m   \ind\{ t_{k+1} - \delta \leq F(V_j^0,U_j) \leq t_{k+1} +\delta \} 
 +  \frac{1}{m} \sum_{j=1}^m \ind\{t_k < F(V_j^0,U_j)\leq t_{k+1}\}   
\$
almost surely. 
Again invoking the (uniform) law of large numbers (c.f.~Lemma~\ref{lem:ulln} applied without weights to i.i.d.~random variables $F(V_j^0,U_j)$),  
\$
\limsup_{m\to \infty} \sup_k &\bigg|\frac{1}{m} \sum_{j=1}^m   \ind\{ t_{k+1} - \delta \leq F(V_j^0,U_j) \leq t_{k+1} +\delta \} 
 - \PP\big(  t_{k+1} - \delta \leq F(V_j^0,U_j) \leq t_{k+1} +\delta  \big) \\ 
 &\quad +  \frac{1}{m} \sum_{j=1}^m \ind\{t_k < F(V_j^0,U_j)\leq t_{k+1}\} - 
 \PP\big( t_k < F(V_j^0,U_j)\leq t_{k+1} \big) \bigg|  = 0
\$
with probability one. We thus have 
\#\label{eq:asymp_upp}
& \limsup_{m,n\to \infty} \sup_k \bigg|  \frac{1}{m} \sum_{j=1}^m \ind\{\hat{F}_n(V_j^0,U_j,W_j)\leq t_{k+1}\}  
- \frac{1}{m} \sum_{j=1}^m \ind\{F(V_j^0,U_j)\leq t_k\}   \bigg| \notag \\
& \leq \sup_{k} \Big|\PP\big(  t_{k+1} - \delta \leq F(V_j^0,U_j) \leq t_{k+1} +\delta  \big) + \PP\big( t_k < F(V_j^0,U_j)\leq t_{k+1} \big) \Big|
\#
for any partition $\{t_k\}$ and $\delta>0$. 
On the other hand, we note that since $U_j \sim $ Unif[0,1] is 
independent of the observations, $F(V_j^0,U_j)$ are i.i.d.~with continuous distributions. 
Letting $\delta \to 0$ 
so that $\{t_k\}$ be fine enough sends the supremum in~\eqref{eq:asymp_upp} 
to zero. With similar arguments, we can show for the lower bound~\eqref{eq:asymp_l} that 
\$
\limsup_{m,n\to \infty} \sup_k \bigg| \frac{1}{m} \sum_{j=1}^m \ind\{\hat{F}_n(V_j^0,U_j,W_j)\leq t_{k}\}  
- \frac{1}{m} \sum_{j=1}^m \ind\{F(V_j^0,U_j)\leq t_{k+1}\} \bigg| = 0
\$ 
almost surely. 
Combining the above two results, we have 
\#\label{eq_asymp_1}
\sup_{t\in[0,1]} \bigg|  \frac{1}{m} \sum_{j=1}^m \ind\{\hat{F}_n(V_j^0,U_j,W_j)\leq t\}  
- \frac{1}{m} \sum_{j=1}^m \ind\{F(V_j^0,U_j)\leq t\}  \bigg| ~\asto ~ 0. 
\#
Invoking the uniform strong law of large numbers to i.i.d.~random 
variables $\{F(V_j^0,U_j)\}_{j=1}^m$, we have 
\$
\sup_{t\in[0,1]} \bigg|   \frac{1}{m} \sum_{j=1}^m \ind\{F(V_j^0,U_j)\leq t\} - \PP\big( F(V_j^0, U_j) \leq t\big)  \bigg| ~\asto ~ 0,
\$
hence by the triangular inequality we complete the proof of Lemma~\ref{lem:tau_conv}. 
\end{proof}

Recall that in Theorem~\ref{thm:fdr_asymp}, 
we assumed that there exists some $t' \in (0,1]$ such that
\$
\frac{\PP\big( F(V_j^0, U_j) \leq t'\big)}{t'} > \frac{1}{q}.
\$
We then define 
\$
t^* = \sup\bigg\{ t\in [0,1]\colon  \frac{\PP\big( F(V_j^0, U_j) \leq t \big)}{t }   \geq \frac{1}{q}    \bigg\}
 = \sup\big\{  t\in[0,1]\colon G_\infty(t) \leq q \big\},
\$
where $G_\infty(\cdot)$ is defined in Theorem~\ref{thm:fdr_asymp}. 
It is well-defined and $t^*\geq t'$. 
Fix any $\delta\in (0, t')$. By Lemma~\ref{lem:tau_conv}, 
\#\label{eq:unif_conv_tau}
\sup_{t\in[\delta,1]} \bigg|   \frac{\sum_{j=1}^m \ind\{F(V_j^0,U_j)\leq t\}}{mt}  - \frac{\PP\big( F(V_j^0, U_j) \leq t\big)}{t}  \bigg| ~\asto ~ 0.
\#
In particular, $\frac{\sum_{j=1}^m \ind\{F(V_j^0,U_j)\leq t'\}}{mt'} ~\asto~ \frac{\PP\big( F(V_j^0, U_j) \leq t'\big)}{t'} > \frac{1}{q}$, 
hence $\hat\tau \geq t' \geq \delta$ eventually 
as $m,n\to \infty$. 
As a result, by the definition of $\hat\tau$ and Fatou's lemma, we have 
\$
\limsup_{m,n\to\infty} \fdr &
= \limsup_{m,n\to \infty} \EE\Bigg[\frac{\sum_{j=1}^m  \ind\{\hat{F}_n(V_j^0, U_j,W_j) \leq \hat\tau, j\in \cH_0\}}{1\vee \sum_{j=1}^m \ind\{\hat{F}_n(V_j^0, U_j,W_j) \leq \hat\tau\}} \Bigg]  \\ 
&\leq  q \cdot \EE\Bigg[\limsup_{m,n\to \infty} \frac{1 }{m \hat \tau}  \sum_{j=1}^m  \ind\{\hat{F}_n(V_j^0, U_j,W_j) \leq \hat\tau, j\in \cH_0\} \Bigg].
\$
The uniform concentration result~\eqref{eq_asymp_1} in the proof of Lemma~\ref{lem:tau_conv} implies 
\$
\limsup_{m,n\to \infty} \bigg| \frac{1 }{m \hat\tau }  \sum_{j=1}^m  \ind\{\hat{F}_n(V_j^0, U_j,W_j) \leq \hat\tau, j\in \cH_0\} - \frac{1 }{m \hat\tau } \sum_{j=1}^m  \ind\{F(V_j^0, U_j,W_j) \leq \hat\tau, j\in \cH_0\} \bigg|
= 0
\$
almost surely. By the monotonicity of the nonconformity score $V$, we know 
\$
&\frac{1 }{m}   \sum_{j=1}^m  \ind\{F(V_j^0, U_j,W_j) \leq \hat\tau, j\in \cH_0\} \\
&= \frac{1 }{m}  \sum_{j=1}^m  \ind \big\{F( V(X_{n+j},C_{n+j}  ), U_j) \leq \hat\tau,~ Y_{n+j} \leq C_{n+j} \big\}  \\ 
&\leq \frac{1 }{m}  \sum_{j=1}^m  \ind \big\{F( V(X_{n+j},Y_{n+j} ), U_j) \leq \hat\tau \big\}.
\$
The uniform strong law of large numbers applied to $F(V(X_{n+j},Y_{n+j} ),U_j)$ implies 
\$
\limsup_{m,n\to \infty} \bigg|  \frac{1 }{m}  \sum_{j=1}^m  \ind \big\{F( V(X_{n+j},Y_{n+j} ), U_j) \leq \hat\tau \big\}
- Q\big( F( V(X , Y ), U ) \leq \hat\tau  \big) \bigg| = 0
\$
almost surely, 
where $Q$ is taken with respect to 
a new pair of $(X,Y)\sim Q$ and 
an independent $U\sim$ Unif([0,1]) while viewing $\hat\tau$ as fixed. 
Consequently, we have 
\$
\limsup_{m,n\to\infty} \fdr 
&\leq q \cdot \EE\Bigg[\limsup_{m,n\to \infty} \frac{ Q\big( F( V(X , Y ), U ) \leq \hat\tau  \big) }{ \hat \tau}   \Bigg].
\$
Note that by definition of $F(u,v)$, for any $t\in [0,1]$, we have 
\$
 Q\big( F( V(X , Y ), U ) \leq t  \big) = t. 
\$
This concludes that $\limsup_{m,n\to \infty} \fdr \leq q$.

Furthermore, since $F(V_j^0,U_j)$ admits a continuous distribution, 
the function $t\mapsto \PP\big( F(V_j^0, U_j) \leq t \big)$ is continuous in $t\in [0,1]$. 
Under the assumption that for any $\epsilon>0$, there exists 
some $|t-t^*|\leq \epsilon$ such that $\frac{\PP ( F(V_j^0, U_j) \leq t)}{t}>1/q$, 
the uniform convergence in~\eqref{eq:unif_conv_tau} implies $\hat\tau ~\asto~ t^*$. 
Let $\delta \in (0, t^*)$ be any fixed value 
such that $\PP\big( F(V_j^0, U_j) \leq \delta \big)>0$. 
Lemma~\ref{lem:tau_conv}, equation~\eqref{eq_asymp_1}, 
the uniform strong law of large numbers, together with the lower boundedness 
of $\PP\big( F(V_j^0, U_j) \leq t  \big) $ for $t\geq \delta$ thus ensures 
\$
\sup_{t\in [\delta,1]} \Bigg|\frac{\sum_{j=1}^m  \ind\{\hat{F}_n(V_j^0, U_j,W_j) \leq t, j\in \cH_0\}}{1\vee \sum_{j=1}^m \ind\{\hat{F}_n(V_j^0, U_j,W_j) \leq t\}} - \frac{ Q \{ F(V(X,C),U)\leq t, Y \leq C \}}{ Q\{F(V(X,C),U)\leq t \} }\Bigg| ~\asto~ 0. 
\$
Since $\hat\tau ~\asto ~  t^*$ and the distribution functions are continuous, the asymptotic FDR is  
\$
\limsup_{m,n\to\infty} \fdr &
= \limsup_{m,n\to \infty} \EE\Bigg[\frac{\sum_{j=1}^m  \ind\{\hat{F}_n(V_j^0, U_j,W_j) \leq \hat\tau, j\in \cH_0\}}{1\vee \sum_{j=1}^m \ind\{\hat{F}_n(V_j^0, U_j,W_j) \leq \hat\tau\}} \Bigg] \\ 
&=  \EE\Bigg[\limsup_{m,n\to \infty}\frac{\sum_{j=1}^m  \ind\{\hat{F}_n(V_j^0, U_j,W_j) \leq \hat\tau, j\in \cH_0\}}{1\vee \sum_{j=1}^m \ind\{\hat{F}_n(V_j^0, U_j,W_j) \leq \hat\tau\}} \Bigg] \\
&= \frac{  Q \{ F(V(X,C),U)\leq t^*, Y \leq C \}}{ Q\{F(V(X,C),U)\leq t^* \} },
\$
where the second line follows from the Dominated Convergence Theorem. 
With similar arguments, we can show that the asymptotic power of the procedure is 
\$
\limsup_{m,n\to\infty}  \text{Power} &= 
\limsup_{m,n\to \infty} \EE\Bigg[\frac{\sum_{j=1}^m  \ind\{\hat{F}_n(V_j^0, U_j,W_j) \leq \hat\tau, j\notin \cH_0\}}{1\vee \sum_{j=1}^m \ind\{Y_{n+j} >C_{n+j} \}} \Bigg] \\
&= \frac{  Q \{ F(V(X,C),U)\leq t^*, Y  > C \}}{ Q\{Y >C \} }.
\$
This concludes the proof for the case $m,n\to \infty$.  
Therefore, we complete the proof of Theorem~\ref{thm:fdr_asymp}. 
\end{proof}

\section{Proofs for finite-sample FDR control}

\subsection{Proof of Theorem~\ref{thm:calib_ite}}
\label{app:thm_calib_ite}

\begin{proof}[Proof of Theorem~\ref{thm:calib_ite}]
We first reduce the FDR for the three choices of $\cR$ 
using the following lemma. 
Its proof is deferred to Appendix~\ref{app:fdr_ite_reduce}. 

\begin{lemma}
\label{lem:fdr_ite_reduce}
Let $\cR = \cR_{\hete}$ or $\cR = \cR_{\homo}$ or $\cR = \cR_{\dtm}$ in Algorithm~\ref{alg:bh}. 
Then $\cR$
satisfies
\# \label{eq:cc_eq1}
{\textnormal \fdr}  =  \EE\Bigg[  \frac{\sum_{j=1}^m \ind{\{j \in \cR, Y_{n+j} \leq c_{n+j}\}} }{1\vee |\cR|} \Bigg]  
\leq \sum_{j=1}^m  \EE\Bigg[ \frac{  \ind {\{  p_j \leq s_j,Y_{n+j} \leq c_{n+j}\}} }{|\hat\cR_{j\to 0} |}   \Bigg],
\#
where $s_j = q|\hat\cR_{j\to 0}|/m$, and $\hat\cR_{j\to 0}$ is the 
rejection set we compute in Line 6 of Algorithm~\ref{alg:bh}. 
\end{lemma}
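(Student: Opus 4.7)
The plan is to prove the inequality separately for each of the three choices of $\cR$, but along a unified line of thought: on the event $\{j\in\cR\}$, replace $|\cR|$ in the denominator by $|\hat\cR_{j\to 0}|$ by exploiting a monotonicity structure of the self-consistency threshold $r^*$ in the pruning variable.

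For the deterministic case $\cR=\cR_\dtm$, the replacement is pointwise. By the defining maximum $r^*_\dtm$, the cardinality satisfies $|\cR_\dtm|\geq r^*_\dtm$, while the event $\{j\in\cR_\dtm\}$ forces $|\hat\cR_{j\to 0}|\leq r^*_\dtm$; chaining the two gives $|\cR_\dtm|\geq|\hat\cR_{j\to 0}|$ on $\{j\in\cR_\dtm\}$, so the $j$-th term of $\fdr$ is bounded sample-by-sample by the $j$-th summand of the right-hand side of~\eqref{eq:cc_eq1}. Summing over $j$ and taking expectation proves the bound in this case.

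For $\cR\in\{\cR_\hete,\cR_\homo\}$, I would condition on the sigma-algebra $\cF$ containing all observations and (in the heterogeneous case) all pruning variables other than $\xi_j$. Writing $A_j:=|\hat\cR_{j\to 0}|$ and letting $\eta\in[0,1]$ denote the remaining scalar (either $\xi_j$ or $\xi$), both the threshold $r^*(\eta)$ and the indicator $\ind\{j\in\cR\}$ become functions of $\eta$ alone after this conditioning. The key structural fact is that $\eta\mapsto r^*(\eta)$ is non-increasing: each term $\ind\{p_\ell\leq s_\ell,\eta A_\ell\leq r\}$ is non-increasing in $\eta$ for fixed $r$, so their sum is, and hence the maximal $r$ satisfying self-consistency can only drop as $\eta$ grows. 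Combined with the fact that $\eta A_j$ is strictly increasing in $\eta$, the event $\{\eta A_j\leq r^*(\eta)\}$ is necessarily a half-interval $[0,\eta^*]$, and by left-continuity of $r^*$ (inherited from the left-continuity of the indicator $\ind\{\eta A_\ell\leq r\}$), the supremum $\eta^*$ satisfies $\eta^* A_j\leq r^*(\eta^*)$.

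Pulling out the $\cF$-measurable factors and integrating out $\eta$, which is uniform on $[0,1]$ and independent of $\cF$, yields
\[
\EE\!\left[\frac{\ind\{j\in\cR,\,Y_{n+j}\leq c_{n+j}\}}{|\cR|}\,\Big|\,\cF\right]
=\ind\{p_j\leq s_j,\,Y_{n+j}\leq c_{n+j}\}\int_0^{\eta^*}\!\frac{d\eta}{r^*(\eta)}
\;\le\;\ind\{p_j\leq s_j,\,Y_{n+j}\leq c_{n+j}\}\cdot\frac{\eta^*}{r^*(\eta^*)},
\]
where the inequality uses $r^*(\eta)\geq r^*(\eta^*)$ on $[0,\eta^*]$ from monotonicity. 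The boundary relation $\eta^* A_j\leq r^*(\eta^*)$ then gives $\eta^*/r^*(\eta^*)\leq 1/A_j$, so the conditional expectation is at most $\ind\{p_j\leq s_j,\,Y_{n+j}\leq c_{n+j}\}/|\hat\cR_{j\to 0}|$. Taking outer expectations and summing over $j$ completes the proof of~\eqref{eq:cc_eq1}. The main obstacle I anticipate is pinning down the monotonicity of $r^*(\eta)$ in the homogeneous case, where the single shared $\xi$ couples the pruning of every hypothesis and the self-consistency threshold is a joint maximum over $r$; once this monotonicity is verified, the crossing bound above is essentially one line and handles both random-pruning options simultaneously.
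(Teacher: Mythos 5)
Your proof is correct, and for the two randomized pruning variants it takes a genuinely different route from the paper. The deterministic case coincides with the paper's argument: $j\in\cR_\dtm$ forces $|\hat\cR_{j\to 0}|\le r^*_\dtm\le|\cR_\dtm|$, and the replacement is pointwise. For $\cR_\hete$ the paper uses a ``leave-one-out'' substitution (send $\xi_j\to 0$, which leaves $r^*$ and $\cR$ unchanged on $\{j\in\cR\}$) combined with independence of $\xi_j$ from the rest; for $\cR_\homo$ it uses an Abel-summation decomposition together with a conditional PRDS lemma for the family $\{\xi\,|\hat\cR_{\ell\to 0}|\}_\ell$. Your argument replaces both with a single integral computation. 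Conditional on $\cF$, the only remaining randomness is the scalar $\eta$, and you exploit that (a) $g_\eta(r)$ is non-increasing and left-continuous in $\eta$, hence $r^*(\eta)$ is non-increasing and left-continuous; (b) $\eta A_j$ is strictly increasing, so the acceptance region for $j$ is a closed interval $[0,\eta^*]$ with $\eta^* A_j\le r^*(\eta^*)$ at the crossing point; and (c) $|\cR(\eta)|=r^*(\eta)$ (which holds because $g$ is integer-valued, non-decreasing, and right-continuous in $r$, so the attained maximum satisfies $g(r^*)=r^*$). Then $\int_0^{\eta^*}r^*(\eta)^{-1}\,d\eta\le\eta^*/r^*(\eta^*)\le 1/A_j$ gives the bound directly. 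The obstacle you flag at the end --- monotonicity of $r^*(\eta)$ under homogeneous pruning where $\xi$ is shared --- is in fact not an obstacle: with $\eta=\xi$ each summand $\ind\{p_\ell\le s_\ell,\eta A_\ell\le r\}$ is non-increasing in $\eta$ for fixed $r$, so $g_\eta$ and hence $r^*(\eta)$ remain non-increasing; the same crossing bound applies. What your approach buys is a unified, more elementary treatment of both random pruning rules that bypasses the PRDS machinery; what the paper's route buys is a closer parallel to the conditional-calibration literature (where the ``send $\xi_j\to 0$'' step and the PRDS argument are standard devices) and hence a more modular argument that generalizes more readily to other pruning schemes.
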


We now show that each term in the summation~\eqref{eq:cc_eq1} is upper bounded by $q/m$. 
For convenience, we will call BH($q$) as the BH procedure at nominal level $q$. 
Recall that 
$\hat\cR_{j\to 0}$ is the rejection set of BH($q$) applied to
$\{  p_1^{(j)},\dots,   p_{j-1}^{(j)}, 0 ,   p_{j+1}^{(j)}, \dots,   p_{m}^{(j)}\}$. 
For clarity, we define $\hat\cR_j$ as 
the rejection set of BH($q$) applied to 
$\{  p_1^{(j)},\dots,  p_{j-1}^{(j)},   p_j ,   p_{j+1}^{(j)}, \dots,  p_{m}^{(j)}\}$, 
which differs from $\hat\cR_{j\to 0}$ in that we 
use $p_j$ instead of $0$ in the $j$-th p-value. 
By the definition of BH procedure, we note that 
\#\label{eq:equiv_1}
j\in \hat\cR_j \quad  \textrm{if and only if} \quad  
 p_j \leq s_j = \frac{q |\hat\cR_{j\to 0} |}{m}.
\#  
We now define the oracle counterparts of these $p$-values as
\#\label{eq:oracle_p_j}
\bar{p}_j = \frac{\sum_{i=1}^n w(X_i)\ind {\{V_i < {V}_{n+j} \}}+   w(X_{n+j})  }{\sum_{i=1}^n w(X_i) + w(X_{n+j})},
\#
and 
\#\label{eq:oracle_p_ell^j}
\bar{p}_{\ell}^{(j)} := \frac{\sum_{i=1}^n w(X_i) \ind {\{V_i < \hat{V}_{n+\ell}\}} + w(X_{n+j}) \ind {\{ {V}_{n+j}<  \hat{V}_{n+\ell}\}} }{\sum_{i=1}^n w(X_i) + w(X_{n+j})}.
\# 
These p-values are not computable and are only for 
the purpose of analysis. 
The difference between $\{p_j\}$ and $\{\bar{p}_j\}$ 
(similarly $\{p_\ell^{(j)}\}$ versus $\{\bar{p}_\ell^{(j)}\}$) is 
only in whether we use 
the observed score $\hat{V}_{n+j} = V(X_{n+j},c_{n+j} )$ or 
the oracle score $V_{n+j} = V(X_{n+j},Y_{n+j})$; 
on the other hand, $\hat{V}_{n+\ell}$ for $\ell\neq j$ are used in both constructions. 

Now let 
$\bar\cR_j$ 
be the rejection set of  BH($q$) procedure applied to  
$\{\bar{p}_1^{(j)},\dots,\bar{p}_{j-1}^{(j)},\bar{p}_j,\bar{p}_{j+1}^{(j)},\dots,\bar{p}_m^{(j)}\}$. 
In the following, we are to show that if $j\in \hat\cR_j$ (i.e., $ p_j\leq s_j = q|\hat\cR_{j\to 0}|/m$) 
and $j\in \cH_0$ (recall that $\cH_0$ is the random set of null hypotheses), then 
$\hat\cR_j = \bar\cR_j$, i.e., 
changing $ p_j$ and $\{  p_\ell^{(j)}\}$ 
to their oracle counterparts 
$\bar{p}_j$ and $\{ \bar{p}_\ell^{(j)}\}$ does not change the rejection set
of BH($q$). 

To see this, the monotinicity of nonconformity scores 
implies that $\bar{p}_j \leq   p_j$ on the event $\{j\in \cH_0\} =\{Y_{n+j} \leq c_{n+j}\}$. 
We then consider two cases for $\ell\neq j$ 
on the event $\{j\in \cH_0, j\in \hat\cR_j\}$. 
\begin{enumerate}[(a)]
\item if $\hat V_{n+\ell} > \hat{V}_{n+j}$, then $\bar{p} _\ell^{(j)} =   p_\ell^{(j)}$; 
this is because in this case $\ind\{ \hat{V}_{n+j} < \hat{V}_{n+\ell} \} = \ind\{  {V}_{n+j} < \hat{V}_{n+\ell} \} = 1$.
\item if $\hat{V}_{n+\ell} \leq \hat{V}_{n+j}$, from the step-up nature 
of BH($q$), we know $ {p}_{\ell}^{(j)} \leq  {p}_j$, 
hence $\ell \in \hat\cR_j$ as well.  
Therefore, it always holds that 
\$
\bar{p}_{\ell}^{(j)} \leq \frac{\sum_{i=1}^n w(X_i) \ind {\{V_i < \hat{V}_{n+\ell}\}} + w(X_{n+j})  }{\sum_{i=1}^n w(X_i) + w(X_{n+j})} \leq  
\frac{\sum_{i=1}^n w(X_i) \ind {\{V_i < \hat{V}_{n+j}\}} + w(X_{n+j})  }{\sum_{i=1}^n w(X_i) + w(X_{n+j})}=
  p_j.
\$
Thus, after modifying  $\hat{V}_{n+j}$ to $V_{n+j}$, 
the oracle $\bar{p} _\ell^{(j)}$ still does not exceed the observable $ p_j$. 
\end{enumerate}

Summarizing the above two cases, 
for $\ell\neq j$ such that  $p_\ell^{(j)}>  p_j$, we have $\bar{p}_\ell^{(j)} = p_\ell^j$, 
while  
for $ p_\ell^{(j)} \leq  p_j$, we have $\bar{p}_\ell^{(j)} \leq  p_j$. 
As a result,  
the step-up nature of BH($q$) leads to the fact that 
\#\label{eq:subset}
j\in\cH_0~\text{and}~ j\in \hat\cR_j \quad \Rightarrow \quad j\in \bar\cR_j  ~\text{and}~ \hat\cR_j = \bar\cR_j.
\# 
Finally, we let $\bar \cR_{j\to 0} $ be the rejection set 
of BH($q$) applied to p-values 
$\{\bar{p}_1^{(j)},\dots,\bar{p}_{j-1}^{(j)},0, \bar{p}_{j+1}^j,\dots,\bar{p}_{m}^{(j)}\}$, so that  
$j\in \bar{\cR}_{j\to 0}$ determinisitically by default.  
Then we have 
\#\label{eq:subset_2}
\big\{j\in\cH_0,   p_j \leq s_j = q|\hat\cR_{j\to 0}|/m \big\}
&= \big\{j\in\cH_0, j\in \hat\cR_j \big\} \notag  \\ 
&\subseteq \big\{j\in\cH_0, j\in  \bar{\cR}_j \big\} \\ 
&= \big\{j\in\cH_0, j\in  \bar\cR_j, p_j \leq q| \bar\cR_{j\to 0} |/m \big\}, \notag 
\#
where 
the first line is due to~\eqref{eq:equiv_1}, 
the second line is due to~\eqref{eq:subset}, 
and the last equality follows from the 
oracle counterpart to~\eqref{eq:equiv_1}. 
The above inclusion then implies 
\$
\frac{\ind\{j\in\cH_0,  p_j \leq s_j \} }{  |\hat\cR_{j\to 0} | }
&\leq \frac{\ind\{j\in\cH_0, \bar{p}_j \leq q|\bar\cR_{j\to 0}|/m \} }{  | \bar\cR_{j\to 0} |} .
\$
Consequently, 
\#\label{eq:orc_fdr}
\EE\Bigg[ \frac{  \ind {\{  p_j \leq s_j, j\in \cH_0\}} }{|\hat\cR_{j\to 0} |}   \Bigg] 
\leq 
\EE\Bigg[ \frac{  \ind{\{   \bar{p}_j \leq \frac{q |\bar\cR_{j\to 0} |}{m} , j\in \cH_0\}} }{|\bar\cR_{j\to 0}|}   \Bigg] 
\leq \EE\Bigg[ \frac{  \ind{\{   \bar{p}_j \leq \frac{q |\bar\cR_{j\to 0} |}{m} \}} }{|\bar\cR_{j\to 0}|}   \Bigg]. 
\#
On the right-handed side, 
$\bar{\cR}_{j\to 0}$ only depends on the oracle p-values $\{\bar{p}_\ell^{(j)}\colon \ell\neq j\}$ 
but not $\bar{p}_j$.

We define the event 
$\cE_{z,j} = \big\{ [Z_1,\dots,Z_n,Z_{n+j}]=[z_1,\dots,z_n,z_{n+j}]\big\}$, 
where $Z:=[Z_1,\dots,Z_n,Z_{n+j}]$ is an unordered set of 
all calibration data 
$Z_i = (X_i,Y_i)$, $i=1,\dots,n$ 
and the unobserved test sample $n+j$, 
and $z:=[z_1,\dots,z_n,z_{n+j}]$ is the unordered set of their realized values. 

Conditional on $\cE_{z,j}$, 
the only randomness is only with regard to which 
sample takes on which value among $[z_1,\dots,z_n,z_{n+j}]$. 
Now conditional on $\cE_{z,j}$, 
let $I_j$ be the index in $\{1,\dots,n,n+j\}$ such that 
$Z_{n+j}=z_{I_j}$. Then $\bar{p}_j$ is measurable with respect to $I_j$. 
On the other hand, due to the independence among 
$\{Z_1,\dots,Z_n,Z_{n+j}\}\cup \{\tilde{Z}_{n+\ell}\}_{\ell\neq j}$ 
for $\tilde{Z}_{n+\ell} = (X_{n+\ell},c_{n+\ell})$ in Theorem~\ref{thm:calib_ite}, 
we know that $\{\tilde{Z}_{n+\ell}\}_{\ell\neq j}$ is independent of 
$I_j$ given $\cE_{z,j}$. 
Meanwhile, 
the values of $\{\bar p_\ell^{(j)}\colon \ell\neq j\}$ are fully decided 
by the unordered set $Z$ and $\{\tilde{Z}_{n+\ell}\}_{\ell\neq j}$, 
which means 
$\{\bar p_\ell^{(j)}\colon \ell\neq j\}$ is measurable with respect to 
$\cE_{z,j}$ and $\{\tilde{Z}_{n+\ell}\}_{\ell\neq j}$. 
As $\bar\cR_{j\to 0}$ is determined by $\{\bar p_\ell^{(j)}\colon \ell\neq j\}$, 
we know $|\bar\cR_{j\to 0}|$ is  measurable with respect to 
$\cE_{z,j}$ and $\{\tilde{Z}_{n+\ell}\}_{\ell\neq j}$. 
Putting these arguments together, 
we have the following conditional independence structure: 
\$ 
\bar p_j  \indep  \big| \bar\cR_{j\to 0}  \big| ~\Biggiven ~\cE_{z,j},\quad \forall j.  
\$
This implies 
\$
\EE \Bigg[ \frac{  \ind{\{   \bar p_j \leq \frac{q |\bar\cR_{j\to 0} | }{m} \}} }{ |\bar\cR_{j\to 0}| } \Biggiven \cE_{z,j}  \Bigg] =  \frac{ F_{z,j} (\frac{q |\bar\cR_{j\to 0} |}{m} ) }{|\bar\cR_{j\to 0} |}   ,
\$
where $F_{z,j}(t) = \PP(p_j \leq t \given \cE_{z,j})$ for $t\in \RR$. 
We consider 
the randomized weighted conformal p-value
\$
{p}_j^* = \frac{\sum_{i=1}^n w(X_i)\ind {\{V_i < {V}_{n+j} \}}+  U_j \cdot (  w(X_{n+j}) + \sum_{i=1}^n w(X_i) \ind\{V_i = V_{n+j}\}) }{\sum_{i=1}^n w(X_i) + w(X_{n+j})},
\$
where  $p_j^*\given \cE_z\sim \textrm{Unif}[0,1]$ by the weighted 
exchangeability~\cite{tibshirani2019conformal,hu2020distribution}. 
Since 
$p_j^*\leq \bar p_j$ determinisitically, we 
have $F_{z,j}(t) \leq \PP(p_j^* \leq t\given \cE_{z,j}) =t$ for 
all $t\in \RR$. By the tower property, 
\$
\EE\Bigg[ \frac{  \ind {\{  p_j \leq c_{n+j}, j\in \cH_0\}} }{|\hat\cR_{j\to 0} |}   \Bigg] 
\leq 
\EE\Bigg[\frac{ F_{z,j} (\frac{q |\bar\cR_{j\to 0} |}{m} ) }{|\bar\cR_{j\to 0} |} \Bigg]
\leq \EE\Bigg[\frac{q |\bar\cR_{j\to 0} |}{m|\bar\cR_{j\to 0} | }  \Bigg] = \frac{q}{m}.
\$
Summing over all $j\in\{1,\dots,m\}$ concludes the proof of Theorem~\ref{thm:calib_ite}.
\end{proof}

\subsection{Proof of Lemma~\ref{lem:fdr_ite_reduce}}
\label{app:fdr_ite_reduce}

\begin{proof}[Proof of Lemma~\ref{lem:fdr_ite_reduce}]  
We prove the lemma for the three cases separately. 
\paragraph{Case 1: $\cR = \cR_{\hete}$.} 
The arguments for this case resemble the idea of~\cite{fithian2020conditional}.
We include the proof for completeness since 
here we deal with random hypotheses. 
By definition, $r_{\hete}^* = |\cR|$.  
Then 
\$
\fdr  =  \EE\Bigg[  \frac{\sum_{j=1}^m \ind{\{j \in \cR, Y_{n+j} \leq c_{n+j}\}} }{1\vee |\cR|} \Bigg]  
&= \EE\Bigg[ \frac{\sum_{j\in \cH_0} \ind {\{  p_j \leq s_j,Y_{n+j} \leq c_{n+j}\}} \ind {\{\xi_j \leq \frac{|\cR|}{ |\hat\cR_{j\to 0}| }\}} }{1\vee |\cR|} \Bigg].
\$
Fix any $j\in \{1,\dots,m\}$. 
Once $j\in \cR$, 
in our pruning method, 
sending $\xi_j\to 0$ while keeping other $\xi$'s 
unchanged does not change the rejection set. Therefore, 
letting $\cR_{\xi_j\to 0}$ denote 
the rejection set obtained by replacing $\xi_j$ by $0$ 
in the pruning step, we have 
\$
\fdr &= \sum_{j=1}^m \sum_{k=1}^m \EE\bigg[ \frac{\ind {\{|\cR|=k\}}}{k} \ind {\{  p_j \leq s_j,Y_{n+j} \leq c_{n+j}\}} \ind {\{\xi_j \leq {\textstyle \frac{k}{ |\hat\cR_{j\to 0} | } }   \}}    \bigg] \notag \\ 
&= \sum_{j=1}^m  \sum_{k=1}^m \EE\bigg[ \frac{\ind {\{|\cR_{\xi_j\to 0}|=k\}}}{k}  \ind {\{  p_j \leq s_j,Y_{n+j} \leq c_{n+j}\}} \ind {\{\xi_j \leq {\textstyle \frac{k}{ |\hat\cR_{j\to 0} | }  } \}}   \bigg] \notag \\ 
&= \sum_{j=1}^m \sum_{k=1}^m \EE\bigg[ \frac{\ind {\{|\cR_{\xi_j\to 0}|=k\}}}{|\hat\cR_{j\to 0} |}  \ind {\{p_j \leq s_j\}}   \bigg] 
= \sum_{j=1}^m  \EE\bigg[ \frac{  \ind {\{  p_j \leq s_j,Y_{n+j} \leq c_{n+j}\}} }{|\hat\cR_{j\to 0} |}   \bigg], 
\$
where the third line follows from 
the fact that $\xi_j$ are independent of $\cR_{\xi_j\to 0}$, $  p_j$ and $\hat\cR_{j\to 0}$. 
This concludes the proof of the heterogeneous case. 

\paragraph{Case 2: $\cR = \cR_{\homo}$.} This case 
uses a conditional PRDS property of $\{\xi |\hat\cR_{j\to 0}|\}_{j=1}^m$ 
which we will specify later.   
We let $\EE_\cD$ be the conditional expectation given all the data, 
so that only the randomness in $\xi$ remains. 
Also, one still has $\xi\sim \textrm{Unif}([0,1])$ after conditioning. 
The tower property implies $\fdr = \EE[\fdr(\cD)]$, where we define the conditional FDR 
\$
\fdr(\cD) := \EE_{\cD}\Bigg[  \frac{\sum_{j=1}^m \ind{\{j \in \cR, Y_{n+j} \leq c_{n+j}\}} }{1\vee |\cR|} \Bigg],
\$
and $\EE_{\cD}$ is with respect to the randomness in $\xi$ 
conditional on $\cD=\{(X_i,Y_i)\}_{i=1}^n\cup\{(X_{n+j},c_{n+j})\}_{j=1}^m$. 
Note that $\ind\{  p_j \leq s_j\}$ are deterministic conditional on the data. 
We let $\cR^{(1)} = \{j\colon   p_j \leq s_j\}$ be the first-step rejection set, 
and $\cH_0 = \{j\colon Y_{n+j} \leq c_{n+j}\}$ be the set of null hypotheses, 
both being deterministic given the data. 
Then we have the decomposition 
\$
\fdr(\cD) &= \sum_{k=1}^m \EE_{\cD}\bigg[ \ind\{ |\cR| = k\} \frac{\sum_{j\in \cR_0 \cap \cH_0} \ind{\{ \xi |\hat\cR_{j\to 0}| \leq k \}} }{k} \bigg] \\
&= \sum_{j\in \cR^{(1)} \cap \cH_0} \sum_{k=1}^m \EE_{\cD}\bigg[ \frac{\ind{\{ \xi |\hat\cR_{j\to 0}| \leq k \}}}{k} \left(  \ind\{ |\cR| \leq k\} -  \ind\{ |\cR| \leq k-1\} \right) \bigg].
\$
Fix any $j\in \cR^{(1)}\cap \cH_0$, and let $m_1 = |\cR^{(1)}|$. Then 
we know 
$|\cR|\leq m_1$ deterministically, and thus 
\$
\ind{\{ \xi |\hat\cR_{j\to 0}| \leq k \}} \left(  \ind\{ |\cR| \leq k\} -  \ind\{ |\cR| \leq k-1\} \right)
 = 0
\$
for any $k \geq m_1 + 1$. We then have 
\#\label{eq:homo_bd}
\fdr(\cD,j)&:=\sum_{k=1}^{m} \frac{1}{k}\EE_{\cD}\big[  \ind{\{ \xi |\hat\cR_{j\to 0}| \leq k \}} \left(  \ind\{ |\cR| \leq k\} -  \ind\{ |\cR| \leq k-1\} \right) \big] \notag \\ 
&= \sum_{k=1}^{m_1} \frac{1}{k}\EE_{\cD}\big[  \ind{\{ \xi |\hat\cR_{j\to 0}| \leq k \}} \left(  \ind\{ |\cR| \leq k\} -  \ind\{ |\cR| \leq k-1\} \right) \big] \notag \\
&\leq \frac{1}{m_1} + \sum_{k=1}^{m_1 -1}\frac{1}{k}\EE_{\cD}\big[  \ind{\{ \xi |\hat\cR_{j\to 0}| \leq k \}} \ind\{ |\cR| \leq k\} \big]  \notag  \\ 
&\qquad \qquad - \sum_{k=1}^{m_1-1} \frac{1}{k+1}\EE_{\cD}\big[  \ind{\{ \xi |\hat\cR_{j\to 0}| \leq k+1 \}} \ind\{ |\cR| \leq k\} \big] \notag \\ 
&\leq \frac{1}{m_1} + \sum_{k=1}^{m_1-1} \bigg\{\frac{\PP_\cD(\xi |\hat\cR_{j\to 0}| \leq k)}{k}
\PP_{\cD}\Big( |\cR| \leq k \Biggiven \xi |\hat\cR_{j\to 0}| \leq k \Big) \notag  \\
&\qquad \qquad \qquad \qquad - 
 \frac{\PP_\cD(\xi |\hat\cR_{j\to 0}| \leq k+1)}{k+1}
\PP_{\cD}\Big( |\cR| \leq k \Biggiven \xi |\hat\cR_{j\to 0}| \leq k+1 \Big)\bigg\}.
\#
Here $\PP_{\cD}$ is with respect to the conditional distribution of $\xi$ 
given $\cD$. 
We would rely on the PRDS property of $\xi |\hat\cR_{j\to 0}|$ 
as follows to control $\fdr(\cD,j)$. 
\begin{lemma}\label{lem:prds}
Let $a_1,\dots,a_k \in \RR$ be nonnegative fixed constants, and let $\xi \sim \textnormal{Unif}[0,1]$. 
Then the random variables 
$\{a_1\xi,\dots,a_{j-1}\xi,a_{j+1}\xi,\dots,a_k\xi\}$ are PRDS in $a_j \xi$ for any $j\in\{1,\dots,k\}$. 
\end{lemma}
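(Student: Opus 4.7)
The plan is to exploit the fact that all the coordinates $a_\ell \xi$ are deterministic functions of the single scalar random variable $\xi$, so conditioning on any one coordinate pins down the whole vector.

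First I would handle the generic case $a_j > 0$. In this case the event $\{a_j \xi = x\}$ determines $\xi = x/a_j$, and hence for every $\ell \neq j$ the conditional value of $a_\ell \xi$ is the deterministic quantity $a_\ell x / a_j$. Therefore, for any increasing set $D \subseteq \RR^{k}$ (identifying the vector $(a_1\xi,\dots,a_k\xi)$ with a point in $\RR^k$),
\[
\PP\bigl((a_1\xi,\dots,a_k\xi) \in D \bigggiven a_j\xi = x\bigr)
= \ind\bigl\{ (a_1 x/a_j,\dots,a_k x/a_j) \in D \bigr\}.
\]
The key observation is that since every $a_\ell \geq 0$, each coordinate of the vector $(a_1 x/a_j,\dots,a_k x/a_j)$ is non-decreasing in $x$. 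Combined with the fact that $D$ is increasing (i.e., closed under taking coordinate-wise larger points), the indicator above is non-decreasing in $x$, which is precisely the PRDS condition of Definition~\ref{def:prds}.

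Next I would dispose of the edge case $a_j = 0$. Here $a_j\xi \equiv 0$, so the conditional probability $\PP(\,\cdot\,\given a_j\xi = x)$ is only meaningful at $x = 0$, and monotonicity in $x$ holds vacuously. Putting the two cases together yields the claim for every $j\in\{1,\dots,k\}$.

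I do not expect any real obstacle: the content of the lemma is essentially the observation that a family of comonotone (in fact, collinear and positively scaled) random variables is automatically PRDS in any one of them, because conditioning on one of them removes all randomness. The only care needed is to handle degenerate values of $a_j$, which is routine.
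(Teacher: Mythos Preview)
Your proposal is correct and follows exactly the same approach as the paper's proof: both exploit that conditioning on $a_j\xi=x$ (for $a_j>0$) makes the whole vector deterministic and equal to $(a_1 x/a_j,\dots,a_k x/a_j)$, whose membership in any increasing set is monotone in $x$ since all $a_\ell\ge 0$, and both treat $a_j=0$ as a trivial/degenerate case. The only cosmetic difference is the order in which the two cases are handled.
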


\begin{proof}[Proof of Lemma~\ref{lem:prds}] 
We denote $x_j = a_j \xi$, and $a_j\geq 0$ for all $j=1,\dots,k$. 
Fix any $j\in\{1,\dots,k\}$. 
If $a_j=0$, then $x_j\equiv 0$ hence 
the  distribution of $X=(x_1,\dots,x_k)$ conditional on $x_j$ 
is invatiant to the value of $\xi$, and the PRDS property naturally follows.
If $a_j>0$, then  conditional on $ x_j $, other $p$-values are deterministic 
with $x_\ell = x_j \cdot a_\ell/ a_j $.  
Now let $D$ be any increasing set, then for any $x\in [0,a_j]$, we have 
\$
\PP(X\in D\given x_j =x) = \ind \big\{  (xa_1/a_j,\dots, xa_k/a_j)\in D  \big\},
\$
which is increasing in the whole range $x\in [0,a_j]$. We thus complete the proof of Lemma~\ref{lem:prds}. 
\end{proof}

By Lemma~\ref{lem:prds} and the construction of $\cR=\cR_{\homo}$, we see that 
\$
\PP_{\cD}\Big( |\cR| \leq k \Biggiven \xi |\hat\cR_{j\to 0}| \leq k \Big) 
\leq \PP_{\cD}\Big( |\cR| \leq k \Biggiven \xi |\hat\cR_{j\to 0}| \leq k+1 \Big) ,
\$
hence since $\xi$ is independent of everything else, 
\#\label{eq:prds_bd}
\fdr(\cD,j) &\leq \frac{1}{m_1} + \sum_{k=1}^{m_1-1} \bigg\{\frac{\PP_\cD(\xi |\hat\cR_{j\to 0}| \leq k)}{k} -  \frac{\PP_\cD(\xi |\hat\cR_{j\to 0}| \leq k+1)}{k+1}\bigg\}
\PP_{\cD}\Big( |\cR| \leq k \Biggiven \xi |\hat\cR_{j\to 0}| \leq k \Big) \notag \\ 
&= \frac{1}{m_1} + \sum_{k=1}^{m_1-1} \bigg\{\frac{ \min\{1, k/|\hat\cR_{j\to 0}|\} }{k} -  \frac{\min\{1, (k+1)/|\hat\cR_{j\to 0}|\} }{k+1}\bigg\}
\PP_{\cD}\Big( |\cR| \leq k \Biggiven \xi |\hat\cR_{j\to 0}| \leq k \Big) . 
\#
If $|\hat\cR_{j\to 0}|\geq m_1$, then the above inequality leads to $\fdr(\cD,j)\leq \frac{1}{m_1} \leq \frac{1}{|\hat\cR_{j\to 0}|}$. Otherwise 
\$
\fdr(\cD,j) &\leq \frac{1}{m_1} + \sum_{k=|\hat\cR_{j\to 0}|}^{m_1-1} 
\bigg\{\frac{ 1 }{k} -  \frac{1}{k+1}\bigg\}  
\PP_{\cD}\Big( |\cR| \leq k \Biggiven \xi |\hat\cR_{j\to 0}| \leq k \Big) \\ 
&\leq \frac{1}{m_1} + \sum_{k=|\hat\cR_{j\to 0}|}^{m_1-1} 
\bigg\{\frac{ 1 }{k} -  \frac{1}{k+1}\bigg\} \leq \frac{1}{|\hat\cR_{j\to 0}|}.
\$
Putting the above bounds together, we have 
\$
\fdr(\cD) \leq \sum_{j\in \cR^{(1)}\cap \cH_0} \frac{1}{|\hat\cR_{j\to 0}|}
= \sum_{j=1}^m \frac{  \ind {\{  p_j \leq s_j,Y_{n+j} \leq c_{n+j}\}} }{|\hat\cR_{j\to 0} |}.
\$
Marginalizing over the randomness in $\cD$ concludes the proof of the homogeneous case. 

\paragraph{Case 3: $\cR=\cR_\dtm$.} 
Finally we show the deterministic case. 
Let $m_1 = |\cR_0|$ for $\cR_0=\{j\colon p_j\leq c_{n+j}\}$ and let 
$[1],[2],\dots,[m_1]$ be a reorder of $\cR_0$ such that 
$|\hat\cR_{[1]\to 0}|\leq |\hat\cR_{[2]\to 0}| \leq \cdots \leq |\hat\cR_{[m_1]\to 0}|$. 
Then by definition, we know $m_2 := |\cR|$ is the largest index $k$ 
such that $|\hat\cR_{[k]\to 0}|\leq k$ and $\cR=\{[1],[2],\dots,[m_2]\}$. 
Put another way, we have $|\hat\cR_{j\to 0}|\leq |\cR|$ for all $j\in \cR$. 
Thus, 
\$
\EE\Bigg[  \frac{\sum_{j=1}^m \ind{\{j \in \cR, Y_{n+j} \leq c_{n+j}\}} }{1\vee |\cR|} \Bigg]  
&\leq \sum_{j=1}^m  \EE\Bigg[  \frac{ \ind {\{ j\in \cR,Y_{n+j} \leq c_{n+j}\}}   }{1\vee |\hat\cR_{j\to 0}|} \Bigg]
\leq \sum_{j=1}^m  \EE\Bigg[  \frac{ \ind {\{ p_j\leq s_j,Y_{n+j} \leq c_{n+j}\}}   }{1\vee |\hat\cR_{j\to 0}|} \Bigg].
\$
This concludes the proof of the third case, and therefore 
also concludes the proof of Lemma~\ref{lem:fdr_ite_reduce}. 
\end{proof}

\subsection{Proof of Proposition~\ref{prop:asymp_equiv}}
\label{app:subsec_asymp_equiv}

\begin{proof}[Proof of Proposition~\ref{prop:asymp_equiv}]
We start by describing the BH procedure 
and our procedure using 
a similar characterization as in~\cite{storey2004strong}. 
First, our first-step rejection set 
computed in Line 8 of Algorithm~\ref{alg:bh} can be written as 
$\hat\cR_{j\to 0} = \{j\}\cup\{\ell\colon   p_\ell^{(j)} \leq \hat\tau_j\}$, where 
\#\label{eq:def_hattau_j}
\hat\tau_j = \sup \bigg\{t \in[0,1]\colon \frac{mt}{1+\sum_{\ell\neq j} \ind\{ p_\ell^{(j)} \leq t \}} \leq q \bigg\}.
\# 
The rejection set of BH($q$) is $\cR_{\bh} = \{j\colon p_j \leq \hat\tau\}$, where $\{p_j\}$ are 
defined in~\eqref{eq:def_wcpval_rand}, and 
\#\label{eq:def_hattau}
\hat\tau  = \sup \bigg\{t \in[0,1]\colon \frac{mt}{\sum_{j=1}^m \ind\{ p_\ell \leq t \}} \leq q \bigg\}.
\#
Also, under the conditions of Proposition~\ref{prop:asymp_equiv}, 
there are no ties within 
$\{V(X_i,Y_i)\}_{i=1}^n\cup \{V(X_{n+j},Y_{j} )\}_{j=1}^m$ 
or $\{V(X_i,Y_i)\}_{i=1}^n\cup \{V(X_{n+j},c_{n+j})\}_{j=1}^m$ 
almost surely.  
Thus, the p-values~\eqref{eq:def_wcpval_rand} 
can be written as
\#\label{eq:wcp_notie}
p_j = \frac{\sum_{i=1}^n w(X_i)\ind {\{V_i <\hat{V}_{n+j} \}}+  w(X_{n+j})  \cdot U_j}{\sum_{i=1}^n w(X_i) + w(X_{n+j})}.
\#
To proceed, with a slight abuse of notation, 
we define the cumulative distribution function of $V(X,Y)$ under the 
target distribution: 
\$
F(v) = Q(V(X,Y )\leq v) = Q(V(X,Y )<v),\quad v\in \RR,
\$
with the probability over $(X,Y)\sim Q$. 
Because the distribution of $V(X,Y )$ under $Q$ has no point mass, 
$F(\cdot)$ is continuous and increasing in $v$. 
For convenience, in the following we denote $p_j^{(j)} :=   p_j$, and define 
\$
V_j^0 := V(X_{n+j},c_{n+j})=\hat{V}_{n+j},\quad p_j^\infty := F(V_j^0). 
\$ 
The following lemma can be derived similarly as Lemma~\ref{lem:pval_conv}, 
whose proof is omitted for brevity. 
\begin{lemma}
\label{lem:pval_conv2}
For any fixed constant $M>0$, as $n\to \infty$ and possibly $m\to \infty$, we have 
\$ 
\limsup_{n\to \infty} \sup_{j\geq 1} \big| p_j - p_j^\infty \big| ~\asto~0, \quad 
\limsup_{n\to \infty} \sup_{j\geq 1,\ell\geq 1} \big|  p_\ell^{(j)} - p_\ell^\infty \big| ~\asto~0. 
\$
\end{lemma}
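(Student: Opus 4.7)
The plan is to reduce both claims to a single Glivenko--Cantelli style uniform convergence statement for the weighted empirical distribution of the calibration scores. Concretely, I would first show that
\$
\sup_{v \in \RR} \bigg| \frac{1}{n}\sum_{i=1}^n w(X_i)\ind\{V_i < v\} - F(v) \bigg| ~\asto ~ 0, \qquad
\frac{1}{n}\sum_{i=1}^n w(X_i) ~\asto~ 1,
\$
as $n \to \infty$. The first statement is a weighted uniform SLLN: the class $\{\ind\{\cdot < v\}\colon v \in \RR\}$ is Glivenko--Cantelli, and boundedness of $w$ (i.e., $\|w\|_\infty \le M$) plus the covariate shift identity $\EE_P[w(X)\ind\{V(X,Y)<v\}] = \EE_Q[\ind\{V(X,Y)<v\}] = F(v)$ yield uniform a.s.\ convergence. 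The second follows from the ordinary SLLN and $\EE_P[w(X)] = \int w\,\ud P = \int \ud Q = 1$. Continuity of $F$ (which holds because the distribution of $V(X,Y)$ under $Q$ has no point mass) lets us dispense with the left/right-continuity issue at jumps.

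Next, for the first claim, I would decompose
\$
p_j - F(V_j^0) = \frac{\frac{1}{n}\sum_{i=1}^n w(X_i)\ind\{V_i < V_j^0\} - F(V_j^0)}{\frac{1}{n}\sum_{i=1}^n w(X_i) + w(X_{n+j})/n} + F(V_j^0)\cdot \bigg(\frac{1}{\frac{1}{n}\sum_{i=1}^n w(X_i) + w(X_{n+j})/n} - 1\bigg) + \frac{w(X_{n+j}) U_j /n}{\frac{1}{n}\sum_{i=1}^n w(X_i) + w(X_{n+j})/n}.
\$
The first term is bounded in absolute value by $\sup_v |\frac{1}{n}\sum w(X_i)\ind\{V_i<v\} - F(v)| \big/ (\frac{1}{n}\sum w(X_i))$; this upper bound is independent of $j$ and tends to $0$ a.s.\ by the uniform convergence above. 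The second term is bounded by a deterministic multiple of $|\frac{1}{n}\sum w(X_i) - 1| + M/n$, again uniform in $j$. The third term is bounded by $M/(n \cdot \frac{1}{n}\sum w(X_i))$, also uniform in $j$. Taking $\sup_j$ of all three bounds and sending $n \to \infty$ gives $\sup_j |p_j - p_j^\infty| \asto 0$, even when $m = m(n) \to \infty$, because \emph{no term in the bound depends on $j$ or $m$}.

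For the second claim, the auxiliary p-value $p_\ell^{(j)}$ differs from the representation above only by replacing $V_j^0$ with $V_\ell^0$ in the empirical sum and by replacing the $U_j$ tie-breaking term with $\ind\{\hat V_{n+j} < \hat V_{n+\ell}\}$. Since the latter is still bounded by $1$ and multiplied by $w(X_{n+j})/\sum w(X_i) = O(1/n)$ uniformly, the same decomposition yields a bound $\sup_{j,\ell}|p_\ell^{(j)} - F(V_\ell^0)|$ that again depends only on the uniform-in-$v$ convergence of the weighted empirical c.d.f.\ and on $1/n$ residuals. Hence $\sup_{j,\ell}|p_\ell^{(j)} - p_\ell^\infty| \asto 0$.

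The only subtlety I anticipate is ensuring that the dependence on $j$ (through $w(X_{n+j})$ and $V_j^0$) is truly absorbed into quantities that admit a $j$-free a.s.\ bound, particularly when $m\to\infty$; this is handled by the boundedness of $w$ and by replacing pointwise convergence with uniform convergence over $v \in \RR$. No fresh probabilistic inequality is required beyond the ones already used in the proof of Lemma~\ref{lem:pval_conv}.
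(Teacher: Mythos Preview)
Your proposal is correct and follows essentially the same route the paper indicates: the paper simply says Lemma~\ref{lem:pval_conv2} ``can be derived similarly as Lemma~\ref{lem:pval_conv}'' and omits the details, and your argument (weighted Glivenko--Cantelli via Lemma~\ref{lem:ulln}, SLLN for $\frac{1}{n}\sum w(X_i)\to 1$, and $O(M/n)$ bounds on the $w(X_{n+j})$ residuals that are uniform in $j,\ell$) is exactly that derivation. The only point worth making explicit is that the no-point-mass assumption in Proposition~\ref{prop:asymp_equiv} is what lets you collapse $F(v,u)$ from Lemma~\ref{lem:pval_conv} to the $u$-free $F(v)$ and ignore ties, which you already note.
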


In particular, the almost sure convergence in Lemma~\ref{lem:pval_conv2} is uniform 
over all observed p-values, which can be countably many when we consider $m\to \infty$. 
We then prove the two cases separately. 

\paragraph{Case (i): $m$ fixed, $n\to \infty$.} 
We let $\cR^\infty$ be the rejection set 
of BH($q$) with $\{p_j^\infty\}$. Then by the uniform convergence of 
$\{p_j\}$ to $\{p_j^\infty\}$ 
and the no-point-mass condition on 
the distribution of $V_j^0$ (and hence $p_j^\infty$), 
the same arguments 
as in the proof of case (i) in Theorem~\ref{thm:fdr_asymp} yield 
\$
\lim_{n\to \infty} \PP(\cR_{\bh} \neq \cR^\infty)  = 1. 
\$
We let $\hat\cR_j$ be the rejection set of BH($q$) from 
$\{p_\ell^{(j)}\}_{1\leq \ell \leq m}$, with the convention 
that $p_j^{(j)}=p_j$. 
We also denote $\hat\cR_{j\to 0}$ as 
the rejection set of BH($q$) applied to $\{p_1^{(j)},\dots,p_{j-1}^{(j)},0,p_{j+1}^{(j)},\dots,p_m^{(j)}$. 
Then by the property of the BH procedure, we know  
\#\label{eq:equiv_rj_cj}
 p_j  \leq s_j = q|\hat\cR_{j\to 0}|/m \quad \Leftrightarrow \quad j\in \hat\cR_j. 
\#
For each $j\in\{1,\dots,m\}$, due to the uniform convergence 
of $\{p_\ell^{(j)}\}$ to $\{p_j^\infty\}$, we similarly have 
$
\lim_{n\to \infty} \PP(\hat\cR_{j} \neq \cR^\infty)  = 1. 
$
Then taking a union bound over $j\in\{1,\dots,m\}$ yields 
\#\label{eq:equiv_Rj_all}
\lim_{n\to \infty}\PP(\hat\cR_j = \cR^\infty~\text{for all }j)=1.
\#
Recall that $ \cR^{(1)} = \{j\colon   p_j  \leq s_j\}$ is 
first-step selection set. 
Thus 
\$
\PP\big(\cR^{(1)} = \cR^\infty\big) = \PP\big(\{j\colon p_j  \leq s_j\} = \{j \in \hat \cR_j\} = \cR^\infty \big)
\to 1
\$
as $n\to \infty$, 
where the first equality uses~\eqref{eq:equiv_rj_cj}, 
and the last convergence uses~\eqref{eq:equiv_Rj_all}. We then have 
\$
\lim_{n\to \infty} \PP(E_n) = 1,\quad \textrm{where} \quad 
E_n := \big\{  \cR^{(1)} = \cR^\infty = \hat\cR_j ~\text{for all }j \big\} .
\$
On the event $E_n$, we know that 
$|\hat\cR_{j\to 0}| = |\hat\cR_j| = |\cR^\infty| = |\cR^{(1)}|$ for any $j\in \cR^\infty$. 
Therefore, for heterogeneous pruning, we have 
\$
\sum_{j=1}^m \ind\big\{  p_j \leq s_j, \xi_j |\hat\cR_{j\to 0}| \leq |\cR^\infty|\big\} 
= \sum_{j=1}^m \ind\big\{ j \in \cR^\infty, \xi_j |\cR^\infty| \leq |\cR^\infty| \big\} = |\cR^\infty|
\$
since $\xi_j\in [0,1]$; second,  
if we adopt the homogeneous pruning, 
\$
\sum_{j=1}^m \ind\big\{   p_j \leq s_j, \xi |\hat\cR_{j\to 0}| \leq |\cR^\infty|\big\} 
= \sum_{j=1}^m \ind\big\{ j \in \cR^\infty, \xi |\cR^\infty| \leq |\cR^\infty| \big\} = |\cR^\infty|
\$
since $\xi\in [0,1]$; finally, 
if we adopt the determinisitic pruning, 
\$
\sum_{j=1}^m \ind\big\{   p_j \leq s_j,  |\hat\cR_{j\to 0}| \leq |\cR^\infty|\big\} 
= \sum_{j=1}^m \ind\big\{ j \in \cR^\infty,  |\cR^\infty| \leq |\cR^\infty| \big\} = |\cR^\infty|.
\$
At the same time, for any $r\in \NN$, the heterogeneous pruning satisfies 
\$
\sum_{j=1}^m \ind\big\{   p_j \leq s_j, \xi_j |\hat\cR_{j\to 0}| \leq r\big\} \leq 
\sum_{j=1}^m \ind \{   p_j \leq s_j\} = |\cR^{(1)}| =|\cR^{\infty}| ,
\$
and similarly the homogeneous pruning satisfies 
$
\sum_{j=1}^m \ind \{   p_j \leq s_j, \xi |\hat\cR_{j\to 0}| \leq r \} \leq 
|\cR^{(1)}| =|\cR^{\infty}| ,
$
and the deterministic pruning satisfies 
$
\sum_{j=1}^m \ind \{   p_j \leq s_j,  |\hat\cR_{j\to 0}| \leq r \} \leq 
|\cR^{(1)}| =|\cR^{\infty}| .
$
Thus, by definition we have $r_{\hete}^* = r_{\homo}^* = r_{\dtm}^*= |\cR^\infty|$ and 
\$
\cR_\hete = \big\{j\colon   p_j \leq s_j,~ \xi_j |\hat\cR_{j\to 0}|\leq r_\hete^* \big\} = \cR^{(1)} = \cR^\infty  = \cR_\homo = \cR_{\dtm}
\$
on the event $E_n$, hence completing the proof of case (i). 

\paragraph{Case (ii): $m,n\to \infty$.}
Our first result is the convergence of the rejection threshold $\hat\tau_j$ 
and $\tau$ (c.f.~\eqref{eq:def_hattau_j} and~\eqref{eq:def_hattau}) 
depending on the following lemma.

\begin{lemma}
\label{lem:tau_conv2}
Under the conditions of Proposition~\ref{prop:asymp_equiv}, 
\$
\sup_{t\in [0,1]}\sup_{j\geq 1} \bigg|\frac{1}{m}\sum_{\ell=1}^n \ind\{  p_\ell^{(j)}  \leq t  \}
- \PP(F(V_j^0) \leq t) \bigg| ~\asto~0, \quad 
\sup_{t\in [0,1]}  \bigg|\frac{1}{m}\sum_{\ell=1}^n \ind\{  p_\ell \leq t  \}
- \PP(F(V_j^0) \leq t) \bigg| ~\asto~0,
\$
as $m,n\to \infty$, 
where the supremum is over countably many $j$, 
and $\PP$ is over the distribution of test samples. 
\end{lemma}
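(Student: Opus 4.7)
The plan is to prove the two uniform convergences by reducing them, via Lemma~\ref{lem:pval_conv2}, to the classical Glivenko--Cantelli theorem applied to the i.i.d.\ random variables $p_\ell^\infty = F(V_\ell^0)$. Since $\{V_\ell^0\}_{\ell\geq 1}$ are i.i.d.\ and $F$ is continuous under the no-point-mass assumption, each $p_\ell^\infty$ has a continuous CDF; note also that $\PP(F(V_j^0)\leq t)$ does not depend on $j$, so the right-hand side is a fixed continuous function of $t$ that I will denote $H(t)$.

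First, I would bound the first display using the triangle inequality by
\[
\underbrace{\sup_{t\in[0,1],\,j\geq 1}\Big|\tfrac{1}{m}\textstyle\sum_{\ell=1}^m \ind\{p_\ell^{(j)}\leq t\}-\tfrac{1}{m}\sum_{\ell=1}^m \ind\{p_\ell^\infty\leq t\}\Big|}_{=:A_{m,n}}+\underbrace{\sup_{t\in[0,1]}\Big|\tfrac{1}{m}\textstyle\sum_{\ell=1}^m \ind\{p_\ell^\infty\leq t\}-H(t)\Big|}_{=:B_m}.
\]
The term $B_m$ tends to $0$ almost surely by the classical Glivenko--Cantelli theorem applied to the i.i.d.\ sequence $\{p_\ell^\infty\}$, using continuity of $H$. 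For $A_{m,n}$, I would fix $\delta>0$ and use the pointwise bound
\[
\big|\ind\{p_\ell^{(j)}\leq t\}-\ind\{p_\ell^\infty\leq t\}\big|\;\leq\;\ind\{|p_\ell^{(j)}-p_\ell^\infty|>\delta\}+\ind\{t-\delta\leq p_\ell^\infty\leq t+\delta\},
\]
so that
\[
A_{m,n}\;\leq\;\ind\Big\{\sup_{j,\ell}|p_\ell^{(j)}-p_\ell^\infty|>\delta\Big\}+\sup_{t\in[0,1]}\tfrac{1}{m}\textstyle\sum_{\ell=1}^m \ind\{t-\delta\leq p_\ell^\infty\leq t+\delta\}.
\]
By Lemma~\ref{lem:pval_conv2} the first indicator is eventually $0$ almost surely. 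By another application of Glivenko--Cantelli, the second supremum converges almost surely to $\sup_t (H(t+\delta)-H(t-\delta))$, which is made arbitrarily small by choosing $\delta$ small enough, thanks to the continuity (and therefore uniform continuity on $[0,1]$) of $H$. Sending $\delta\to 0$ along a countable sequence gives $A_{m,n}\asto 0$.

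For the second display, exactly the same decomposition applies after replacing $p_\ell^{(j)}$ by $p_\ell$ (which corresponds to the case $\ell=j$ in the notation $p_\ell^{(j)}$); the uniform closeness $\sup_\ell|p_\ell-p_\ell^\infty|\asto 0$ is again supplied by Lemma~\ref{lem:pval_conv2}, and the rest of the argument is identical. The only subtlety worth noting is that the supremum over $j\geq 1$ is potentially countably infinite when $m\to\infty$, but Lemma~\ref{lem:pval_conv2} already delivers uniform convergence in both $j$ and $\ell$, so no additional care is needed. The main (mild) obstacle is ensuring that the Glivenko--Cantelli step is valid uniformly as we let $m$ and $n$ grow jointly; this follows because $B_m$ does not depend on $n$ and $A_{m,n}$ separates into a piece controlled by $n$ alone and a piece controlled by $m$ alone.
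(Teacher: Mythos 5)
Your proof is correct and takes essentially the same route as the paper, which omits this argument and points to the proof of Lemma~\ref{lem:tau_conv}: both decompose the error into the discrepancy between the conformal p-values and their i.i.d.\ limits $p_\ell^\infty$ (controlled by Lemma~\ref{lem:pval_conv2}) plus a uniform-LLN term for the empirical CDF of $\{p_\ell^\infty\}$. The only cosmetic difference is that you bound the first piece directly via the $\delta$-indicator inequality $|\ind\{p_\ell^{(j)}\le t\}-\ind\{p_\ell^\infty\le t\}|\le \ind\{|p_\ell^{(j)}-p_\ell^\infty|>\delta\}+\ind\{t-\delta\le p_\ell^\infty\le t+\delta\}$ and then invoke Glivenko--Cantelli, whereas the paper's Lemma~\ref{lem:tau_conv} proof introduces a finite partition $\{t_k\}$ of $[0,1]$ and sandwiches via monotonicity before applying the same $\delta$-trick and uniform LLN---a slightly more verbose but equivalent argument.
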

The proof of Lemma~\ref{lem:tau_conv2} follows exactly the same argument as 
Lemma~\ref{lem:tau_conv} in the proof of Theorem~\ref{thm:fdr_asymp}, 
relying on the 
uniform convergence of $\{ p_\ell^{(j)}\}_{\ell\neq j}$ and $\{p_j\}$ to $\{p_j^\infty\}$ 
as well as the continuous distribution of $p_j^\infty$. 
We omit the proof here for brevity. 

Given that 
there exists some $t'\in(0,1]$ such that 
$\frac{\PP(F(V_j^0) \leq t)}{t} > 1/q$, 
taking any fixed $\delta\in (0,1)$, 
we have 
\$
\sup_{t\in [\delta,1]} \bigg|\frac{\sum_{\ell=1}^n \ind\{  p_\ell \leq t  \}}{mt}
- \frac{\PP(F(V_j^0)\leq t) }{t} \bigg| ~\asto ~ 0
\$
and 
\$
\sup_{t\in [\delta,1]} \sup_{j\geq 1} \bigg|\frac{ 1 + \sum_{\ell\neq j} \ind\{  p_\ell^{(j)} \leq t  \}}{mt}
- \frac{\PP(F(V_j^0)\leq t) }{t} \bigg| ~\asto ~ 0.
\$
Since $t^* = \sup\{t\colon \frac{t}{\PP(F(V_j^0)\leq t)} \leq q\}$ 
is not an isolated point of this set, 
we know that both $\hat\tau_j$ and $\hat\tau$ are eventually 
within a neighborhood around $t^*$; to be specific, 
the above uniform convergence result guarantees 
the uniform convergence of the criteria: 
\#\label{eq:tau_unif}
\hat\tau ~\asto ~ t^*,\quad \sup_{j\geq 1} \big|\hat\tau_j - t^* \big| ~\asto ~0. 
\#
Therefore, we have 
\#\label{eq:conv_Rj}
&\limsup_{m,n\to \infty}\sup_{j\geq 1} \bigg| \frac{|\hat\cR_{j\to 0}|}{m} - \PP\big(F(V_j^0)\leq t^*\big) \bigg| \notag \\ 
&=\limsup_{m,n\to \infty}\sup_{j\geq 1}  \bigg| \frac{1}{m} + \frac{1}{m}\sum_{\ell\neq j} \ind\{p_\ell^{(j)} \leq \hat\tau_j \}
-  \PP\big(F(V_j^0)\leq t^*\big) \bigg| \notag \\ 
&\leq \limsup_{m,n\to \infty}\sup_{j\geq 1}  \bigg|  \frac{1}{m}\sum_{\ell=1}^m \ind\{ p_\ell^\infty \leq \hat\tau_j \}
-  \PP\big(F(V_j^0)\leq t^*\big) \bigg| \notag \\ 
&= \limsup_{m,n\to \infty}\sup_{j\geq 1}  \bigg| \PP\big(F(V_j^0)\leq \hat\tau_j \big) 
-  \PP\big(F(V_j^0)\leq t^*\big) \bigg| = 0,
\#
where the second line uses the uniform convergence in Lemma~\ref{lem:tau_conv2}, 
the third line uses the uniform law of large numbers for distribution functions 
of i.i.d.~random variables $p_\ell^\infty = F(V_\ell^0)$, 
and the last equality follows from the uniform convergence of $\hat\tau_j$ to $t^*$ in~\eqref{eq:tau_unif}
as well as the continuity of the distribution function of $F(V_j^0)$. 
For the convenience of reference, with a 
slight abuse of notation, we denote $G_\infty(t) = \PP(F(V_j^0)\leq t)$, 
so that~\eqref{eq:conv_Rj} 
implies $|\hat\cR_{j\to 0}|/m$ converges to $G_\infty(t^*)$ uniformly, and as $m,n\to \infty$,
\$
\sup_{j\geq 1} \big|c_{n+j} - q\cdot G_\infty(t^*)\big| =  q\cdot \sup_{j\geq 1 } \big||\hat\cR_{j\to 0}|/m - G_\infty(t^*)\big| ~ \asto ~ 0.
\$ 
From the definition of $t^*$ and the continuity of $G_\infty$, we have 
$ G_\infty(t^*)/t^* = 1/q$. 

In the following, we show that the first-step rejection set $\cR^{(1)} := \{j\colon  p_j \leq s_j\} $  
is asymptotically the same as $\cR_{\bh}$, i.e., $\limsup_{m,n\to \infty }\frac{|\cR^{(1)}\bigtriangleup \cR_\bh|}{m} = 0$
almost surely. 
We define $ {d}_j^{\cc} =   p_j - s_j$, $d_j^{\bh} = p_j - \hat\tau$ 
and $d_j^\infty = p_j^\infty - t^*$ for $j\in\{1,\dots,m\}$, 
for which we've shown that 
$\sup_{j\geq 1} \big| {d}_j^\cc - d_j^\infty\big| \to 0$ and 
$\sup_{j\geq 1} \big|d_j^\bh - d_j^\infty\big| \to 0$ almost surely. 
We also define 
\$\delta_n =\sup_{j\geq 1} \big\{ | {d}_j^\cc - d_j^\infty| \big\} \vee 
\sup_{j\geq 1} \big\{ |d_j^\bh - d_j^\infty| \big\},
\$
which obeys $\delta_{n,m}\to 0$ almost surely as $m,n\to \infty$. 
Therefore, for any fixed $\delta>0$, 
\#\label{eq:lln_step1}
\frac{|\cR^{(1)}\bigtriangleup \cR_\bh|}{m}
&\leq \frac{1}{m} \sum_{j=1}^m \big| \ind\{   p_j \leq s_j\} - \ind\{  p_j \leq \hat\tau \} \big| \\
&\leq \ind \{ \delta_n>\delta \} + \frac{\ind \{ \delta_n \leq \delta \} }{m} 
\sum_{j=1}^m \big| \ind\{  d_j^\cc \leq 0, d_j^\bh >0\} + \ind\{  d_j^\cc >0,  d_j^\bh \leq 0 \} \big|\notag \\ 
&\leq \ind \{ \delta_n>\delta \} + \frac{\ind \{ \delta_n \leq \delta \} }{m} 
\sum_{j=1}^m \big| \ind\{ -\delta < d_j^\infty \leq \delta \} + \ind\{  -\delta < d_j^\infty \leq \delta \} \big| \notag \\ 
&\leq \ind \{ \delta_n>\delta \}  + \frac{2}{m} \sum_{j=1}^m  \ind\{ -\delta < d_j^\infty \leq \delta \}, \notag
\#
where the third line uses the fact that  
$\delta_n\leq \delta$, $d_j^\cc\leq 0$ and $d_j^\bh>0$ 
implies $d_j^\infty\in(-\delta,\delta]$, and similarly for the other term. 
Above, we know $\ind\{\delta_n >\delta\}~\asto~ 0$ since 
$\delta_n~\asto~0$ as $m,n\to \infty$. 
By the strong law of large numbers, we have 
\$
\frac{2}{m} \sum_{j=1}^m  \ind\{ -\delta < d_j^\infty \leq \delta \} ~\asto ~ \PP\big(F(V_j^0)\in (-\delta + t^*, \delta + t^*]\big).
\$ 
Since the distribution of $F(V_j^0)$ is continuous, 
the arbitrariness of $\delta>0$ yields 
$
\frac{|\cR^{(1)}\bigtriangleup \cR_\bh|}{m} ~\asto~ 0.
$

We then proceed to show that the proportion of 
the hypotheses filterd out in the second step is negligible, i.e., 
$\frac{|\cR^{(1)}\bigtriangleup \cR_\homo|}{m}~\pto ~0$, 
and  $\frac{|\cR^{(1)}\bigtriangleup \cR_\homo|}{|\cR^{(1)}|}~\pto ~0$. 
For any fixed $\delta >0$, 
\$
&\limsup_{m,n\to \infty}  \bigg| \frac{|\cR^{(1)}|}{m} - \PP\big(F(V_j^0)\leq t^*\big) \bigg| \notag \\ 
&\leq \limsup_{m,n\to \infty}  \bigg|  \frac{1}{m}\sum_{j=1}^m  \ind\{  p_j \leq s_j \}
-  \frac{1}{m}\sum_{j=1}^m  \ind\{p_j^\infty \leq t^* \} \bigg| + 
\limsup_{m,n\to \infty}  \bigg|  \frac{1}{m}\sum_{j=1}^m  \ind\{p_j^\infty \leq t^* \}
-  \PP\big(F(V_j^0)\leq t^*\big) \bigg|\notag \\ 
&=\limsup_{m,n\to \infty}  \bigg|  \frac{1}{m}\sum_{j=1}^m  \ind\{  p_j \leq s_j \}
-  \frac{1}{m}\sum_{j=1}^m  \ind\{p_j^\infty \leq t^* \} \bigg| \notag \\
&\leq \limsup_{m,n\to \infty} \ind \{ \delta_n>\delta \} + \limsup_{m,n\to \infty}\frac{\ind \{ \delta_n \leq \delta \} }{m} 
\sum_{j=1}^m \big|  \ind\{  p_j \leq s_j \} - \ind\{p_j^\infty \leq t^* \} \big| \notag \\ 
&\leq \limsup_{m,n\to \infty} \ind \{ \delta_n>\delta \} + \limsup_{m,n\to \infty}\frac{\ind \{ \delta_n \leq \delta \} }{m} 
\sum_{j=1}^m \big| \ind\{ -\delta < d_j^\infty \leq \delta \} + \ind\{  -\delta < d_j^\infty \leq \delta \} \big| \notag \\ 
&\leq  \limsup_{m,n\to \infty} \frac{1}{m} 
\sum_{j=1}^m \big| \ind\{ -\delta < d_j^\infty \leq \delta \} + \ind\{  -\delta < d_j^\infty \leq \delta \} \big|  \\
&\leq 2\PP\big( t^*-\delta \leq F(V_j^0)\leq t^*+\delta\big),
\$
where the first equality uses the strong law of large numbers; 
the arguments starting from the third line follow similar ideas 
as those in~\eqref{eq:lln_step1}; 
the second-last inequality uses 
$\ind\{\delta_n >\delta\}~\asto ~ 0$, and 
the last inequality again uses the strong law of large numbers. 
Therefore, by the arbitrariness of $\delta>0$, we have 
\#  \label{eq:conv_R(1)}
&\limsup_{m,n\to \infty}  \bigg| \frac{|\cR^{(1)}|}{m} - \PP\big(F(V_j^0)\leq t^*\big) \bigg| = 0
\#
almost surely. 
Combined with the previous result that 
$\frac{|\cR^{(1)}\bigtriangleup \cR_\bh|}{m} ~\asto~ 0$, 
we also have 
$\frac{|\cR^{(1)}\bigtriangleup \cR_\bh|}{|\cR^{(1)}|} ~\asto~ 0$ and $\frac{|\cR^{(1)}\bigtriangleup \cR_\bh|}{|\cR_\bh|} ~\asto~ 0$, hence proving the first two parts 
in case (ii) of Proposition~\ref{prop:asymp_equiv}.  

We now fix any $\delta\in(0,1)$, and denote the event 
\$
E(\delta) = \Bigg\{ \sup_j\bigg|\frac{|\hat\cR_{j\to 0}|}{m} - G_\infty(t^*) \bigg| >\delta  \Bigg\} 
\bigcup \Bigg\{ \frac{1}{m} \sum_{j=1}^m \ind\big\{   p_j \leq s_j \big\} < (1-\delta/2)G_\infty(t^*)  \Bigg\}.
\$
Then the uniform concentration~\eqref{eq:conv_Rj} and~\eqref{eq:conv_R(1)} 
implies 
$\PP(E(\delta))\to 0$ 
as $m,n\to \infty$. 
Let $r = \lfloor (1-\delta)\cdot m\cdot G_\infty(t^*)\rfloor
\geq (1-2\delta) \cdot m \cdot G_\infty(t^*)$ when $m$ is 
sufficiently large. Then 
\$
&\PP\Bigg(  \sum_{j=1}^m \ind\big\{   p_j \leq s_j, \xi |\hat\cR_{j\to 0}| \leq r \big\} \geq r \Bigg) \\ 
&\geq \PP\big(E(\delta)^c \big)\cdot 
\PP\Bigg(  \sum_{j=1}^m \ind\big\{  p_j \leq s_j, \xi |\hat\cR_{j\to 0}| \leq r \big\} \geq r \Biggiven E(\delta)^c \Bigg) \\ 
&\geq \PP\big(E(\delta)^c \big)\cdot 
\PP\Bigg( \frac{1}{m} \sum_{j=1}^m \ind\big\{  p_j \leq s_j, \xi \leq {\textstyle \frac{r}{|\hat\cR_{j\to 0}|} } \big\} \geq (1-\delta/2) G_\infty(t^*) \Biggiven E(\delta)^c \Bigg) \\ 
&\geq \PP\big(E(\delta)^c \big)\cdot 
\PP\Bigg( \frac{1}{m} \sum_{j=1}^m \ind\big\{  p_j \leq s_j, \xi \leq {\textstyle \frac{(1-2\delta) m G_\infty(t^*)}{(1+\delta) m G_\infty(t^*)} } \big\} \geq (1-\delta/2) G_\infty(t^*) \Biggiven E(\delta)^c \Bigg) \\ 
&= \PP\big(E(\delta)^c \big)\cdot 
\PP\Bigg( \xi \leq {\textstyle \frac{(1-2\delta) m G_\infty(t^*)}{(1+\delta) m G_\infty(t^*)} },~\text{and}~\frac{1}{m} \sum_{j=1}^m \ind\big\{ p_j \leq s_j \big\} \geq (1-\delta/2) G_\infty(t^*) \Biggiven E(\delta)^c \Bigg)
\$
when $m$ is sufficiently large, 
where the second inequality uses 
$r\leq (1-\delta/2)\cdot m\cdot G_\infty(t)$, 
and the third inequality follows from  
$|\hat\cR_{j\to 0}|\leq (1+\delta) \cdot m\cdot G_\infty(t^*)$ 
and $\frac{1}{m}\sum_{j=1}^m \ind\{\hat p_j \leq s_j\} = |\cR^{(1)}|/m \geq (1-\delta/2)G_\infty(t^*)$ 
on the event $E(\delta)^c$. 
Since $\xi$ is independent of $E(\delta)$, we have 
\$
&\PP\Bigg(  \sum_{j=1}^m \ind\big\{  p_j \leq s_j, \xi |\hat\cR_{j\to 0}| \leq r \big\} \geq r \Bigg) 
\geq \PP\big(E(\delta)^c \big)\cdot  
\PP\Big( \xi \leq {\textstyle \frac{(1-2\delta) m G_\infty(t^*)}{(1+\delta) m G_\infty(t^*)} } \Biggiven E(\delta)^c \Big)
\geq \PP\big(E(\delta)^c \big)\cdot  \frac{1-2\delta}{1+\delta}.
\$
By the definition of $r^*_\homo$,  
for any $\delta\in (0,1/2)$, we have 
\$
\limsup_{m,n\to \infty} 
\PP\Big( r_\homo^* \geq \lfloor (1-\delta) m G_\infty(t^*)\rfloor \Big) \geq \frac{1-\delta}{1+\delta}.
\$
Since $|\cR^{(1)}|/m~\asto~ G_\infty(t^*)>0$, 
we have 
\$
\limsup_{m,n\to \infty} 
\PP\bigg( \frac{r_\homo^*}{|\cR^{(1)}|} \geq 1-\delta  \bigg) \geq \frac{1-2\delta}{1+\delta}.
\$
Finally, for any $0<\gamma<\delta$, applying the above inequality leads to 
\$
\limsup_{m,n\to \infty} 
\PP\bigg( \frac{r_\homo^*}{|\cR^{(1)}|} \geq 1-\delta  \bigg) \geq 
\limsup_{m,n\to \infty} 
\PP\bigg( \frac{r_\homo^*}{|\cR^{(1)}|} \geq 1-\gamma \bigg)\geq \frac{1-2\gamma}{1+\gamma}.
\$
Note that $r_\homo^* = |\cR_\homo|$, and that $\cR_\homo\subset \cR^{(1)}$. 
Thus, 
\$
\frac{|\cR^{(1)}\bigtriangleup \cR_\homo|}{|\cR^{(1)}|} = \frac{|\cR^{(1)}| - |\cR_\homo| }{|\cR^{(1)}|}
= 1 - \frac{r_\homo^*}{|\cR^{(1)}|}. 
\$
Sending $\gamma\to 0$, we conclude the proof of 
$\frac{|\cR^{(1)}\bigtriangleup \cR_\homo|}{|\cR^{(1)}|}~\pto ~0$ and 
$\frac{|\cR^{(1)}\bigtriangleup \cR_\homo|}{m}~\pto ~0$. 
\end{proof}

\subsection{Proof of Theorem~\ref{thm:est_w}}
\label{app:thm_est_w}

\begin{proof}[Proof of Theorem~\ref{thm:est_w}]

For clarity, throughout this proof, we denote the p-values 
used in the algorithm as
\$
\hat{p}_j &=   \frac{\sum_{i=1}^n  \hat{w}(X_i)\ind{\{V_i < \hat{V}_{n+j} \}}+  \hat{w}(X_{n+j}) }{\sum_{i=1}^n \hat{w}(X_i) + \hat{w}(X_{n+j})}, \\
\hat{p}_{\ell}^{(j)} &= \frac{\sum_{i=1}^n \hat{w}(X_i) \ind {\{V_i < \hat{V}_{n+\ell}\}} + \hat{w}(X_{n+j}) \ind {\{ \hat{V}_{n+j}<  \hat{V}_{n+\ell}\}} }{\sum_{i=1}^n \hat{w}(X_i) + \hat{w}(X_{n+j})}, \quad \ell\neq j,
\$
which uses the estimated weights $\hat{w}(\cdot)$. 
We keep the definition of $p_j$ and $p_\ell^{(j)}$ 
which uses the ground truth $w(\cdot)$ and are not available. 
We call BH($q$) the BH procedure at nominal level $q$. 
We also define 
\$
\tilde{p}_j &=   \frac{\sum_{i=1}^n  \hat{w}(X_i)\ind{\{V_i <  {V}_{n+j} \}}+  \hat{w}(X_{n+j}) }{\sum_{i=1}^n \hat{w}(X_i) + \hat{w}(X_{n+j})}, \\
\tilde{p}_{\ell}^{(j)} &= \frac{\sum_{i=1}^n \hat{w}(X_i) \ind {\{V_i < \hat{V}_{n+\ell}\}} + \hat{w}(X_{n+j}) \ind {\{  {V}_{n+j}<  \hat{V}_{n+\ell}\}} }{\sum_{i=1}^n \hat{w}(X_i) + \hat{w}(X_{n+j})}, \quad \ell\neq j
\$
as the counterparts of $\{\bar{p}_j\}$ and $\{\bar{p}_j^{\ell}\}$ 
we used in the proof of Theorem~\ref{thm:calib_ite} (c.f.~\eqref{eq:oracle_p_j} 
and~\eqref{eq:oracle_p_ell^j} for their definitions). 
Accordingly, we let  
$\tilde\cR_j$ 
be the rejection set from BH($q$) procedure with 
$\{\tilde{p}_1^{(j)},\dots,\tilde{p}_{j-1}^{(j)},\tilde{p}_j,\tilde{p}_{j+1}^{(j)},\dots,\tilde{p}_m^{(j)}\}$, and let 
$\tilde\cR_{j\to 0}$ 
be the rejection set from BH($q$) procedure with 
$\{\tilde{p}_1^{(j)},\dots,\tilde{p}_{j-1}^{(j)},0,\tilde{p}_{j+1}^{(j)},\dots,\tilde{p}_m^{(j)}\}$.

Since $\hat{w}(\cdot)$ is independent of the calibration and test sample, 
we view it as a fixed function throughout. 
The following 
Lemma~\ref{lem:fdr_ite_reduce_est_fix} is parallel to Lemma~\ref{lem:fdr_ite_reduce} 
in the proof of Theorem~\ref{thm:calib_ite} (c.f.~Appendix~\ref{app:thm_calib_ite}). 
The only difference between Lemma~\ref{lem:fdr_ite_reduce}
and Lemma~\ref{lem:fdr_ite_reduce_est_fix} is that 
we replace $w(\cdot)$ with $\hat{w}(\cdot)$. 
Since the proof of Lemma~\ref{lem:fdr_ite_reduce}
only uses the randomness in $\{\xi_j\}$ 
and $\xi$, which are independent of the observations, 
it still goes through for estimated weights. We thus omit the proof of 
Lemma~\ref{lem:fdr_ite_reduce_est_fix} 
for brevity. 
\begin{lemma}
\label{lem:fdr_ite_reduce_est_fix}
Let $\cR = \cR_{\hete}$ or $\cR = \cR_{\homo}$ or $\cR = \cR_{\dtm}$ in Algorithm~\ref{alg:bh} with estimated weights. 
Then 
\# \label{eq:cc_eq1_est_fix} 
\EE\Bigg[  \frac{\sum_{j=1}^m \ind{\{j \in \cR, Y_{n+j} \leq c_{n+j}\}} }{1\vee |\cR|}   \Bigg]  
\leq \sum_{j=1}^m  \EE\Bigg[ \frac{  \ind {\{  \hat p_j \leq s_j,Y_{n+j} \leq c_{n+j}\}} }{|\hat\cR_{j\to 0} |}  \Bigg],
\#
where $c_{n+j} = q|\hat\cR_{j\to 0}|/m$, and $\hat\cR_{j\to 0}$ is the 
rejection set in Line 8 of Algorithm~\ref{alg:bh} computed 
with $\hat{p}_j$ and $\hat{p}_\ell^{(j)}$. 
\end{lemma}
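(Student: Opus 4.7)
The plan is to recognize that~\eqref{eq:cc_eq1_est_fix} is a structural, algorithmic identity about Algorithm~\ref{alg:bh}: it holds purely in virtue of how the rejection sets are constructed from the input p-values and pruning randomness, and makes no use of how the p-values themselves arose distributionally. Since $\hat{w}(\cdot)$ is trained on data independent of $\{(X_i,Y_i)\}_{i=1}^n\cup\{(X_{n+j},c_{n+j})\}_{j=1}^m$, I would first condition on $\hat{w}(\cdot)$ and treat it thereafter as a fixed weight function. Under this conditioning, the estimated-weight p-values $\{\hat{p}_j,\hat{p}_\ell^{(j)}\}$ play exactly the role that $\{p_j,p_\ell^{(j)}\}$ played in Lemma~\ref{lem:fdr_ite_reduce}, and the derived quantities $\hat\cR_{j\to 0}$, $\cR^{(1)}$, $s_j=q|\hat\cR_{j\to 0}|/m$, and the final $\cR\in\{\cR_\hete,\cR_\homo,\cR_\dtm\}$ are constructed from them by the same deterministic rules. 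The body of the proof will then be a case-by-case repetition of the three pruning arguments from Lemma~\ref{lem:fdr_ite_reduce}, with $w\mapsto\hat{w}$ throughout.

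For $\cR_\hete$, I would decompose $\fdr$ over $j\in\cH_0$ and $k=|\cR|$, introduce the ``sending $\xi_j\to 0$'' rejection set $\cR_{\xi_j\to 0}$, note that on $\{j\in\cR\}$ replacing $\xi_j$ by $0$ does not alter the second-step rejection set, and exploit the independence of $\xi_j$ from $(\text{data},\hat{p}_j,\hat\cR_{j\to 0},\{\xi_\ell\}_{\ell\neq j})$ (still valid since $\hat{w}$ is fit independently and $\{\xi_\ell\}$ are external randomizers) to integrate out $\xi_j$ and produce the factor $k/|\hat\cR_{j\to 0}|$, canceling the $1/k$. For $\cR_\homo$, I would condition on all the data so that only $\xi\sim\mathrm{Unif}([0,1])$ remains random, decompose via $\ind\{|\cR|\le k\}-\ind\{|\cR|\le k-1\}$, and apply Lemma~\ref{lem:prds}---a purely deterministic PRDS statement about $\{a_j\xi\}$ that is indifferent to whether the nonnegative coefficients $a_j=|\hat\cR_{j\to 0}|$ came from true or estimated weights---to telescope the resulting sum to $1/|\hat\cR_{j\to 0}|$. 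For $\cR_\dtm$, the inequality is immediate from the definition of $r_\dtm^*$, which forces $|\hat\cR_{j\to 0}|\le|\cR|$ whenever $j\in\cR$, so $\ind\{j\in\cR,Y_{n+j}\le c_{n+j}\}/|\cR|$ is dominated pointwise by $\ind\{\hat p_j\le s_j,Y_{n+j}\le c_{n+j}\}/|\hat\cR_{j\to 0}|$.

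The main---and genuinely mild---obstacle is bookkeeping rather than mathematics: I must verify that every appearance of $w$ in the proof of Lemma~\ref{lem:fdr_ite_reduce} can be replaced by $\hat{w}$ without breaking any step. A line-by-line audit shows that the original proof uses only (i) the algorithmic definitions of $\hat\cR_{j\to 0}$ and the three pruning rules, which are manipulated symbolically; (ii) the independence of the external randomizers $\{\xi_j\}$ and $\xi$ from the data, which continues to hold and now also holds jointly with $\hat{w}$; and (iii) Lemma~\ref{lem:prds} applied to fixed nonnegative coefficients. None of these steps appeals to the Radon--Nikodym identity~\eqref{eq:cov_shift} or to any relationship between $\hat{w}$ and the true $w$, so the proof transfers verbatim, which is precisely why the authors indicate it can be omitted.
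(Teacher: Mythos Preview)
Your proposal is correct and takes essentially the same approach as the paper: the authors explicitly state that Lemma~\ref{lem:fdr_ite_reduce_est_fix} differs from Lemma~\ref{lem:fdr_ite_reduce} only in replacing $w(\cdot)$ by $\hat{w}(\cdot)$, and since the proof of Lemma~\ref{lem:fdr_ite_reduce} relies solely on the randomness in $\{\xi_j\}$ and $\xi$ (independent of the data) together with the algorithmic structure of the three pruning rules, it carries over verbatim; they omit the proof for this reason. Your case-by-case audit of the heterogeneous, homogeneous, and deterministic pruning arguments is exactly the content the paper declines to spell out.
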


To bound the right-handed side of~\eqref{eq:cc_eq1_est_fix}, 
we note that the arguments up to~\eqref{eq:orc_fdr} 
in the proof of Theorem~\ref{thm:calib_ite} do not 
rely on $w(\cdot)$ being the true covariate shift function. 
Thus, exactly the same arguments imply
\#\label{eq:est_bd_1}
\fdr\leq 
\sum_{j=1}^m \EE\Bigg[ \frac{  \ind{\{   \tilde{p}_j \leq \frac{q |\tilde\cR_{j\to 0} |}{m}  \}} }{|\tilde\cR_{j\to 0}|}   \Bigg]. 
\#

Define the event 
$\cE_{z,j} = \big\{ [Z_1,\dots,Z_n,Z_{n+j}]=[z_1,\dots,z_n,z_{n+j}]\big\}$, 
where $[Z_1,\dots,Z_n,Z_{n+j}]$ is an unordered set of 
all calibration data 
$Z_i = (X_i,Y_i)$, $i=1,\dots,n$ 
and the unobserved test sample $n+j$, 
and $z:=[z_1,\dots,z_n,z_{n+j}]$ is the unordered set of their currently realized values. 
Since $\hat{w}(\cdot)$ is fixed and 
$\tilde\cR_{j\to 0}$ only depends on 
$\{\tilde{p}_1^{(j)},\dots,\tilde{p}_{j-1}^{(j)},0,\tilde{p}_{j+1}^{(j)},\dots,\tilde{p}_m^{(j)}\}$, we note that 
$\tilde\cR_{j\to 0}$ is invariant 
up to permutation of $\{(X_i,Y_i)\}_{i=1}^n\cup\{(X_{n+j},Y_{n+j})\}$. 
That is, $\tilde\cR_{j\to 0}$ is fully determined by $\cE_{z,j}$. 
Then for each $j$, we have 
\#\label{eq:est_bd_2}
\EE\Bigg[ \frac{  \ind{\{   \tilde{p}_j \leq \frac{q |\tilde\cR_{j\to 0} |}{m}  \}} }{|\tilde\cR_{j\to 0}|}   \Bigg] 
&= \sum_{k=1}^m \EE\Bigg[ \ind\{ |\tilde\cR_{j\to 0}|=k \}\frac{  \ind{\{   \tilde{p}_j \leq \frac{q k}{m}  \}} }{k}   \Bigg]\notag \\ 
&= \sum_{k=1}^m \EE\Bigg[ \frac{\ind\{ |\tilde\cR_{j\to 0}|=k \}}{k} \cdot \PP\Big(   \tilde{p}_j \leq \frac{q k}{m}   \Biggiven \cE_{z,j}\Big) \Bigg]
\#

In what follows, we are to bound the FDR by 
studying the difference between $\tilde{p}_j$ 
with estimated weights 
versus $\bar{p}_j$, 
i.e., what we would have should $w(\cdot)$ be available.  
For simplicity, we write 
\$
\hat{W}_- &= \sum_{i=1}^n  \hat{w}(X_i)\ind{\{V_i <  {V}_{n+j} \}}+  \hat{w}(X_{n+j}),  \quad 
\hat{W}_+  =  \sum_{i=1}^n  \hat{w}(X_i)\ind{\{V_i > {V}_{n+j} \}},\\ 
W_- &= \sum_{i=1}^n  {w}(X_i)\ind{\{V_i <  {V}_{n+j} \}}+  {w}(X_{n+j}), \quad 
W_+   =\sum_{i=1}^n  {w}(X_i)\ind{\{V_i > {V}_{n+j} \}}.
\$
Then $\hat{W}_-+\hat{W}_+ = \sum_{i=1}^n \hat{w}(X_i) + \hat{w}(X_{n+j})$ and 
$W_-+W_+ = \sum_{i=1}^n  {w}(X_i) +  {w}(X_{n+j})$. 
Recall that $\hat\gamma := \sup_{x}\{ \frac{\hat{w}(x)}{w(x)} \vee \frac{w(x)}{\hat{w}(x)}  \}$ is independent of all the data hence can be viewed as fixed. 
Since $\hat\gamma^{-1}\leq \hat{w}(x)/w(x)\leq\hat \gamma$, we know 
$W_-\leq \hat\gamma \hat{W}_-$ and $W_+\geq \hat\gamma^{-1}\hat{W}_+$. 
Also, recall that $\bar{p}_j = \frac{\sum_{i=1}^n w(X_i)\ind {\{V_i < {V}_{n+j} \}}+   w(X_{n+j})  }{\sum_{i=1}^n w(X_i) + w(X_{n+j})}$ is the 
counterpart with the true weight function.
Thus, on the event 
$
\big\{ \tilde p_j = \frac{\hat{W}_-}{\hat{W}_- + \hat{W}_+ } \leq \frac{qk}{m} \big\}
$,
it holds that 
\$
\bar{p}_j = \frac{W_-}{W_- + W_+} \leq \frac{\hat\gamma^2 }{\hat\gamma^2  +  \hat{W}_+ /\hat{W}_-} \leq \frac{\hat\gamma^2 }{\hat\gamma^2  + \frac{m}{qk}-1 }.
\$
We have shown in the proof 
of Theorem~\ref{thm:calib_ite} (c.f.~Appendix~\ref{app:thm_calib_ite})
that $\PP(\bar p_j \leq t\given \cE_{z,j})\leq t$ for 
all $t\in \RR$ that is adapted to $\cE_{z,j}$. Thus
\$
&\sum_{k=1}^m \EE\Bigg[ \frac{\ind\{ |\tilde\cR_{j\to 0}|=k \}}{k} \cdot \PP\Big(   \tilde{p}_j \leq \frac{q k}{m}   \Biggiven \cE_{z,j}\Big) \Bigg] \\
&\leq \sum_{k=1}^m \EE\Bigg[ \frac{\ind\{ |\tilde\cR_{j\to 0}|=k \}}{k} \cdot \PP\Big(   \bar{p}_j \leq \frac{\hat\gamma^2 }{\hat\gamma^2  + \frac{m}{qk}-1 }  \Biggiven \cE_{z,j}\Big) \Bigg] \\ 
&\leq \sum_{k=1}^m \EE\Bigg[ \frac{\ind\{ |\tilde\cR_{j\to 0}|=k \}}{k} \cdot   \frac{\hat\gamma^2 }{\hat\gamma^2  + \frac{m}{qk}-1 }   \Bigg] 
\leq \max_{k=1,\dots,m} \bigg\{ \frac{\hat\gamma^2 /k }{\hat\gamma^2  + \frac{m}{qk}-1 }   \bigg\} = \frac{q}{m} \cdot \frac{\hat\gamma^2}{1+ q(\hat\gamma^2-1)/m}.
\$
Finally, combining~\eqref{eq:est_bd_1},~\eqref{eq:est_bd_2} 
and the above bound, we have  
\$
\fdr\leq q\cdot \EE\bigg[ \frac{\hat\gamma^2}{1+ q(\hat\gamma^2-1)/m} \bigg],
\$
which completes the proof 
of Theorem~\ref{thm:calib_ite}.   
\end{proof}

\subsection{Proof of Theorem~\ref{thm:fdr_cond}}
\label{app:thm_outlier}

\begin{proof}[Proof of Theorem~\ref{thm:fdr_cond}]
Similar to Lemma~\ref{lem:fdr_ite_reduce}, we use 
the following lemma as an intermediate step. Its proof 
is in Appendix~\ref{app:lemma_outlier_reduce}. 

\begin{lemma}
\label{lem:fdr_out_reduce}
Let $\cR = \cR_{\hete}$ or $\cR = \cR_{\homo}$ or $\cR = \cR_{\dtm}$ in Algorithm~\ref{alg:bh}. 
Then 
\# \label{eq:out_1}
 \EE\bigg[  \frac{| \cR \cap \cH_0 | }{1\vee |\cR|} \biggiven \cH_0 \bigg]  
\leq \sum_{j\in \cH_0}  \EE\bigg[ \frac{  \ind {\{  p_j \leq s_j \}} }{|\hat\cR_{j\to 0} |} \bigggiven \cH_0  \bigg],
\#
where $s_j = q|\hat\cR_{j\to 0}|/m$, and $\hat\cR_{j\to 0}$ is the 
first-step rejection set we obtain in Line 8 of Algorithm~\ref{alg:bh_cond}. 
\end{lemma}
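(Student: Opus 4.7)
The plan is to mirror the proof of Lemma~\ref{lem:fdr_ite_reduce}, working throughout under the conditional law $\PP(\cdot\given \cH_0)$. The key simplification relative to the previous lemma is that $\cH_0$ is now held fixed by conditioning, so the indicator $\ind\{j\in\cH_0\}$ behaves as a deterministic constant; we only need to peel off the extra randomness introduced by the pruning step in each of the three variants $\cR_\hete,\cR_\homo,\cR_\dtm$. As before, I write
\$
\EE\bigg[ \frac{|\cR \cap \cH_0|}{1\vee |\cR|} \biggiven \cH_0 \bigg] = \sum_{j\in\cH_0} \sum_{k=1}^m \frac{1}{k}\, \EE\bigg[ \ind\{|\cR|=k,\, j\in\cR\} \biggiven \cH_0 \bigg],
\$
and treat each $j\in\cH_0$ separately.

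For the heterogeneous case $\cR=\cR_\hete$, I would use $j\in\cR \Leftrightarrow \{p_j\le s_j, \xi_j|\hat\cR_{j\to 0}|\le r_\hete^*\}$ and observe that $r_\hete^* = |\cR_\hete|$. I replicate verbatim the argument around $\cR_{\xi_j\to 0}$: conditionally on all data and on the other $\xi_\ell$, sending $\xi_j$ down to $0$ does not change $|\cR_\hete|$ once $j$ is already selected, so $\ind\{|\cR_\hete|=k\}=\ind\{|\cR_{\xi_j\to 0}|=k\}$ on the event $\{j\in\cR_\hete\}$. Conditioning further on everything except $\xi_j$ and integrating $\xi_j \sim \text{Unif}[0,1]$ gives $\EE[\ind\{\xi_j\le k/|\hat\cR_{j\to 0}|\}]= k/|\hat\cR_{j\to 0}|$, exactly as before; the outer conditioning on $\cH_0$ is carried along untouched because $\xi_j$ is independent of $(\cH_0, \text{data})$.

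For the homogeneous case $\cR=\cR_\homo$, I would reuse the conditional PRDS argument based on Lemma~\ref{lem:prds}. After conditioning on $\cH_0$ and on the data $\cD$, the only remaining randomness is the single $\xi\sim\text{Unif}[0,1]$, which is independent of both. The telescoping decomposition
\$
\fdr(\cD,j) = \sum_{k=1}^{m_1} \tfrac{1}{k}\EE_\cD\!\big[\ind\{\xi|\hat\cR_{j\to 0}|\le k\}(\ind\{|\cR|\le k\}-\ind\{|\cR|\le k-1\})\big]
\$
with $m_1 = |\cR^{(1)}|$ goes through verbatim, and the PRDS property of $\{a_\ell\xi\}$ in $a_j\xi$ (Lemma~\ref{lem:prds}) lets me bound each conditional probability $\PP_\cD(|\cR|\le k\given \xi|\hat\cR_{j\to 0}|\le k)$ by $\PP_\cD(|\cR|\le k\given \xi|\hat\cR_{j\to 0}|\le k+1)$. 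The telescoping sum then collapses to $1/|\hat\cR_{j\to 0}|$; marginalizing over $\cD$ while keeping $\cH_0$ fixed gives the bound.

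For the deterministic case $\cR=\cR_\dtm$, I would repeat the short combinatorial argument: reordering $\cR^{(1)}$ by $|\hat\cR_{[k]\to 0}|$, by definition $r^*_\dtm = |\cR_\dtm|$ is the largest index $k$ with $|\hat\cR_{[k]\to 0}|\le k$, so every $j\in\cR_\dtm$ satisfies $|\hat\cR_{j\to 0}|\le |\cR_\dtm|$ and hence $\ind\{j\in\cR_\dtm\}/|\cR_\dtm| \le \ind\{p_j\le s_j\}/|\hat\cR_{j\to 0}|$. I do not expect any real obstacle here — conditioning on $\cH_0$ only fixes which indices are null; it does not interact with the randomness used in the pruning, nor with the data-dependent quantities $p_j, \hat\cR_{j\to 0}, s_j$. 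The only thing to be careful about is that $\{\xi_j\}$ and $\xi$ are independent of $\cH_0$, which follows from their construction in Algorithm~\ref{alg:bh_cond}, so every application of Fubini above remains valid under the conditional measure.
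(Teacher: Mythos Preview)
Your proposal is correct and follows essentially the same approach as the paper's own proof: both treat the three pruning options separately by replaying the arguments of Lemma~\ref{lem:fdr_ite_reduce} under the conditional measure $\PP(\cdot\given\cH_0)$, using $\cR_{\xi_j\to 0}$ and independence of $\xi_j$ for the heterogeneous case, the conditional PRDS property from Lemma~\ref{lem:prds} and the telescoping sum for the homogeneous case, and the direct bound $|\hat\cR_{j\to 0}|\le |\cR_\dtm|$ for the deterministic case. One tiny imprecision: in the heterogeneous case you wrote $\EE[\ind\{\xi_j\le k/|\hat\cR_{j\to 0}|\}]= k/|\hat\cR_{j\to 0}|$, but strictly this equals $\min(1,k/|\hat\cR_{j\to 0}|)$; the paper handles this with a $\le$ rather than $=$, which is all that is needed for the upper bound.
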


Throughout this proof, we always condition on $\cH_0$ 
and omit such dependence in our notations for simplicity. 
Note that $\hat\cR_{j\to 0}$ only depends on  $\{p_\ell^{(j)}\}_{\ell \neq j}$, 
which is measurable with respect to 
the unordered set $Z=[Z_1,\dots,Z_n,Z_{n+j}]$ and $\{X_{n+\ell}\}_{\ell\neq j}$, 
where $Z_i = (X_i,Y_i)$ for $i=1,\dots,n+m$. 
We denote 
\$\cE_{z,j}:=\big\{[Z_1,\dots,Z_n,Z_{n+j}]=z\big\},
\quad z=[z_1,\dots,z_n,z_{n+j}]
\$ as the event that 
the unordered set $[Z_1,\dots,Z_n,Z_{n+j}]$
takes on (the unordered set of) 
their realized values $z=[z_1,\dots,z_n,z_{n+j}]$. 
Thus, conditional on the event $\cE_{z,j}$, the 
only randomness in $p_j$ is which values among $z_1,\dots,z_{n},z_{n+j}$ 
that $Z_1,\dots,Z_n,Z_{n+j}$ take; 
such information, on the other hand, does not 
affect $\{p_\ell^{(j)}\}_{\ell\neq j}$.  
As a result, we have the conditional independence 
\#\label{eq:cond_indep_out}
\ind{\{p_j \leq c_{n+j}\}} \indep  |\hat\cR_{j\to 0}  | ~\biggiven ~\cE_{z,j},\quad \forall j\in \cH_0. 
\#
Therefore, by the tower property, 
\$
\EE\bigg[ \frac{  \ind{\{p_j \leq s_j\}} }{|\hat\cR_{j\to 0} |} \Biggiven \cE_{z,j}  \bigg]
= \EE\Bigg[ \frac{  \PP\big(p_j \leq s_j \biggiven \cE_{z,j},|\hat\cR_{j\to 0} |\big)  }{|\hat\cR_{j\to 0} |} \Biggiven \cE_{z,j}  \Bigg] 
= \EE\Bigg[ \frac{  F_j(s_j)  }{|\hat\cR_{j\to 0} |} \Biggiven \cE_{z,j}  \Bigg] ,
\$
where we denote $F_j(t)=\PP(p_j\leq t\given \cE_{z,j})$ for any fixed $t\in [0,1]$.
Finally, 
by the covariate shift condition in Assumption~\ref{assump:label_conditional}, 
for any $j\in \cH_0$ 
and any $i=1,\dots,n,n+j$, 
we have the  weighted exchangeability similar to~\eqref{eq:weighted_exch}, 
which implies 
\$
\PP(Z_{n+j} = z_i\given \cE_{z,j}) = \frac{w(z_i)}{\sum_{i=1}^n w(x_{i}) + w(x_{n+j})}.
\$
Let $[1],[2],\dots,[n+1] $ be 
a permutation of $ 1,\dots,n, n+j $ such that 
$V_{[1]}\leq V_{[2]}\leq \cdots V_{[n+1]}$. By the definition of $p_j$, 
we know that 
conditional on $\cE_{z,j}$, 
\$
F_j(t) = \sum_{k=1}^n \frac{w(z_{[k]}) }{\sum_{i=1}^n w(x_{i}) + w(x_{n+j})} \ind\bigg\{\sum_{j=1}^k \frac{w(z_{[j]}) }{\sum_{i=1}^n w(x_{i}) + w(x_{n+j})} \leq t \bigg\} \leq t 
\$
for any $t\in [0,1]$. 
Invoking Lemma~\ref{lem:fdr_out_reduce}, we thus have 
\$
\fdr \leq \sum_{j\in \cH_0} \EE\Bigg[ \frac{   s_j   }{|\hat\cR_{j\to 0} |} \Biggiven \cE_{z,j}  \Bigg] \leq q\cdot \frac{|\cH_0|}{m},
\$
which concludes the proof of Theorem~\ref{thm:fdr_cond}.
\end{proof}

\subsection{Proof of Lemma~\ref{lem:fdr_out_reduce}}
\label{app:lemma_outlier_reduce}

\begin{proof}[Proof of Lemma~\ref{lem:fdr_out_reduce}]
We prove the three pruning methods 
using similar arguments as the proof of Lemma~\ref{lem:fdr_ite_reduce}. 

\paragraph{Case 1: $\cR=\cR_{\hete}$.} By definition, 
we know that  $r^*_\hete =  |\cR_{\hete}| = |\cR|$. 
Then  
\$
\EE\bigg[  \frac{| \cR \cap \cH_0 | }{1\vee |\cR|} \biggiven \cH_0 \bigg]  
&= \EE\Bigg[ \frac{\sum_{j\in \cH_0} \ind {\{p_j \leq s_j\}} \ind{\{\xi_j \leq \frac{|\cR|}{ |\hat\cR_{j\to 0} | }\}} }{1\vee |\cR|} \Bigggiven \cH_0 \Bigg].
\$
Note that once $j\in \cR$, sending $\xi_j\to 0$ while keeping $\{\xi_\ell\}_{\ell\neq j}$
unchanged does not change the rejection set. 
That is, $\cR = \cR_{\xi_j\to 0}$ if $j\in \cR$, 
where 
\$
\cR_{\xi_j\to 0} &= \{j\}\cup \Big\{ \ell \colon  p_\ell \leq s_\ell, ~ \xi_\ell |\hat\cR_{\ell\to 0} | \leq  r_{j,\hete}^*  \Big\}, ~~
  r_{j,\hete}^* := \max\Big\{ r\colon 1+ \sum_{\ell\neq j} \ind {\{  p_\ell\leq s_\ell, \, \xi_\ell   |\hat\cR_{\ell\to 0}  | \leq r \}}   \geq r \Big\}.
\$
and we recall that $\hat\cR_{\ell\to 0}$ is the first-step rejection set. 
We then have 
\$
\fdr &= \sum_{j\in \cH_0} \sum_{k=1}^m \EE\Bigg[ \frac{\ind {\{|\cR|=k\}}}{k} \ind {\{p_j \leq s_j\}} \ind \big\{\xi_j \leq {\textstyle \frac{k}{ |\hat\cR_{j\to 0} | }}\big\}    \Bigg] \\ 
&= \sum_{j\in \cH_0} \sum_{k=1}^m \EE\Bigg[ \frac{\ind {\{|\cR_{\xi_j\to 0}|=k\}}}{k}  \ind{\{p_j \leq s_j\}} \ind\big\{\xi_j \leq {\textstyle \frac{k}{ |\hat\cR_{j\to 0} | }}\big\}\Bigg] \\ 
&\leq \sum_{j\in \cH_0} \sum_{k=1}^m \EE\Bigg[ \frac{\ind {\{|\cR_{\xi_j\to 0}|=k\}}}{k}  \ind{\{p_j \leq s_j\}}\cdot  \frac{k}{ |\hat\cR_{j\to 0} |  } \Bigg] \\ 
&= \sum_{j\in \cH_0} \sum_{k=1}^m \EE\Bigg[ \frac{\ind{\{|\cR_{\xi_j\to 0}|=k\}}}{|\hat\cR_{j\to 0} |}  \ind{\{p_j \leq s_j\}}   \Bigg]
= \sum_{j\in \cH_0}  \EE\Bigg[ \frac{  \ind{\{p_j \leq s_j\}} }{|\hat\cR_{j\to 0} |}   \Bigg],
\$
where the third line follows from the tower property, and 
that $\xi_j$ are independent of $\cR_{\xi_j\to 0}$, $p_j$, $\hat{\cR}_{j\to 0}$. 

\paragraph{Case 2: $\cR=\cR_\homo$.} 
We will again use the conditional PRDS property of $\{\xi |\hat\cR_{j\to 0}|\}_{j=1}^m$ which we proved in Lemma~\ref{lem:prds}. 
We let $\EE_\cD$ be the conditional expectation given all the 
calibration and test data (and $\cH_0$), 
so that only the randomness in $\xi$ remains, under which 
one still has $\xi\sim \textrm{Unif}[0,1]$. 
The tower property implies 
$\fdr = \EE[\fdr(\cD)\given \cH_0]$, where we define the conditional FDR as
\$
\fdr(\cD) := \EE_{\cD}\Bigg[  \frac{\sum_{j\in \cH_0} \ind{\{j \in \cR \}} }{1\vee |\cR|} \Biggiven \cH_0 \Bigg].
\$
Note that $\ind\{  p_j \leq s_j\}$ are deterministic conditional on the data. 
Recall that $\cR^{(1)} = \{j\colon   p_j \leq s_j\}$  
and $\cH_0 $ are
both deterministic given the data. 
Using the same arguments as in the proof of Lemma~\ref{lem:fdr_ite_reduce} 
up to~\eqref{eq:homo_bd},  
\$
\fdr(\cD,j)&:=\sum_{k=1}^{m} \frac{1}{k}\EE_{\cD}\big[  \ind{\{ \xi |\hat\cR_{j\to 0}| \leq k \}} \left(  \ind\{ |\cR| \leq k\} -  \ind\{ |\cR| \leq k-1\} \right) \big] \\ 
&\leq \frac{1}{m_1} + \sum_{k=1}^{m_1-1} \bigg\{\frac{\PP_\cD(\xi |\hat\cR_{j\to 0}| \leq k)}{k}
\PP_{\cD}\Big( |\cR| \leq k \Biggiven \xi |\hat\cR_{j\to 0}| \leq k \Big)  \\
&\qquad \qquad \qquad \qquad - 
  \frac{\PP_\cD(\xi |\hat\cR_{j\to 0}| \leq k+1)}{k+1}
\PP_{\cD}\Big( |\cR| \leq k \Biggiven \xi |\hat\cR_{j\to 0}| \leq k+1 \Big)\bigg\}
\$
By Lemma~\ref{lem:prds}, we know that 
conditioning on all the data, 
$\{\xi|\hat\cR_{\ell\to 0}|\}_{\ell\neq j}$ is 
PRDS in $\xi |\hat\cR_{j\to 0}|$.  
Then by the construction of $\cR=\cR_{\homo}$, we see that 
\$
\PP_{\cD}\Big( |\cR| \leq k \Biggiven \xi |\hat\cR_{j\to 0}| \leq k \Big) 
\leq \PP_{\cD}\Big( |\cR| \leq k \Biggiven \xi |\hat\cR_{j\to 0}| \leq k+1 \Big) ,
\$
hence since $\xi$ is independent of everything else, 
\$
\fdr(\cD,j) &\leq \frac{1}{m_1} + \sum_{k=1}^{m_1-1} \bigg\{\frac{\PP_\cD(\xi |\hat\cR_{j\to 0}| \leq k)}{k} -  \frac{\PP_\cD(\xi |\hat\cR_{j\to 0}| \leq k+1)}{k+1}\bigg\}
\PP_{\cD}\Big( |\cR| \leq k \Biggiven \xi |\hat\cR_{j\to 0}| \leq k \Big) \\ 
&= \frac{1}{m_1} + \sum_{k=1}^{m_1-1} \bigg\{\frac{ \min\{1, k/|\hat\cR_{j\to 0}|\} }{k} -  \frac{\min\{1, (k+1)/|\hat\cR_{j\to 0}|\} }{k+1}\bigg\}
\PP_{\cD}\Big( |\cR| \leq k \Biggiven \xi |\hat\cR_{j\to 0}| \leq k \Big) . 
\$
This is exactly the same as~\eqref{eq:prds_bd}. Hence following the same 
arguments as in the proof of Lemma~\ref{lem:fdr_ite_reduce}, we 
conclude the proof of the homogeneous case. 

\paragraph{Case 3: $\cR=\cR_{\dtm}$.} 
In this case, similar to the proof of Lemma~\ref{lem:fdr_ite_reduce}, 
we know that  $|\hat\cR_{j\to 0}|\leq |\cR|$ deterministically for all $j\in \cR$. 
Thus, 
\$
\EE\Bigg[  \frac{\sum_{j\in\cH_0} \ind{\{j \in \cR \}} }{1\vee |\cR|} \Bigg]  
&\leq \sum_{j\in\cH_0}  \EE\Bigg[  \frac{ \ind {\{ j\in \cR \}}   }{1\vee |\hat\cR_{j\to 0}|} \Bigg]
\leq \sum_{j=1}^m  \EE\Bigg[  \frac{ \ind {\{ p_j\leq s_j \}}   }{1\vee |\hat\cR_{j\to 0}|} \Bigg],
\$
where the last inequality uses the fact that 
$j\in \cR$ implies $p_j \leq s_j$ for any $j$. 
This concludes the proof of the third case, and 
therefore the proof of Lemma~\ref{lem:fdr_out_reduce}. 
\end{proof}

\section{Supporting lemmas}

The following lemma establishes the uniform strong law of 
large numbers with weights. The proof is standard which we include here for completeness. 

\begin{lemma}[Uniform strong law of large numbers]
\label{lem:ulln}
Let $P$ and $\tilde{P}$ be two distributions over $\cX\times\cY$, related 
by a covariate shift $\frac{\ud \tilde{P}}{\ud P}(x,y)=w(x)$ 
for some function $w\colon \cX\to \RR^+$, and $\frac{\ud \TPP}{\ud \PP}$ 
is the Radon–Nikodym derivative. 
Let $\{(X_i,Y_i)_{i=1}^n\}$ be a sequence of i.i.d.~samples 
from $P$, and $f\colon \cX\times \cY\to \RR$ 
be any fixed function.  
Then it holds almost surely that 
\$
\limsup_{n\to \infty} \sup_{t \in \RR} ~\bigg| \frac{1}{n} \sum_{i=1}^n w(X_i) \ind\{ f(X_i,Y_i) \leq t \}
- \tilde{P}\big( f(X,Y) \leq t \big)\bigg| = 0.
\$ 
\end{lemma}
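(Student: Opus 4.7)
\textbf{Proof plan for Lemma~\ref{lem:ulln}.} The plan is a weighted analogue of the classical Glivenko--Cantelli argument. Write $\hat F_n(t) = \frac{1}{n}\sum_{i=1}^n w(X_i)\ind\{f(X_i,Y_i)\le t\}$ and $F(t)=\tilde P(f(X,Y)\le t)$; also write $\hat F_n(t^-)$ and $F(t^-)$ for the left-limits (i.e.\ strict inequality). The key identity is that, because $\ud\tilde P/\ud P(x,y)=w(x)$, for any measurable $g\ge 0$ we have $\EE_P[w(X)g(X,Y)]=\EE_{\tilde P}[g(X,Y)]$. In particular $\EE_P[w(X)]=1<\infty$, so the strong law of large numbers (SLLN) applies to the i.i.d.\ summands $w(X_i)\ind\{f(X_i,Y_i)\le t\}$ (bounded above in absolute value by $w(X_i)$), giving pointwise a.s.\ convergence $\hat F_n(t)\to F(t)$, and likewise $\hat F_n(t^-)\to F(t^-)$, for each fixed $t$.

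The second step promotes this to uniform convergence by a monotonicity/discretization argument. Fix $\epsilon>0$. Since $F$ is a cumulative distribution function on $\RR$ (nondecreasing, right-continuous, with limits $0$ and $1$ at $\pm\infty$), one can choose a finite grid $-\infty = t_0<t_1<\cdots<t_{K-1}<t_K=+\infty$ such that $F(t_k^-)-F(t_{k-1})\le \epsilon$ for every $k=1,\dots,K$, with the conventions $F(t_0)=0$ and $F(t_K^-)=1$. Indeed, starting from $t_0=-\infty$, define $t_k$ inductively as $t_k=\sup\{t:F(t)\le F(t_{k-1})+\epsilon\}$; right-continuity of $F$ ensures $F(t_k)\ge F(t_{k-1})+\epsilon$ while $F(t_k^-)\le F(t_{k-1})+\epsilon$, so the procedure terminates after at most $\lceil 1/\epsilon\rceil+1$ steps.

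Using the monotonicity of both $t\mapsto \hat F_n(t)$ and $t\mapsto F(t)$, for any $t\in[t_{k-1},t_k)$,
\#
\hat F_n(t_{k-1}) - F(t_k^-) \;\le\; \hat F_n(t)-F(t)\;\le\;\hat F_n(t_k^-) - F(t_{k-1}),
\#
so that
\#
\sup_{t\in\RR}\big|\hat F_n(t)-F(t)\big| \;\le\; \max_{0\le k\le K}\Big( \big|\hat F_n(t_k)-F(t_k)\big| + \big|\hat F_n(t_k^-) - F(t_k^-)\big| \Big) + \epsilon.
\#
Applying the pointwise a.s.\ convergence from step one to each of the finitely many values $t_1,\dots,t_{K-1}$ (the endpoints $t_0=-\infty$ and $t_K=+\infty$ give $\hat F_n(t_0)=F(t_0)=0$ and $\hat F_n(t_K^-)=\frac{1}{n}\sum_i w(X_i)\to 1$ a.s.\ again by SLLN), the maximum on the right vanishes a.s., so $\limsup_{n\to\infty}\sup_t|\hat F_n(t)-F(t)|\le \epsilon$ almost surely. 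Taking a countable sequence $\epsilon\downarrow 0$ and intersecting the almost-sure events concludes the proof.

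The only subtle point is handling potential atoms of $F$ so that the bracketing grid works; this is precisely why the construction uses both $F(t_k)$ and the left-limit $F(t_k^-)$, and why the sandwich inequality above is written in terms of $F(t_k^-)$ on the upper side. Beyond this bookkeeping the argument is a direct transcription of the classical Glivenko--Cantelli proof, with the SLLN input supplied by $\EE_P[w(X)]<\infty$ coming from the covariate shift relation.
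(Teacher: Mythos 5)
Your proof takes essentially the same route as the paper's: establish pointwise almost-sure convergence via the SLLN (using $\EE_P[w(X)\ind\{f\le t\}]=\tilde P(f\le t)$), then promote it to uniform convergence by a finite bracketing grid and monotonicity, exactly as in the classical Glivenko--Cantelli argument. The one place you are actually more careful than the paper: the paper asserts one can choose $t_1<\cdots<t_m$ with $\tilde F(t_{k+1})-\tilde F(t_k)\le\epsilon$, which is not possible if $\tilde F$ has an atom of mass exceeding $\epsilon$; your construction $t_k=\sup\{t:F(t)\le F(t_{k-1})+\epsilon\}$ together with separately tracking the left limits $\hat F_n(t_k^-)$ and $F(t_k^-)$ is the standard fix and closes that (minor) gap, at the cost of also needing the pointwise SLLN at the left limits, which you correctly supply. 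Both proofs are otherwise the same length and difficulty; neither is more general in scope.
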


\begin{proof}[Proof of Lemma~\ref{lem:ulln}]
For simplicity, we denote 
\$
\tilde{F}_n(t) = \frac{1}{n} \sum_{i=1}^n w(X_i) \ind\{ f(X_i,Y_i) \leq t \}, 
\quad \tilde{F}(t)=\tilde{P}(f(X,Y)\leq t),\quad t\in \RR.
\$ 
Then by the law of large numbers, for any fixed $t\in \RR$, 
$\tilde{F}_n(t)\to\tilde{F}(t)$ almost surely 
as $n\to \infty$. 

Let $\epsilon>0$ be any fixed constant. 
Then we can pick a finite sequence of real numbers 
$t_1<t_2<\cdots <t_m$ such that 
$\tilde{F}(t_1) \leq \epsilon$,
$\tilde{F}( t_m) \geq 1-\epsilon$, 
and $\tilde{F}( t_{k+1})-\tilde{F}(t_k) \leq \epsilon$. 
For any $t\in [t_1,t_m)$, 
there exists some $1\leq k\leq m$ such that 
$t\in[t_k,t_{k+1})$. Then we know 
$ 
\tilde{F}_n(t_k) \leq \tilde{F}_n(t) \leq \tilde{F}_n(t_{k+1})$ and 
$\tilde{F} (t_k) \leq \tilde{F} (t) \leq \tilde{F} (t_{k+1})
$. This implies 
\$
\tilde{F}_n(t) - \tilde{F} (t) &\leq \tilde{F}_n(t_{k+1}) - \tilde{F}(t_k) 
\leq \big| \tilde{F}_n(t_{k+1}) - \tilde{F}(t_{k+1}) \big| + \tilde{F}(t_{k+1}) - \tilde{F}(t_{k})
\leq \big| \tilde{F}_n(t_{k+1}) - \tilde{F}(t_{k+1}) \big| + \epsilon,
\$
where we use the triangular inequality in the second inequality. 
Similarly
\$
\tilde{F}_n(t) - \tilde{F} (t)\geq \tilde{F}_n(t_k) - \tilde{F}(t_{k+1}) 
\geq \tilde{F}_n(t_k) - \tilde{F}(t_k)  + \tilde{F}(t_k) - \tilde{F}(t_{k+1})
\geq - \big|\tilde{F}_n(t_k) - \tilde{F}(t_k) \big| - \epsilon.
\$
They together lead to $|\tilde{F}_n(t)-\tilde{F}(t)|\leq 
\big|\tilde{F}_n(t_k) - \tilde{F}(t_k) \big| \vee \big| \tilde{F}_n(t_{k+1}) - \tilde{F}(t_{k+1}) \big| +\epsilon$ for all $t\in [t_k,t_{k+1})$. 
Also, for $t<t_1$, since $\tilde{F}_n(t)\geq 0$ and $\tilde{F}(t)\leq \epsilon$,  we have 
\$
-\epsilon \leq \tilde{F}_n(t) - \tilde{F} (t) \leq \tilde{F}_n(t_1) - \tilde{F}(t) 
\leq \tilde{F}_n(t_1)- \tilde{F}(t_1) + \epsilon. 
\$
Similarly, for $t>t_m$, since $\tilde{F}_n(t)\leq 1$ and $\tilde{F}(t)\geq 
\tilde{F}(t_m)\geq 1-\epsilon$, we have 
\$
\epsilon \geq 
\tilde{F}_n(t) - \tilde{F} (t) \geq  \tilde{F}_n(t_m) - \tilde{F} (t ) 
\geq \tilde{F}_n(t_m) - \tilde{F}(t_m) -\epsilon.
\$
Putting the above cases together yields 
\$
\sup_{t\in \RR} \big|\tilde{F}_n(t) -\tilde{F}(t)\big| 
\leq \sup_{1\leq k\leq m} \big|\tilde{F}_n(t_k) -\tilde{F}(t_k)\big| + \epsilon.
\$
By the strong law of large numbers for $t_1,\dots,t_m$ and the arbitrariness of $\epsilon>0$, 
we conclude the proof of Lemma~\ref{lem:ulln}. 
\end{proof}

\end{document}